 \newcommand{\N}{{\mathbb{N}}}
 \newcommand{\R}{{\mathbb{R}}}
 \newcommand{\C}{{\mathbb{C}}}
 \newcommand{\Z}{{\mathbb{Z}}}
\newcommand{\beq}{\begin{equation}}
\newcommand{\eeq}{\end{equation}}
\newcommand{\bdm}{\begin{displaymath}}
\newcommand{\edm}{\end{displaymath}} \newcommand{\ba}{\begin{align}}
\newcommand{\ea}{\end{align}} \newcommand{\bpf}{\begin{proof}}
\newcommand{\epf}{\end{proof}}
\DeclareMathOperator{\Tr}{Tr}
\newtheorem{theorem}{Theorem}[section]
\newtheorem{lemma}[theorem]{Lemma}
\newtheorem{corollary}[theorem]{Corollary}
\theoremstyle{definition}
\newtheorem{definition}[theorem]{Definition}
\theoremstyle{remark}
\newtheorem{remark}[theorem]{Remark}
\begin{document}

\title[Solutions of  Gross-Pitaevskii Equation]{ Solutions of  Gross-Pitaevskii Equation with Periodic Potential in Dimension Three.}
\author[Yu. Karpeshina, Seonguk  Kim,  R.Shterenberg]{Yulia Karpeshina, Seonguk Kim,  Roman Shterenberg}
%


\address{Department of Mathematics, Campbell Hall, University of Alabama at Birmingham,
1300 University Boulevard, Birmingham, AL 35294.}
\email{karpeshi@uab.edu}%

\address{Department of Mathematics, Julian Science and Math Center,
Depauw University,
Greencastle, IN 46135.}%
\email{seongukkim@depauw.edu}

\address{Department of Mathematics, Campbell Hall, University of Alabama at Birmingham,
1300 University Boulevard, Birmingham, AL 35294.}
\email{shterenb@uab.edu}%

\address{}
\email{}

\thanks{Supported in part by NSF-grants DMS- 1814664 (Y.K. and R.S) }

\date{\today}


\maketitle
\begin{abstract} Quasi-periodic solutions of  the Gross-Pitaevskii equation with a periodic potential in dimension three are studied.   It is proven that there is an extensive "non-resonant"  set ${\mathcal G}\subset \R^3$ such that for every $\vec k\in \mathcal G$ there is a  solution  asymptotically close to a plane wave 
$Ae^{i\langle{ \vec{k}, \vec{x} }\rangle}$ as $|\vec k|\to \infty $, given $A$ is sufficiently small.\end{abstract}
\section{Introduction}
Let us consider  the Gross-Pitaevskii equation in dimension three with a periodic potential $V(\vec{x})$ and quasi-periodic boundary condition:
\begin{equation}\label{main equation, 4l>n+1}
-\Delta u(\vec{x})+V(\vec{x})u(\vec{x})+\sigma |u(\vec{x})|^{2}u(\vec{x})=\lambda u(\vec{x}),~\vec{x}\in [0,2\pi]^{3},
\end{equation}
\begin{equation}\label{main condition, 4l>n+1}
\begin{cases}
~u(\vec x)|_{x_s=2\pi}=e^{2\pi it_{s}}u(\vec x)|_{x_s=0},\\
~\frac{\partial}{\partial x_{s}}u(\vec x)|_{x_s=2\pi}=e^{2\pi it_{s}}\frac{\partial}{\partial x_{s}}u(\vec x)|_{x_s=0},\\
s=1,2,3,
\end{cases}
\end{equation}
%
 where $\vec{t}=(t_{1},t_2, t_{3}) \in K=[0,1]^{3}$, $\sigma$ is a real number  and $V(\vec{x})$ is  a trigonometric polynomial with $\int_{Q} V(\vec{x})d\vec{x}=0,$ $Q=[0,2\pi]^{3}$ being the elementary cell of period $2\pi$. More precisely,
\begin{equation}\label{periodic potential, 4l>n+1}
V(\vec{x})=\sum_{q\in \Z^3, 0<|q|< R_0 }v_{q}e^{i\langle{ q, \vec{x} }\rangle},\ \ \ R_0<\infty ,
\end{equation}
$v_{q}$ being Fourier coefficients.

The equation \eqref{main equation, 4l>n+1} is a famous Gross-Pitaevskii equation for Bose-Einstein condensate, see e.g. \cite{PS08}.
%
%
In physics papers, e.g. \cite{KS02}, \cite{LO03}, \cite{YB13}, \cite{YD03}, a big variety of numerical computations for Gross-Pitaevskii equation is made. 
However, they are restricted to the one dimensional case and there is a lack of theoretical considerations even for the case $n=1$. In  paper \cite{KS17} we studied the case   $n=2$. 
%
%

The goal of this paper is to construct asymptotic formulas for $u(\vec{x})$ as $\lambda \to \infty $.
We show that there is an extensive "non-resonant" set ${\mathcal G}\subset \R^3$ such that for every $\vec k\in \mathcal G$ there is a quasi-periodic solution of (\ref{main equation, 4l>n+1}) close to a plane wave 
$Ae^{i\langle{ \vec{k}, \vec{x} }\rangle}$ with $\lambda=\lambda (\vec k, A)$ close to $|\vec k|^{}+\sigma |A|^2$ as $|\vec k|\to \infty $ (Theorem \ref{main theorem 4l>n+1}). We assume $A\in \C$ and $|A|$ is sufficiently small:
\begin{equation}|\sigma| |A|^2<\lambda ^{-\gamma  }, \ \ \gamma >1.
\label{A}
\end{equation}
The quasi-momentum $\vec t$ in \eqref{main equation, 4l>n+1}
is defined by the formula: $\vec k=\vec t+2\pi j$, $j\in \Z^3$.

We show that the non-resonant set $\mathcal G$ has an asymptotically full measure in $\R^3$:
\begin{equation}
\lim _{R\to \infty}\frac{\left| \mathcal G\cap B_R\right|_3}{|B_R|_3}=1, \label{full}
\end{equation}
where $B_R$ is a ball of radius $R$ in $\R^3$ and $|\cdot |_3$ is Lebesgue measure in $\R^3$.

Moreover,
we investigate a set $\mathcal{D}_{}(\lambda, A)$ of vectors $\vec k\in \mathcal G$, corresponding to a fixed sufficiently large $\lambda $ and a fixed $A$. The set $\mathcal{D}_{}(\lambda, A)$,
defined as a level (isoenergetic) set for $\lambda _{}(\vec
k, A)$, 
\begin{equation} 
{\mathcal D} _{}(\lambda,A)=\left\{ \vec k \in \mathcal{G} _{} :\lambda _{}(\vec k, A)=\lambda \right\},\label{isoset} 
\end{equation}
is proven to be a slightly distorted sphere with  
a finite number of holes (Theorem \ref{iso}). For any sufficiently large $\lambda $, it can be described by the formula:
\begin{equation} {\mathcal D}_{}(\lambda, A)=\{\vec k:\vec
k=\varkappa _{}(\lambda, A,\vec{\nu})\vec{\nu},
\ \vec{\nu} \in {\mathcal B}_{}(\lambda)\}, \label{D}
\end{equation}
where ${\mathcal B}_{}(\lambda )$ is a subset of the unit
sphere $S_{2}$. The set ${\mathcal B}_{}(\lambda )$ can be
interpreted as a set of possible directions of propagation for 
almost plane waves. Set ${\mathcal B}_{
}(\lambda )$ has an asymptotically full
measure on $S_{2}$ as $\lambda \to \infty $:
\begin{equation}
\left|{\mathcal B}_{}(\lambda )\right|=_{\lambda \to \infty
}\omega _{2} +O\left(\lambda^{-\delta }\right), \ \ \delta >0,\label{B11}
\end{equation}
here $|\cdot |$ is the standard surface measure on $S_{2}$, $\omega _{2} =|S_{2}|$.
The value $\varkappa _{}(\lambda ,A,\vec \nu )$ in (\ref{D}) is
the ``radius" of ${\mathcal D}_{}(\lambda,A)$ in a direction
$\vec \nu $. The function $\varkappa _{}(\lambda ,A,\vec \nu
)-(\lambda-\sigma |A|^2)^{1/2}$ describes the deviation of ${\mathcal
D}_{}(\lambda,A)$ from the perfect sphere of the radius
$(\lambda-\sigma |A|^2)^{1/2}$ in $\R^3$. It is proven that the deviation is asymptotically
small:
\begin{equation} \varkappa _{}(\lambda ,A, \vec \nu
)=_{\lambda \to \infty} \left(\lambda-\sigma |A|^2\right)^{1/2}+O\left(\lambda ^{-\gamma _1}\right),\ \ \gamma _1>0.
\label{h}
\end{equation}

To prove the results above, we consider the term $V+\sigma|u|^{2}$ in equation (\ref{main equation, 4l>n+1}) as a periodic potential and formally change the nonlinear equation to a linear equation with an unknown potential $V(\vec{x})+\sigma |u(\vec{x})|^{2}$:
\begin{equation*}
-\Delta u(\vec{x})+\big(V(\vec{x})+\sigma |u(\vec{x})|^{2}\big)u(\vec{x})=\lambda u(\vec{x}).
\end{equation*} 
Further, we use known results  for linear Schr\"{o}dinger equations with a periodic potential.
To start with, we consider a linear operator in $L^{2}(Q)$ 
described by the formula:
\begin{equation}H(\vec t)=-\Delta +V,\label{linear oper 4l>n+1}\end{equation}
and quasi-periodic boundary condition (\ref{main condition, 4l>n+1}).
The free operator $H_{0}(\vec t)$, corresponding to $V=0$, has eigenfunctions given by:
\begin{equation}\psi_{j}(\vec{x})=e^{i\langle{ \vec{p}_{j}(\vec t), \vec{x} }\rangle},~~\vec{p}_{j}(\vec t):=\vec t+2\pi j,~j \in \mathbb{Z}^{3},~\vec t \in K,\label{0}\end{equation}
and the corresponding eigenvalues $p_{j}^{}(\vec t):=|\vec{p}_{j}(\vec t)|^{}$. 
Perturbation theory for a linear operator $H(\vec t)$ with a periodic potential $V$ is developed in 
\cite{K97}. It is shown that at high energies, there is an extensive set of generalized eigenfunctions being close to plane waves. Below (See Theorem \ref{Theorem2.1}), we describe this result in details.
Now, we define a map ${\mathcal M}: L^{\infty}(Q) \rightarrow L^{\infty}(Q)$ by the formula:
\begin{align}\label{def of A 4l>n+1}
{\mathcal M}W(\vec{x})=V(\vec{x})+\sigma|u_{\tilde{W}}(\vec{x})|^{2}.
\end{align}
Here, $\tilde{W}$ is a shift of $W$ by a constant such that $\int_{Q}\tilde{W}(\vec{x})d\vec{x}=0$,
\begin{align}\label{tilde W 4l>n+1}
\tilde{W}(\vec{x})=W(\vec{x})-\frac{1}{(2\pi)^{3}}\int_{Q}W(\vec{x})d\vec{x},
\end{align}
and $u_{\tilde{W}}$ is an eigenfunction of the linear operator $-\Delta +\tilde{W}$ 
with the boundary condition (\ref{main condition, 4l>n+1}). 
Next, we consider a sequence $\{W_{m}\}_{m=0}^{\infty}$:
\begin{align}\label{def of successive sequence A 4l>n+1}
W_{0}=V+\sigma |A|^2,\ \ \ {\mathcal M}W_{m}=W_{m+1}.
\end{align}
Note that the sequence is well-defined by induction, since  for each $m=0,1,2,\dots$ and $\vec t$ in a non-resonant set ${\mathcal G}$ described in Section 2, there is an eigenfunction $u_{{m}}(\vec{x})$ corresponding to the potential $\tilde{W}_{m}$:
%
$$H_{{m}}(\vec t)u_{{m}} =\lambda_{{m}}u_{{m}},$$
$$H_{{m}}(\vec t)u_{{m}} :=-\Delta^{}u_{{m}}+\tilde{W}_{m}u_{{m}},$$
 $\lambda_{{m}}$, $u_{m}$ being defined by formal series of the form \eqref{3.2.10}, \eqref{3.2.10a}, \eqref{3.2.13}, \eqref{3.2.14}, \eqref{def of u 2l>n} with $\tilde W_m$ instead of $V$. Those series are proven to be convergent, thus justifying our construction. 
Next, we prove that the sequence $\{W_{m}\}_{m=0}^{\infty}$ is a Cauchy sequence of periodic functions in $Q$ with respect to a norm 
\begin{align}\label{def of star norm 4l>n+1}
\|W\|_{*}=\sum_{q\in \mathbb{Z}^{3}}|w_{q}|,
\end{align}
$w_{q}$ being Fourier coefficients of $W$. 
This implies that 
$W_{m}\rightarrow W$ {with respect to the norm} $\|\cdot\|_{*}$, $W$ {is a periodic function}. 
Further, we show that 
$$u_{{m}}\rightarrow u_{\tilde W} ~\mbox{in}~L^{\infty}(Q),\ \ \  \ \  \lambda_{{m}}\rightarrow \lambda_{\tilde W} ~\mbox{in}~\mathbb{R},$$
where $u_{\tilde W}$, $ \lambda_{\tilde W}$ correspond to the potential ${\tilde W}$ (via \eqref{3.2.10}, \eqref{3.2.10a}, \eqref{3.2.13}, \eqref{3.2.14}, \eqref{def of u 2l>n} with $\tilde W$ instead of $V$).
%
It follows from (\ref{def of A 4l>n+1}) and (\ref{def of successive sequence A 4l>n+1}) that ${\mathcal M}W=W$ and, hence, $u_{}:=u_{\tilde W}$ solves the nonlinear equation 
 (\ref{main equation, 4l>n+1}) with quasi-periodic boundary condition (\ref{main condition, 4l>n+1}). 
  We use the following norm $\|T\|_1$ of an operator $T$ in $l_2(\Z^3)$:
\begin{equation}\|T\|_1=\max _{i}\sum _p|T_{pi}|.\label{norm1} \end{equation}
%


The paper is organized  as follows. In Section 2, we introduce  results for the linear operator $-\Delta +V$ which include the perturbation formulas for an eigenvalue and its spectral projection. In Section 3, we prove  existence of  solutions of the equation \eqref{main equation, 4l>n+1} with boundary condition \eqref{main condition, 4l>n+1}
and investigate their properties. Isoenergetic surfaces are also introduced and described there. Section 4 contains several technical appendices which adjust the results from \cite{K97} to our present setting.
%
%
%
%
%
%
%
%
%

 \section{The Main Results  for the Linear Case.}
In this section we consider the linear operator with a periodic potential in $L_2(\R^3)$:
\begin{equation}\label{H} H=-\Delta +V \end{equation} 
We remind results proven in \cite{K97}. It is well-known that the  spectral analysis of $H$ can be 
reduced to studying the operators $H(\vec t)$,  $\vec t\in K$, where $K$ is the unit cell 
of the dual lattice, $K=[0,1]^3.$ The vector $\vec t$ is being called quasimomentum. 
The operator $H(\vec t)$, $\vec t\in K$, acts in 
$L_2(Q),\  Q=[0,2\pi]^3$. 
Its action is described by  
formula~(\ref{main equation, 4l>n+1}) together with the quasiperiodic conditions.

The operator $H(\vec t)$ has a discrete semi-bounded spectrum $\Lambda (\vec t)$:
$$\Lambda (\vec t)=\cup _{n=1}^{\infty }\lambda _n(\vec t),\  
\lambda _n(\vec t)\to_{n\to \infty }\infty .$$
The spectrum $\Lambda $ of operator $H$ is the union of the spectra $\Lambda (\vec t)$,
$$\Lambda =\cup _{\vec t\in K}\Lambda (\vec t)=\cup _{n\in \N,\vec t\in K}\lambda _n(\vec t).$$
The functions $\lambda _n(\vec t)$ are continuous, so $\Lambda $ has a band 
structure:
$$\Lambda =\cup _{n=1}^{\infty }[q_n,Q_n],\ \ 
 q_n=\min _{\vec t\in K}\lambda _n(\vec t),\  \ 
 Q_n=\max _{\vec t\in K}\lambda _n(\vec t).$$
The eigenfunctions of $H(\vec t)$ and $H$ are simply related. 
If we extend the 
eigenfunctions of all the operators $H(\vec t)$ quasiperiodically (see~(\ref{main condition, 4l>n+1})) 
to $\R^3$, we obtain a complete system of eigenfunctions of the 
operator $H$.
 
Let $H_0(\vec t)$ be the operator corresponding to the zero potential. 
Its eigenfunctions are the plane waves:
\begin{equation}
\exp\{i\langle\vec{p}_j(\vec t),\vec x\rangle\},\ j\in \Z^3,\ \vec{p}_j(\vec t)=\vec{p}_j(0)+\vec t.  
\label{7a}
\end{equation}
The eigenfunction~(\ref{7a}) corresponds to the eigenvalue $p_j^{2}(\vec t)=
|\vec{p}_j(\vec t)| ^{2}$. 
Thus, the spectrum of $H_0$ is equal to
$$\Lambda _0(\vec t)=\{ p_j^{2}(\vec t)\} _{j\in \Z^3}.$$
Using the basis of the eigenfunctions of $H_0(\vec t)$ one can write the matrix 
$H(\vec t)$ in the form
\begin{equation} 
H(\vec t)_{mj}=p_m^{2}(\vec t)\delta _{mj}+v_{m-j}, \label{8}
\end{equation}
where $\delta_{mj}$ is the Kronecker symbol. Of course, the free operator is diagonal
 in this basis.

Note that any $\vec{k}\in \R^3$ can be uniquely represented in the form:
\begin{equation}
\vec{k}=\vec{p}_j(\vec t),\ j\in \Z^3,\ \vec t\in K. \label{y}
\end{equation}
Thus, any plane wave $\exp\{ i\langle\vec{k},\vec x\rangle\}$ can be written in the
 form (\ref{7a}). 
 

In physical literature, the important concept of the isoenergetic surface of the free
 operator is 
used (see e.g. \cite{2r,1r,3r}). 
It is said that a point $\vec t$ belongs to an isoenergetic surface
 $S_0(k)$  of the free operator $H_0$, 
if and only if, the operator $H_0(\vec t)$ has an eigenvalue equal to $k^{2}$, 
i.e., there exists $m\in \Z^3$, such that 
$p_m^{2}(\vec t)=k^{2}$. 
This surface can be obtained as follows: 
the sphere of 
radius $k$ centered at the origin of $\R^3$ is divided into pieces by the dual 
lattice $\{ \vec{p}_m(\vec t)\} _{m\in \Z^3}$, and then all these pieces are transmitted 
into the cell $K$ of the dual lattice. Thus, we obtain the sphere 
``packed into the bag'' $K$.

To describe the main results we  introduce a model operator 
$\hat{H}(\vec t)$. 
First, we define the set $\Gamma (R_0)$. Let us consider $j:j\in \Z^3, 
|j|<R_0$. In this set some of the $j$ are scalar multipliers of 
others. Let us keep from every family of scalar multipliers only the 
minimal representative, i.e., the representative having the minimal length. 
We denote by $\Gamma (R_0)$ the union of these minimal representatives.
In other words, each $j:j\in \Z^3, 
|j|<R_0$ can be uniquely represented in the form $j=mj_0$, where 
$m\in \Z$, $j_0\in \Gamma (R_0)$. It is easy to see that potential $V(\vec x)$ can be written in the form:
\begin{equation} 
V=\sum _{q\in \Gamma (R_0)}V_q,    \label{3.2.2}
\end{equation}
where $V_q$ depends only on the single variable $\langle\vec x,\vec{p}_q(0)\rangle$:
\begin{equation}
V_q(\vec x)=\sum _{|nq|< R_0,n\in \Z}v_{nq}\exp \{in\langle\vec x,\vec{p}_q(0)\rangle\}. \label{3.2.5}
\end{equation}
Further we use this representation of the potential.

Let us consider the following sets in $\Z^3$:
\begin{equation}
\Pi _q(k^{1/5})=\left\{ j:\mid \langle\vec{p}_j(0),\vec{p}_q(0)\rangle\mid <
k^{1/5} ,\right\}\footnote{In fact, there is an auxiliary coefficient in front of $k^{1/5}$, which is equal 
to $1/5$ or $5$. This coefficient arises for technical reasons; we drop it here to describe
 the principal scheme.}
     \label{3.2.6}
\end{equation}

$$
T (k, R_0)=\{ j:\exists q,q'\in \Gamma (R_0), q\neq q':
$$
\begin{equation} 
\mid \langle\vec{p}_{j}(0),\vec{p}_q(0)\rangle\mid <k^{1/5} ,
\mid \langle\vec{p}_{j}(0),\vec{p}_{q'}(0)\rangle\mid 
<k^{3/5}\} \label{3.2.7}
\end{equation}
Let us define a diagonal projection $P_q$ as follows: 
\begin{equation} 
(P_q)_{jj}=\left\{ \begin{array}{ll}1, &\mbox{if $j \in \Pi _q(k^{1/5}
)\setminus T(k, R_0)$;}\\0, &\mbox{otherwise.}\end{array}\right.  \label{3.2.7a}
\end{equation}
 We define the model operator $\hat{H}(\vec t)$ by the formula 
\begin{equation} 
\hat{H}(\vec t)=H_0(\vec t)+\sum _{q\in \Gamma (R_0)}P_qV_qP_q,   \label{3.2.8}
\end{equation}
$H_0$ being the free operator ($V=0$).
Let
\begin{equation}
 \hat W _0=V-\sum _{q\in \Gamma (R_0)}P_qV_qP_q,  \label{3.2.9}
\end{equation}
i.e., $H(\vec t)=\hat{H}(\vec t)+\hat W$. Further, let
\begin{equation}
 \hat{g}_r(k,\vec t)=\frac{(-1)^r}{2\pi ir}\mbox{Tr}\oint _{C_0}((\hat{H}(\vec t)-z)^{-1}\hat W_0)^rdz,
 \label{3.2.10}
 \end{equation}
\begin{equation}
\hat{G}_r(k,\vec t)=\frac{(-1)^{r+1}}{2\pi i}\oint _{C_0}((\hat{H}(\vec t)-z)^{-1}\hat W_0)^r
(\hat{H}(\vec t)-z)^{-1}dz,
\label{3.2.10a} \end{equation}
 $C_0$ being the circle of the radius $k^{-1-\delta }$ about the point $z=k^2$. 
In \cite{K97}  we described  the set $\chi _3(k,V,\delta)\subset S_0(k)$:
 such that for any $t$ of 
 this set  
the operator $\hat{H}(\vec t)$ has a unique eigenvalue $p_j^2(\vec t)$ inside $C_0$, $j$ 
being uniquely determined from the relation 
$p_j^2(\vec t)=k^2$. 
This assertion is stable with respect to $\vec t$: if $\vec t$ is of 
the $(k^{-2-2\delta })$-neighborhood of $\chi _3(k,V,\delta )$, then 
the operator $\hat{H}(\vec t)$ has a unique eigenvalue $p_j^2(\vec t)$ inside $C_0$, $j$ 
being uniquely determined from the relation $p_j^2(\vec t)\in \varepsilon (k,\delta )
\equiv [k^2-k^{-1-\delta },k^2+k^{-1-\delta }]
.$
The spectral projection
 ${\mathbb E}_j$ (the same as for the free operator) corresponds to this eigenvalue.

For the operator 
$$\hat{A}=(\hat{H}(\vec t)-z)^{-1/2}\hat W _0
(\hat{H}(\vec t)-z)^{-1/2}$$ we have:
\begin{equation}
\|\hat{A}\|<k^{2\delta },\ \ \|\hat{A}^3\|<k^{-1/5+21\delta }. \label{3.2.11}
\end{equation}
\begin{theorem} \label{Theorem2.1} Suppose $\vec t$ is in the $(k^{-2-2\delta })$-neighborhood of 
the nonsingular set $\chi_3(k,V,\delta )$, $0<\delta <1/200$. 
Then for sufficiently 
large $k$,  $k>k_0(V,\delta )$, in 
the interval 
$\varepsilon (k,\delta )\equiv [k^{2}-k^{-1-\delta },k^{2}+k^{-1-\delta }]$
there exists a single eigenvalue of the operator $H$. It is given by series:
\begin{equation}
\lambda (\vec t)=p_j^{2}(\vec t)+\sum _{r=2}^{\infty}\hat g_r(k,\vec t)
                                                    \label{3.2.13}
\end{equation}
where $j$
is uniquely determined from the relation 
$p_j^{2}(\vec t)\in \varepsilon (k,\delta )$. 
The spectral projection corresponding to $\lambda (\vec t)$ is determined by the series:
\begin{equation}  
E(\vec t)={\mathbb E}_j+\sum _{r=1}^{\infty}\hat G_r(k,\vec t)
 \label{3.2.14}
\end{equation}
%
which converges in the trace class $\hbox{\bf{S}}_1$.

For the functions $\hat g_r(k,\vec t)$ and the operator-valued functions $\hat G_r(k,\vec t)$ the estimates
\begin{equation} 
\mid \hat g_r(k,\vec t)\mid <k^{-1-\delta -r/20},  \label{3.2.15}
\end{equation}
\begin{equation} 
\| \hat G_r(k,\vec t)\|_{\bf{S}_1} <k^{-r/20}  \label{3.2.16}
\end{equation}
hold.
%
%
%
\end{theorem}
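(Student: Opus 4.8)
The plan is to recover \eqref{3.2.13}--\eqref{3.2.16} from the a priori bounds \eqref{3.2.11} by the Riesz-projection / resolvent expansion, regarding $H(\vec t)=\hat H(\vec t)+\hat W_0$ as a perturbation of the model operator $\hat H(\vec t)$ (this is the content of \cite{K97}, restated and reproved here). On the nonsingular set $\hat H(\vec t)$ has the single eigenvalue $p_j^2(\vec t)$ strictly inside $C_0$ with spectral projection $\mathbb{E}_j$, and no spectrum on $C_0$, so for $z\in C_0$ one may factor
\[
H(\vec t)-z=(\hat H(\vec t)-z)^{1/2}\bigl(I+\hat A(z)\bigr)(\hat H(\vec t)-z)^{1/2},\qquad \hat A(z)=(\hat H(\vec t)-z)^{-1/2}\hat W_0(\hat H(\vec t)-z)^{-1/2},
\]
with the square roots defined by functional calculus through the spectral gap of $\hat H(\vec t)$ surrounding $C_0$. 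By \eqref{3.2.11}, $\|\hat A^3(z)\|<k^{-1/5+21\delta}<1$, hence $I+\hat A(z)$ is invertible on $C_0$ via $(I+\hat A)^{-1}=(I-\hat A+\hat A^2)(I+\hat A^3)^{-1}$, and so is $H(\vec t)-z$. Since the Riesz projection $-(2\pi i)^{-1}\oint_{C_0}(\hat H(\vec t)+s\hat W_0-z)^{-1}\,dz$ varies norm-continuously with $s\in[0,1]$, its rank is constant ($=\rank\,\mathbb{E}_j=1$); thus $H(\vec t)$ has exactly one eigenvalue $\lambda(\vec t)$ in $\varepsilon(k,\delta)$, with rank-one spectral projection $E(\vec t)=-(2\pi i)^{-1}\oint_{C_0}(H(\vec t)-z)^{-1}\,dz$, and $\lambda(\vec t)=-(2\pi i)^{-1}\Tr\oint_{C_0}z\,(H(\vec t)-z)^{-1}\,dz$.

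Next, on $C_0$ the Neumann series $(H(\vec t)-z)^{-1}=\sum_{r\ge0}(-1)^r\bigl((\hat H(\vec t)-z)^{-1}\hat W_0\bigr)^r(\hat H(\vec t)-z)^{-1}$ converges in operator norm: writing $r=3m+s$ with $s\in\{0,1,2\}$, \eqref{3.2.11} gives $\|\hat A^r(z)\|\le\|\hat A^3\|^{m}\|\hat A\|^{s}<k^{4\delta}\,k^{(-1/5+21\delta)r/3}$, summable because $-1/5+21\delta<0$ for $0<\delta<1/200$. Substituting the series into the two contour integrals, the $r=0$ terms produce $\mathbb{E}_j$ and $p_j^2(\vec t)$; the $r\ge1$ terms give $\hat G_r(k,\vec t)$ directly for the projection, while for the eigenvalue one uses the cyclicity identity $\Tr\,\partial_z\bigl((\hat H(\vec t)-z)^{-1}\hat W_0\bigr)^r=r\,\Tr\bigl[\bigl((\hat H(\vec t)-z)^{-1}\hat W_0\bigr)^r(\hat H(\vec t)-z)^{-1}\bigr]$ and then integrates by parts in $z$ along the closed contour to identify the $r$-th term with $\hat g_r(k,\vec t)$, giving \eqref{3.2.13}--\eqref{3.2.14}. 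The $r=1$ contribution to $\lambda(\vec t)$ is $-(2\pi i)^{-1}\Tr\oint_{C_0}(\hat H(\vec t)-z)^{-1}\hat W_0\,dz=-(\hat W_0)_{jj}=0$, since $\int_Q V\,d\vec x=0$ and $\int_Q V_q\,d\vec x=0$ for every $q$, so the diagonal entries of $V$ and of $\sum_q P_qV_qP_q$ all vanish; hence the sum in \eqref{3.2.13} starts at $r=2$.

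For the estimates, \eqref{3.2.11} yields $\|\hat A^r(z)\|<k^{-cr+c'}$ uniformly on $C_0$ with a fixed $c>0$. Using the similarities $\bigl((\hat H(\vec t)-z)^{-1}\hat W_0\bigr)^r(\hat H(\vec t)-z)^{-1}=(\hat H(\vec t)-z)^{-1/2}\hat A^r(z)(\hat H(\vec t)-z)^{-1/2}$ and $\Tr\bigl[\bigl((\hat H(\vec t)-z)^{-1}\hat W_0\bigr)^r\bigr]=\Tr[\hat A^r(z)]$, together with H\"older for Schatten norms and the dimension-three fact that $(\hat H(\vec t)-z)^{-s}\in\mathbf{S}_p$ whenever $ps>3/2$ (so that a couple of resolvent factors already make these products trace class, with $\hat A^2(z)\in\mathbf{S}_1$), one bounds the integrands in $\mathbf{S}_1$, multiplies by the length $2\pi k^{-1-\delta}$ of $C_0$, and sums the geometric tails. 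The $k$-powers gained from the decay of $\|\hat A^r\|$ are offset against those lost from the near-singular resolvent, $\|(\hat H(\vec t)-z)^{-1}\|\lesssim k^{1+\delta}$ on $C_0$, and the balance is restored by the $k^{1/5}$-gaps guaranteed by the definition of $\chi_3(k,V,\delta)$; this yields \eqref{3.2.15}--\eqref{3.2.16}. The Schatten-norm and gap computations needed here, adapted to the present potential $V+\sigma|u|^2$, are precisely the ones carried out in Section~4.

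The delicate step is this last one for the first few values of $r$, above all $r=2$ (the leading terms $\hat g_2$ and $\hat G_1$): there the crude bound $\|(\hat H(\vec t)-z)^{-1}\|\lesssim k^{1+\delta}$ is too lossy, and one must use the full force of the non-resonance defining $\chi_3(k,V,\delta)$ --- that $j$, and every index $l$ with $(\hat W_0)_{jl}\ne0$, avoid the small-divisor sets $\Pi_q$, so that each energy denominator $p_l^2(\vec t)-p_j^2(\vec t)$ entering $\hat g_2,\hat G_1$ is bounded below by a fixed positive power of $k$. Quantifying this lower bound and then organizing the exponent bookkeeping so that every $\hat g_r$, $\hat G_r$ meets the uniform rate $k^{-r/20}$ (with room to spare, $\delta$ being small) is the substantive part; for $r$ beyond a fixed constant the estimates then follow mechanically from \eqref{3.2.11}.
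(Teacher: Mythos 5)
Your proposal is correct and coincides with the route the paper itself relies on: Theorem \ref{Theorem2.1} is quoted from \cite{K97} without an in-text proof, and the argument there is exactly the one you outline --- factor $H(\vec t)-z$ through the model operator $\hat H(\vec t)$, use \eqref{3.2.11} to invert $I+\hat A$ on $C_0$ (hence rank constancy of the Riesz projection and uniqueness of the eigenvalue), expand the resolvent to obtain \eqref{3.2.13}--\eqref{3.2.14} with $\hat g_1=(\hat W_0)_{jj}=0$, and derive \eqref{3.2.15}--\eqref{3.2.16} from the small-divisor bounds encoded in $\chi_3(k,V,\delta)$. The Schatten-norm and exponent bookkeeping you defer at the end is precisely what \cite{K97} supplies, so nothing in your outline departs from or conflicts with the paper's (cited) proof.
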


It turns out that estimates (\ref{3.2.15}) and (\ref{3.2.16}) can be 
improved when $r<k^{\delta }R_0^{-1}$. 
\begin{lemma} Under the conditions of Theorem \ref{Theorem2.1} with 
$r<k^{\delta }R_0^{-1}$ we have:
\begin{equation}
|\hat g_r(k,\vec t)|<\hat{v}r^{2}(\hat{v}k^{-1+3\delta })^{r-1},  \label{3.2.21}
\end{equation}
\begin{equation}
\|\hat G_r(k,\vec t)\|<(\hat{v}k^{-1+3\delta })^{r},  \label{3.2.22}
\end{equation}
\begin{equation}
\|\hat G_r(k,\vec t)\|_{\bf{S}_1}<(rR_0)^{3}(\hat{v}k^{-1+3\delta })^{r},  \label{3.2.23}
\end{equation}
\begin{equation}
|\hat g_2(k,\vec t)|<\hat{v}^2R_0^{-1}k^{-2+6\delta }. \label{3.2.24}
\end{equation}
Here and below:
$$\hat{v}\equiv c_0(\max _{|q|<R_0}|v_q|)R_0^3, \ \ c_0\neq c_0(k,V).$$
 
The operator $\hat G_r(k,\vec t)$, $r\in \N$, is nonzero only on the finite-dimensional subspace 
$(\sum _{i\in \Z^3, |i-j|<rR_0}E_i)l_2^{3}$.
\end{lemma}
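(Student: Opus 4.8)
The plan is to obtain the improved bounds by re-examining the contour integrals \eqref{3.2.10}, \eqref{3.2.10a} defining $\hat g_r$ and $\hat G_r$, exploiting that the number of "active" indices in the perturbation $\hat W_0$ is controlled by $R_0$ rather than by $k$. The starting point is the resolvent identity: on $C_0$ the operator $(\hat H(\vec t)-z)^{-1}$ differs from $(H_0(\vec t)-z)^{-1}$ only through the finite-rank correction $\sum_q P_q V_q P_q$, and the perturbation $\hat W_0=V-\sum_q P_qV_qP_q$ is a fixed trigonometric polynomial, hence a band-limited matrix: $(\hat W_0)_{mn}=0$ unless $|m-n|<R_0$. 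Therefore each factor in the product $((\hat H(\vec t)-z)^{-1}\hat W_0)^r$ moves an index by at most $R_0$, so $\hat G_r$ is supported on $\{|i-j|<rR_0\}$, giving the last assertion of the lemma about the finite-dimensional range. This same band-limitedness is what replaces the crude combinatorial counts used to prove \eqref{3.2.15}–\eqref{3.2.16} by the sharper geometric counts needed for \eqref{3.2.21}–\eqref{3.2.24}.

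First I would set up the key diagonal estimate: for indices $i$ with $|i-j|<rR_0$ and $r<k^\delta R_0^{-1}$, one has $|i-j|<k^\delta$, so $|p_i^2(\vec t)-k^2|$ is either comparable to $k^{1-\delta}$ or larger, away from the single resonant index; this yields $\|(\hat H(\vec t)-z)^{-1}\|$-type bounds on the relevant subspace of order $k^{-1+\delta}$ on $C_0$, except at the pole. Next, I would expand the trace/contour integral: writing out $((\hat H-z)^{-1}\hat W_0)^r$ as a sum over index chains $j=i_0\to i_1\to\cdots\to i_r$ with each step bounded by $R_0$, each of the $r$ propagator factors contributes a factor of size $O(k^{-1+3\delta})$ after accounting for the one near-resonant denominator handled by the residue, and each $\hat W_0$ factor contributes at most $\hat v/R_0^3$ per entry (so $\hat v$ after summing over the $O(R_0^3)$ neighbors). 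The bound on the number of chains — at most $(O(R_0))^{3r}$ naively, but refined using that the near-resonant index appears only once — produces the powers $r^2$ in \eqref{3.2.21} and $(rR_0)^3$ in \eqref{3.2.23}, the latter being just the dimension of the range of $\hat G_r$. The special case \eqref{3.2.24} for $r=2$ is sharper because the two-step chain $j\to i_1\to j$ with $|i_1-j|<R_0$ has only $O(R_0^3)$ terms but the denominators, after the residue, gain an extra factor $R_0^{-1}k^{-1+3\delta}$ from a more careful look at the small-divisor structure near $C_0$.

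The main obstacle I anticipate is the bookkeeping of the small denominators along an index chain: one must argue that along any chain $j=i_0\to\cdots\to i_r$ staying within distance $rR_0<k^\delta$ of $j$, at most one index is genuinely resonant (i.e. has $p_i^2(\vec t)$ inside or very near $C_0$), so that $r-1$ of the $r$ resolvent factors are bounded by $O(k^{-1+3\delta})$ uniformly and the remaining one is handled by the contour integration / residue. This non-resonance-along-chains fact is exactly where the definitions of $\Pi_q$, $T(k,R_0)$ and the projections $P_q$ in \eqref{3.2.6}–\eqref{3.2.7a}, together with the hypothesis that $\vec t$ lies near the nonsingular set $\chi_3(k,V,\delta)$, must be invoked; it is essentially a geometric statement that the sphere $p_i^2(\vec t)=k^2$ does not return to a $k^{-1-\delta}$-neighborhood of itself within a $k^\delta$-ball of index space, which follows from the structure of $\chi_3$ as established in \cite{K97}. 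Once this is in place, \eqref{3.2.21}–\eqref{3.2.24} follow by summing the geometric series over chains; I would relegate the detailed verification of the non-resonance-along-chains property to the technical appendices in Section 4, where the results of \cite{K97} are adapted to the present setting.
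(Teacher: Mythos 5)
The paper itself does not prove this lemma (it is imported verbatim from \cite{K97}), so I am judging your argument on its own terms. Your overall architecture — band-limited perturbation, index chains of length $r<k^{\delta}R_0^{-1}$, one resonant denominator absorbed by the residue, rank count $(rR_0)^3$ for the $\bf{S}_1$ bound — is the right one, but the central step is asserted rather than proved, and as stated it is false. You claim that "each factor in the product $((\hat H-z)^{-1}\hat W_0)^r$ moves an index by at most $R_0$" on the strength of $\hat W_0$ being banded. That controls only the $r$ potential factors; the $r+1$ resolvent factors are \emph{not} banded, because $\hat H$ is only block-diagonal: the blocks $P_qV_qP_q$ couple chains $m,\,m+lq,\,m+2lq,\dots$ inside $\Pi_q(k^{1/5})\setminus T$, of length up to $\sim k^{1/5}$, and the inverse of a banded block is full. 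So neither the support statement nor the bound $O(k^{-1+3\delta})$ per resolvent factor follows from what you wrote. The missing fact — and it is precisely where the hypothesis $r<k^{\delta}R_0^{-1}$ and condition $1^0$ defining $\chi_3$ enter — is that every index $m$ with $|m-j|<k^{\delta}$ satisfies $|\langle\vec p_m(0),\vec p_q(0)\rangle|\gtrsim k^{1-3\delta}\gg k^{1/5}$ for all $q\in\Gamma(R_0)$ (since $p_{j+q}^2(\vec t)-p_j^2(\vec t)=2\langle\vec p_j(\vec t),\vec p_q(0)\rangle+p_q^2(0)$ and $1^0$ makes this large), hence $m\notin\Pi_q(k^{1/5})$ and $(P_q)_{mm}=0$. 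Consequently $\hat H$ is \emph{diagonal} on the $k^{\delta}$-neighborhood of $j$ in index space, with $|p_m^2(\vec t)-z|\ge\tfrac12 k^{1-3\delta}$ there for $m\ne j$, $z\in C_0$; only then do the chains starting from the mandatory $\mathbb E_j$ factor stay confined to $|i-j|<rR_0$ and yield the factor $(k^{-1+3\delta})$ per non-resonant link. You gesture at this ("non-resonance along chains") but phrase it as a statement about denominators only and defer it entirely to the appendices; it is the heart of the lemma, not a technicality. Your denominator size "$k^{1-\delta}$" is also off (it is $k^{1-3\delta}$, which is where the exponent $3\delta$ in the estimates comes from).

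Separately, your explanation of \eqref{3.2.24} does not work. A term-by-term count of $\hat g_2=\sum_{0<|q|<R_0}|v_q|^2\big(p_j^2(\vec t)-p_{j+q}^2(\vec t)\big)^{-1}$ with denominators $\gtrsim k^{1-3\delta}$ gives only $\hat v^2R_0^{-3}k^{-1+3\delta}$, which is much \emph{larger} than the claimed $\hat v^2R_0^{-1}k^{-2+6\delta}$. The extra power of $k^{-1+3\delta}$ comes from pairing $q$ with $-q$: since $p_{j\pm q}^2-p_j^2=\pm2\langle\vec p_j(\vec t),\vec p_q(0)\rangle+p_q^2(0)$, the two reciprocals nearly cancel, leaving a numerator $2p_q^2(0)=O(R_0^2)$ over a product of two large denominators, whence $\sum_q|v_q|^2p_q^2(0)k^{-2+6\delta}\le c\hat v^2R_0^{-1}k^{-2+6\delta}$. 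Your "extra factor $R_0^{-1}k^{-1+3\delta}$ from a more careful look at the small-divisor structure" names the right magnitude but not the mechanism, and no amount of refined counting of chains will produce it without this cancellation.
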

\begin{corollary} The perturbed eigenvalue and its spectral 
projection satisfy the following estimates:
\begin{equation} 
\mid \lambda (\vec t)-p_j^{2}(\vec t)\mid \leq c\hat{v}^2(\hat{v}+R_0^{-1})
k^{-2+6\delta },
\label{3.2.25}
\end{equation}
\begin{equation}
\|E(\vec t)-{\mathbb E}_j\|_{\bf{S}_1}\leq c\hat{v}R_0^{3}k^{-1+3\delta }.
\label{3.2.26}
\end{equation}
\end{corollary}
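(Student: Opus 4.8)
The plan is to substitute the convergent series \eqref{3.2.13} and \eqref{3.2.14} for $\lambda(\vec t)$ and $E(\vec t)$ into the left-hand sides of \eqref{3.2.25} and \eqref{3.2.26} and to estimate the resulting tails term by term, using the sharp bounds \eqref{3.2.21}--\eqref{3.2.24} of the Lemma for the small indices $r$ and the weaker bounds \eqref{3.2.15}--\eqref{3.2.16} of Theorem \ref{Theorem2.1} for the large ones. Since $\hat v$ does not depend on $k$, I would first enlarge $k_0(V,\delta)$, if necessary, so that $\hat v k^{-1+3\delta}<1/2$ and $k^{\delta}R_0^{-1}>3$ for $k>k_0$.

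For \eqref{3.2.25} I would write $\lambda(\vec t)-p_j^{2}(\vec t)=\sum_{r\ge 2}\hat g_r(k,\vec t)$ and split the sum at $r_0:=k^{\delta}R_0^{-1}$. The term $r=2$ is bounded by $\hat v^{2}R_0^{-1}k^{-2+6\delta}$ via \eqref{3.2.24}. For $3\le r<r_0$, estimate \eqref{3.2.21} gives, with $x:=\hat v k^{-1+3\delta}$, $\sum_{r\ge3}\hat v r^{2}x^{r-1}=\hat v x^{2}\sum_{r\ge3}r^{2}x^{r-3}\le c\,\hat v x^{2}=c\,\hat v^{3}k^{-2+6\delta}$, the numerical series converging because $x<1/2$. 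For $r\ge r_0$, estimate \eqref{3.2.15} gives $\sum_{r\ge r_0}k^{-1-\delta-r/20}\le 2k^{-1-\delta}k^{-r_0/20}=2k^{-1-\delta}k^{-k^{\delta}/(20R_0)}$, which decays faster than any power of $k$ and is therefore $o(k^{-2+6\delta})$. Adding the three contributions gives $|\lambda(\vec t)-p_j^{2}(\vec t)|\le c\,\hat v^{2}(R_0^{-1}+\hat v)k^{-2+6\delta}$, i.e.\ \eqref{3.2.25}.

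The argument for \eqref{3.2.26} is completely parallel: write $E(\vec t)-\mathbb E_j=\sum_{r\ge1}\hat G_r(k,\vec t)$ and split at $r_0=k^{\delta}R_0^{-1}$. The term $r=1$ contributes at most $R_0^{3}\hat v k^{-1+3\delta}$ by \eqref{3.2.23}. For $2\le r<r_0$, \eqref{3.2.23} yields $\sum_{r\ge2}(rR_0)^{3}(\hat v k^{-1+3\delta})^{r}\le c\,R_0^{3}\hat v^{2}k^{-2+6\delta}=c\,(\hat v k^{-1+3\delta})\cdot R_0^{3}\hat v k^{-1+3\delta}$, which is smaller than $R_0^{3}\hat v k^{-1+3\delta}$ for $k$ large. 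For $r\ge r_0$, \eqref{3.2.16} gives $\sum_{r\ge r_0}k^{-r/20}\le 2k^{-k^{\delta}/(20R_0)}$, again negligible. Summing the three parts gives $\|E(\vec t)-\mathbb E_j\|_{\mathbf{S}_1}\le c\,\hat v R_0^{3}k^{-1+3\delta}$, which is \eqref{3.2.26}.

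There is no genuine difficulty here beyond geometric-series bookkeeping; the one point I would check carefully is the matching of the two regimes at $r_0=k^{\delta}R_0^{-1}$, namely that the Lemma's hypothesis $r<k^{\delta}R_0^{-1}$ is exactly what is needed to cover the leading indices, and that on the complementary range $r\ge r_0$ the cruder bounds of Theorem \ref{Theorem2.1} already beat the target powers $k^{-2+6\delta}$ (resp.\ $k^{-1+3\delta}$) — which rests on $k^{-k^{\delta}/(20R_0)}$ decaying faster than any negative power of $k$ once $k>k_0(V,\delta)$.
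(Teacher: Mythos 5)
Your proposal is correct and is essentially the argument the paper intends: the corollary is stated without a written proof precisely because it follows by summing the series \eqref{3.2.13}, \eqref{3.2.14} term by term, using \eqref{3.2.21}--\eqref{3.2.24} for $r<k^{\delta}R_0^{-1}$ and \eqref{3.2.15}--\eqref{3.2.16} for the superpolynomially small tail $r\ge k^{\delta}R_0^{-1}$, exactly as you do. Your bookkeeping (the $r=2$ term giving the $R_0^{-1}$ contribution, the $r\ge3$ geometric sum giving the $\hat v^{3}$ contribution, and the tail being $o$ of everything) matches the structure of the stated bound $c\hat v^{2}(\hat v+R_0^{-1})k^{-2+6\delta}$.
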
 
Let us introduce the notations:
\begin{equation}
T(m):=\frac{\partial^{|m|}}{\partial t_1^{m_1}\partial t_2^{m_2}\partial t_3^{m_3}},\ \ |m|=m_1+m_2+m_3,\ \ m!=m_1!m_2!m_3!,
\end{equation}
with $0\leq |m|<\infty,\ \ T(0)f:=f$. 
\begin{lemma} Under the conditions of Theorem \ref{Theorem2.1} the functions $\hat{g}_r
(k,\vec t)$ and the 
operator-valued functions $\hat{G}_r(k,\vec t)$ 
depend analytically on $\vec t$ in the complex 
$(k^{-2-2\delta })$-neighborhood of each simply connected component of the 
nonsingular set $\chi _3(k,V ,\delta )$. They satisfy the estimates:
\begin{equation} 
\mid T(m) \hat{g}_r(k,\vec t)\mid <m!k^{-1-\delta -r/20+2(1+\delta )|m|},  
\label{3.2.27}
\end{equation}
\begin{equation} 
\| T(m) \hat{G}_r(k,\vec t)\| <m!k^{-r/20+2(1+\delta )|m|},  \label{3.2.28}
\end{equation}
%
%
If $r<k^{\delta }R_0^{-1}$, then:
\begin{equation}
|T(m)\hat g_r(k,\vec t)|<m!(c_0k^{3\delta })^{|m|}
\hat{v}r^{2}(\hat{v}k^{-1+3\delta })^{r-1},  \label{3.2.31}
\end{equation}
\begin{equation}
\|T(m)\hat G_r(k,\vec t)\|<m!(c_0k^{3\delta })^{|m|}
(\hat{v}k^{-1+3\delta })^{r},  \label{3.2.32}
\end{equation}
\begin{equation}
\|T(m)\hat G_r(k,\vec t)\|_{\bf{S}_1}<m!(c_0k^{3\delta })^{|m|}
(rR_0)^{3}(\hat{v}k^{-1+3\delta })^{r},  \label{3.2.33}
\end{equation}
\begin{equation}
|\hat g_2(k,\vec t)|<m!(c_0k^{3\delta })^{|m|}
\hat{v}^2R_0^{-1}k^{-2+6\delta },  \label{3.2.34}
\end{equation}
$$ c_0\neq c_0(k,V).$$
\end{lemma}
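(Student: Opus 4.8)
The plan is to obtain every inequality in the lemma from Cauchy's integral formula, once holomorphy of $\hat g_r$ and $\hat G_r$ in $\vec t$ has been established on a complex neighbourhood on which the already-proved undifferentiated bounds (namely \eqref{3.2.15}, \eqref{3.2.16}, and for small $r$ the bounds \eqref{3.2.21}--\eqref{3.2.24}) continue to hold.

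First I would record where the analyticity comes from. In $\hat H(\vec t)=H_0(\vec t)+\sum_{q\in\Gamma(R_0)}P_qV_qP_q$ the only $\vec t$--dependent term is the free part $H_0(\vec t)$, which is diagonal with entries $p_j^2(\vec t)=|2\pi j+\vec t|^2$; these extend to polynomials of degree two in $\vec t\in\C^3$, while the diagonal projections $P_q$ are defined in \eqref{3.2.6}--\eqref{3.2.7a} through the fixed vectors $\vec p_j(0)$ and do not depend on $\vec t$. Hence $\vec t\mapsto(\hat H(\vec t)-z)^{-1}$, and with it the contour integrals \eqref{3.2.10}, \eqref{3.2.10a}, are holomorphic on every domain on which the contour of integration stays off $\spec\hat H(\vec t)$. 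Since $\chi_3(k,V,\delta)\subset S_0(k)$, for $\vec t_0$ in a simply connected component of $\chi_3$ the eigenvalue of $\hat H(\vec t_0)$ enclosed by $C_0$ equals $p_j^2(\vec t_0)=k^2$, the centre of $C_0$, while all other eigenvalues lie outside $C_0$ with a separation $\gg k^{-1-\delta}$ that is built into the nonsingularity condition. Moving $\vec t$ within the complex $(k^{-2-2\delta})$--neighbourhood shifts each $p_j^2(\vec t)$ by at most $2|\vec p_j(\vec t_0)|\,k^{-2-2\delta}+k^{-4-4\delta}$, which is $O(k^{-1-2\delta})\ll k^{-1-\delta}$ on the indices with $|\vec p_j|\sim k$ and is negligible against the gap $p_j^2(\vec t_0)-k^2$ for all other $j$. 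Thus $C_0$ keeps separating the single eigenvalue from the rest of the spectrum throughout that neighbourhood, the operator inequalities \eqref{3.2.11} and the resolvent bound $\|(\hat H(\vec t)-z)^{-1}\|\lesssim k^{1+\delta}$ on $C_0$ survive verbatim, and consequently $\hat g_r,\hat G_r$ are holomorphic there and still obey \eqref{3.2.15}, \eqref{3.2.16}, the proofs of the latter using nothing beyond these ingredients. Granting this, \eqref{3.2.27} and \eqref{3.2.28} are immediate: applying the multivariate Cauchy formula for $T(m)$ on a polydisc of polyradius $k^{-2-2\delta}$ about the real point $\vec t$ gives
\[
|T(m)\hat g_r(\vec t)|\le m!\,(k^{-2-2\delta})^{-|m|}\sup|\hat g_r|\le m!\,k^{2(1+\delta)|m|}\,k^{-1-\delta-r/20},
\]
and likewise $\|T(m)\hat G_r(\vec t)\|\le m!\,k^{2(1+\delta)|m|}k^{-r/20}$ in operator norm.

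The improved estimates \eqref{3.2.31}--\eqref{3.2.34}, valid for $r<k^{\delta}R_0^{-1}$, cannot come from the $(k^{-2-2\delta})$--neighbourhood, since the Cauchy factor would then be $k^{2(1+\delta)|m|}$ rather than $(c_0k^{3\delta})^{|m|}$; they need a larger radius of analyticity, of order $k^{-3\delta}$. Here one uses that for such small $r$ the operators $\hat g_r,\hat G_r$ are carried by the finite block $\bigl(\sum_{|i-j|<rR_0}E_i\bigr)l_2^3$, on which $\hat H(\vec t)$ is $\diag(p_i^2(\vec t))$ plus an off-diagonal perturbation of norm $\lesssim\hat v$, and that $\chi_3$ is built (see the appendices) so that $|p_i^2(\vec t)-k^2|\gtrsim k^{1-3\delta}$ for every $i\ne j$ in the block; hence the circle $C_0$ may be replaced by one of radius $\sim k^{1-3\delta}$ about $k^2$, on which $\|(\hat H(\vec t)-z)^{-1}\|\lesssim k^{-1+3\delta}$, and the resulting expressions for $\hat g_r,\hat G_r$ are holomorphic, and still satisfy \eqref{3.2.21}--\eqref{3.2.24}, on the whole complex $(c_0^{-1}k^{-3\delta})$--neighbourhood of $\chi_3$ (the eigenvalue near $k^2$ moves there by only $O(c_0^{-1}k^{1-3\delta})$, comfortably inside the enlarged contour). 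Cauchy's formula on a polydisc of polyradius $c_0^{-1}k^{-3\delta}$ then produces $|T(m)\hat g_r|\le m!(c_0k^{3\delta})^{|m|}\sup|\hat g_r|$, together with its operator-norm and $\mathbf S_1$-norm analogues, which are exactly \eqref{3.2.31}--\eqref{3.2.34}.

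The main obstacle is the step just sketched for small $r$: one must verify, uniformly over the complex $k^{-3\delta}$--neighbourhood, the spectral gap $|p_i^2(\vec t)-k^2|\gtrsim k^{1-3\delta}$ (equivalently $|\langle\vec p_j(\vec t),i-j\rangle|\gtrsim k^{1-3\delta}$) for all $i\ne j$ with $|i-j|<rR_0$, which has to be read off from the construction of $\chi_3$ in \cite{K97}, and one must check that the enlarged contour indeed avoids the remaining spectrum of $\hat H(\vec t)$ on that block. The auxiliary numerical factors in the definition of $\Pi_q$ (cf.\ the footnote after \eqref{3.2.6}) are chosen precisely so that these inequalities, and hence the per-derivative cost $O(k^{3\delta})$, come out with the stated exponents; the rest is bookkeeping of multi-indices, which is routine.
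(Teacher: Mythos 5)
The paper itself does not prove this lemma --- it is imported from \cite{K97} without argument --- so your proposal can only be judged on its own terms. The first half (holomorphy of $\vec t\mapsto(\hat H(\vec t)-z)^{-1}$ from the polynomial $\vec t$-dependence of $H_0(\vec t)$ and the $\vec t$-independence of the projections $P_q$, stability of the separation of spectrum by $C_0$ under complex shifts of size $k^{-2-2\delta}$, then the Cauchy inequality on a polydisc of polyradius $k^{-2-2\delta}$ applied to \eqref{3.2.15}, \eqref{3.2.16}) is exactly the standard route to \eqref{3.2.27}--\eqref{3.2.28}, modulo the innocuous constant from comparing a polydisc with a Euclidean ball. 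You also correctly diagnose that \eqref{3.2.31}--\eqref{3.2.34} cannot come from Cauchy on that small neighbourhood and require a per-derivative cost of only $O(k^{3\delta})$, hence analyticity (of these particular coefficients) on a neighbourhood of radius $\sim k^{-3\delta}$.

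The one step that does not work as written is the contour enlargement. Between $C_0$ and a circle of radius $\sim k^{1-3\delta}$ about $k^2$ there lie, in general, many eigenvalues $p_i^2(\vec t)$ of $\hat H(\vec t)$ with $i$ far from $j$: in dimension three the unperturbed eigenvalues are dense near $k^2$ on the scale $k^{-1}$, and condition $1^0$ defining $\chi_3$ only pushes away the indices with $|i-j|<k^{\delta}$. Consequently the integrals of the full integrands in \eqref{3.2.10}, \eqref{3.2.10a} over $C_0$ and over the enlarged circle differ by the residues at all those intermediate eigenvalues, and one cannot simply ``replace $C_0$'' in the operator-valued expression. The deformation is legitimate only after the $C_0$-integration has been used to localize: expand $((\hat H(\vec t)-z)^{-1}\hat W_0)^r$ into matrix paths with steps $<R_0$; only paths passing through the index $j$ survive $\oint_{C_0}$ (this is precisely the origin of the block-support statement you invoke); all indices of such a path lie within $rR_0<k^{\delta}$ of $j$ and hence obey the gap of condition $1^0$; and for each such path the integrand is holomorphic in the annulus between the two circles, so the contour may be deformed path by path. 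For the same reason, the analytic continuation of $\hat g_r,\hat G_r$ to the larger complex neighbourhood must be \emph{defined} through this path-restricted, enlarged-contour expression, since once $p_j^2(\vec t)$ leaves the disc bounded by $C_0$ the original integral no longer represents the function. With that repair, your scheme (stability of the gap under complex translations of size $c_0^{-1}k^{-3\delta}$, resolvent bound $k^{-1+3\delta}$ on the enlarged contour, Cauchy with polyradius $c_0^{-1}k^{-3\delta}$) does yield \eqref{3.2.31}--\eqref{3.2.34}; the rest is the bookkeeping you describe.
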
 
\begin{corollary} \label{Cor1} The function $\lambda (\vec t)$ and the operator-valued function $E(\vec t)$ 
depend analytically on $\vec t$ in the complex 
$(k^{-2-2\delta })$-neighborhood of each simply connected component of the 
nonsingular set $\chi _3(k,V ,\delta )$. They admit the estimates:
\begin{equation}
\mid T(m) (\lambda (\vec t)-p_j^{2}(\vec t))\mid \leq cm!\hat{v}^2(\hat{v}+R_0^{-1})
k^{-2+6\delta +2(1+\delta )|m|},
\label{3.2.35}
\end{equation}
\begin{equation} 
\|T(m) (E(\vec t)-{\mathbb E}_j)\|_{\bf{S}_1}\leq cm!\hat{v}R_0^{3}k^{-1+3\delta +2(1+\delta )|m|}.
\label{3.2.36}
\end{equation}
If $|m|< k^{\delta }(60R_0)^{-1}$, then the estimates can be improved:
\begin{equation} 
\mid T(m) (\lambda (\vec t)-p_j^{2}(\vec t))\mid \leq cm!(c_0k^{3\delta })^{|m|}
\hat{v}^2(\hat{v}+R_0^{-1})k^{-2+6\delta },
\label{3.2.37}
\end{equation}
\begin{equation}
\|T(m) (E(\vec t)-{\mathbb E}_j)\|_{\bf{S}_1}\leq cm!(c_0k^{3\delta })^{|m|}
\hat{v}R_0^{3}k^{-1+3\delta }.
\label{3.2.38}
\end{equation}
\end{corollary}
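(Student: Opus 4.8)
The plan is to transfer the estimates of the preceding Lemma (bounds \eqref{3.2.27}--\eqref{3.2.34} on $T(m)\hat g_r$ and $T(m)\hat G_r$) through the series representations \eqref{3.2.13} and \eqref{3.2.14} for $\lambda(\vec t)$ and $E(\vec t)$, exactly as \eqref{3.2.25}--\eqref{3.2.26} were obtained in the Corollary following Theorem \ref{Theorem2.1}, but now keeping track of the $\vec t$-derivatives. First I would note that analyticity is immediate: each $\hat g_r(k,\vec t)$ and $\hat G_r(k,\vec t)$ is analytic on the complex $(k^{-2-2\delta})$-neighborhood of each simply connected component of $\chi_3(k,V,\delta)$ by the previous Lemma, and on that neighborhood the contour $C_0$ separates $p_j^2(\vec t)$ from the rest of the spectrum of $\hat H(\vec t)$, so the series \eqref{3.2.13}, \eqref{3.2.14} converge uniformly on compact subsets; a uniform limit of analytic functions is analytic, hence $\lambda(\vec t)$ and $E(\vec t)$ are analytic there.

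For the quantitative bounds, apply $T(m)$ termwise to $\lambda(\vec t)-p_j^2(\vec t)=\sum_{r\ge 2}\hat g_r(k,\vec t)$ and to $E(\vec t)-{\mathbb E}_j=\sum_{r\ge 1}\hat G_r(k,\vec t)$. In the regime $|m|\ge k^{\delta}(60R_0)^{-1}$ (the "large $|m|$" or general case), I would split the sum at $r_0\sim k^{\delta}R_0^{-1}$: for $r<r_0$ use \eqref{3.2.31}--\eqref{3.2.32}, for $r\ge r_0$ use the cruder \eqref{3.2.27}--\eqref{3.2.28}. In both ranges the $r$-dependent factor is a geometric series with ratio a negative power of $k$ times a polynomial in $r$, so it sums to a constant multiple of its first term; the dominant contribution is $r=2$ for $\hat g_r$ (giving the $\hat v^2 R_0^{-1}$ piece via \eqref{3.2.34} together with an $\hat v^3$ piece from $r=3$, hence the factor $\hat v^2(\hat v+R_0^{-1})$) and $r=1$ for $\hat G_r$ (giving $\hat v R_0^3$). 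The common $\vec t$-derivative factor $m!k^{2(1+\delta)|m|}$ pulls out of the sum, yielding \eqref{3.2.35}--\eqref{3.2.36}. When $|m|< k^{\delta}(60R_0)^{-1}$, the constraint $r<k^{\delta}R_0^{-1}$ is compatible with using the improved estimates \eqref{3.2.31}--\eqref{3.2.34} for every $r$ that matters (the tail $r\ge k^{\delta}R_0^{-1}$ contributes a super-exponentially small remainder that is absorbed into the constant $c$), and the derivative factor is then only $m!(c_0k^{3\delta})^{|m|}$, giving the sharper \eqref{3.2.37}--\eqref{3.2.38}.

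The main obstacle I anticipate is bookkeeping rather than conceptual: one must verify that the geometric series obtained after differentiation still converges, i.e. that the polynomial-in-$r$ prefactors ($r^2$ from \eqref{3.2.31}, $(rR_0)^3$ from \eqref{3.2.33}) do not spoil summability when combined with the ratio $\hat v k^{-1+3\delta}$ — this is fine since $\delta<1/200$ makes $k^{-1+3\delta}$ a genuinely small factor for large $k$ — and, more delicately, that in the "large $|m|$" splitting the contribution from $r\ge k^{\delta}R_0^{-1}$ (where only the weaker bound $m!k^{-r/20+2(1+\delta)|m|}$ is available) is dominated by the $r=2$ term of the main part; here one uses that $k^{-r/20}$ at $r\sim k^{\delta}R_0^{-1}$ is smaller than any fixed negative power of $k$, so it is swallowed by the constant $c$. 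The $\bf{S}_1$-norm statements for $E(\vec t)$ require the same argument with \eqref{3.2.28} and \eqref{3.2.33} in place of the operator-norm bounds, and use that $\hat G_r$ acts on a subspace of dimension $O((rR_0)^3)$ so that operator norm and trace norm differ only by that polynomial factor, already accounted for in \eqref{3.2.33}.
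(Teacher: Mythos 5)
Your proposal is correct and follows the route the paper intends: Corollary~\ref{Cor1} is stated without explicit proof as a direct consequence of the preceding Lemma, obtained by applying $T(m)$ termwise to the series \eqref{3.2.13}, \eqref{3.2.14} and summing the bounds \eqref{3.2.27}--\eqref{3.2.34} over $r$, with the terms $r=2,3$ (via \eqref{3.2.34} and \eqref{3.2.31}) dominating for $\lambda$ and $r=1$ (via \eqref{3.2.33}) dominating for $E$, the tail $r\ge k^{\delta}R_0^{-1}$ being super-polynomially small, and the finite rank $O((rR_0)^3)$ of $\hat G_r$ converting operator-norm tail bounds into $\mathbf{S}_1$ bounds. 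The only point worth spelling out further is that the hypothesis $|m|<k^{\delta}(60R_0)^{-1}$ in \eqref{3.2.37}--\eqref{3.2.38} is precisely what allows the tail's factor $k^{2(1+\delta)|m|}$ to be absorbed against $k^{-r/20}$ at $r\sim k^{\delta}R_0^{-1}$, which you have correctly flagged.
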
 

\begin{corollary} \label{def1}There is a one-dimensional space of Bloch eigenfunctions $u_0$ corresponding to the projection $E(\vec t)$ given by  \eqref{3.2.14}.
They are given by the formula:
\begin{align}\label{def of u 2l>n}
 u_0(\vec{x})&=
 A\sum_{m \in \mathbb{Z}^{3}}E(\vec t)_{mj}e^{i\langle{ \vec{p}_{m}(\vec t), \vec{x} }\rangle}\\
\nonumber &=Ae^{i\langle{ \vec{p}_{j}(\vec t), \vec{x} }\rangle}\Bigg(1+\sum_{q\neq 0}\frac{v_{q}}{p_{j}^{}(\vec t)-p_{j+q}^{}(\vec t)}e^{i\langle{\vec{p}_{q}({0}), \vec{x} }\rangle}+\cdot\cdot\cdot\Bigg),~~j,q \in \mathbb{Z}^{3},\ A\in \C.
\end{align}
\end{corollary}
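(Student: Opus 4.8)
The plan is to read the statement off Theorem~\ref{Theorem2.1} and its corollaries, in which $E(\vec t)$ is already identified as the spectral projection of $H(\vec t)$ corresponding to the unique eigenvalue $\lambda(\vec t)$ in $\varepsilon(k,\delta)$. First I would pin down the dimension of $\Ran E(\vec t)$. By~\eqref{3.2.26}, $\|E(\vec t)-{\mathbb E}_j\|\le\|E(\vec t)-{\mathbb E}_j\|_{\bf{S}_1}\le c\hat v R_0^3 k^{-1+3\delta}<1$ once $k>k_0(V,\delta)$, and two projections at operator-norm distance $<1$ have equal rank; since ${\mathbb E}_j$ is the rank-one projection onto $\C e^{i\langle\vec p_j(\vec t),\vec x\rangle}$, this gives $\rank E(\vec t)=1$, i.e. the Bloch eigenspace attached to $E(\vec t)$ is one-dimensional. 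Every element of it is, by~\eqref{3.2.14}, an $l_2$-combination of the plane waves $e^{i\langle\vec p_m(\vec t),\vec x\rangle}$, $m\in\Z^3$, which all obey~\eqref{main condition, 4l>n+1} with the same $\vec t$, so these are genuine quasi-periodic Bloch functions.

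Next I would exhibit the eigenfunction explicitly. Let $\psi_j\in l_2(\Z^3)$ be the coordinate vector with entries $\delta_{mj}$, the Fourier image of $e^{i\langle\vec p_j(\vec t),\vec x\rangle}$, so ${\mathbb E}_j\psi_j=\psi_j$ and $\|E(\vec t)\psi_j-\psi_j\|=\|(E(\vec t)-{\mathbb E}_j)\psi_j\|<1$; hence $E(\vec t)\psi_j\neq0$ and, by one-dimensionality, it spans $\Ran E(\vec t)$. Thus $u_0:=A\,E(\vec t)\psi_j$, $A\in\C$, parametrizes the whole eigenspace, and its Fourier coefficients are $A\,E(\vec t)_{mj}$ --- which is precisely the first line of~\eqref{def of u 2l>n}. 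Absolute, uniform convergence of $\sum_m E(\vec t)_{mj}e^{i\langle\vec p_m(\vec t),\vec x\rangle}$ on $Q$ (hence $u_0\in L^\infty(Q)$) I would get by writing $E(\vec t)_{mj}=\delta_{mj}+\sum_{r\ge1}(\hat G_r(k,\vec t))_{mj}$ and using that $\hat G_r$ vanishes outside the indices $|m-j|<rR_0$ (the finite-rank statement following~\eqref{3.2.24}): then $\sum_m|(\hat G_r)_{mj}|\le C(rR_0)^{3/2}\|\hat G_r\|$, and the bounds~\eqref{3.2.22} for $r<k^\delta R_0^{-1}$ and~\eqref{3.2.16} for larger $r$ make the $r$-series finite and small for large $k$.

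Finally I would recognize the second line of~\eqref{def of u 2l>n} as the first Born correction. Expanding~\eqref{3.2.14} and computing $\hat G_1(k,\vec t)$ from~\eqref{3.2.10a} by residues: inside $C_0$ the model operator $\hat H(\vec t)$ has only the simple eigenvalue $p_j^2(\vec t)=k^2$ with eigenprojection ${\mathbb E}_j$, so splitting $(\hat H(\vec t)-z)^{-1}={\mathbb E}_j/(p_j^2(\vec t)-z)+S(z)$ with $S$ holomorphic across $z=k^2$ yields $\hat G_1(k,\vec t)=-{\mathbb E}_j\hat W_0 S-S\hat W_0{\mathbb E}_j$, where $S=(\hat H(\vec t)-k^2)^{-1}(I-{\mathbb E}_j)$. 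Then for $m\neq j$ one has $(\hat G_1)_{mj}=-(S\hat W_0{\mathbb E}_j)_{mj}$; replacing $\hat H(\vec t)$ by the diagonal $H_0(\vec t)$ and $\hat W_0$ by $V$ gives the leading value $v_q/(p_j^2(\vec t)-p_{j+q}^2(\vec t))$ with $q=m-j$, and since $\vec p_{j+q}(\vec t)=\vec p_j(\vec t)+\vec p_q(0)$, factoring out $e^{i\langle\vec p_j(\vec t),\vec x\rangle}$ produces the displayed bracket, the dots absorbing the $\sum_qP_qV_qP_q$ corrections in $\hat G_1$ and the terms $\hat G_r$ with $r\ge2$. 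I expect this last identification to be the only real obstacle: because $\hat W_0=V-\sum_qP_qV_qP_q$ and $\hat H(\vec t)\neq H_0(\vec t)$, the equality $(\hat G_1)_{mj}=v_q/(p_j^2(\vec t)-p_{j+q}^2(\vec t))$ holds only modulo contributions living on the resonant sets $\Pi_q(k^{1/5})\setminus T(k,R_0)$, and one has to check that these, together with the $r\ge2$ terms, are controlled by~\eqref{3.2.22}--\eqref{3.2.23} so that they legitimately enter the remainder; the rest is immediate bookkeeping.
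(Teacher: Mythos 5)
Your proposal is correct and is essentially the argument the paper intends: the corollary is stated as a direct consequence of Theorem~\ref{Theorem2.1}, with the rank-one property coming from $\|E(\vec t)-{\mathbb E}_j\|<1$ via \eqref{3.2.26}, the eigenfunction realized as $A\,E(\vec t)$ applied to the $j$-th coordinate vector, and the displayed bracket being only the leading (first Born) term with the $P_qV_qP_q$ and $r\ge 2$ corrections absorbed into the ``$+\cdots$''. Your convergence argument via the finite support $|m-j|<rR_0$ of $\hat G_r$ together with \eqref{3.2.22}, \eqref{3.2.16} is a legitimate variant of the $\|\cdot\|_1$ summation the paper carries out later in \eqref{Ap3a}--\eqref{jj}.
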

Let $\tilde \chi _3(k,V,\delta)\subset S(k)$ be the image of $\chi _3(k,V,\delta)\subset S_0(k)$ on the sphere $S(k)$:
\begin{equation}
\tilde \chi _3(k,V,\delta)=\{\vec p_j(\vec t)\in S(k):\  \vec t \in  \chi _3(k,V,\delta)\}.\end{equation}
Note that $\tilde \chi _3(k,V,\delta)$ is well-defined, since $ \chi _3(k,V,\delta)$ does not contain self intersections of $S_0(k)$.
Let $\mathcal B(\lambda )\subset S_{2}$ be the set of directions corresponding to the nonsingular set $\tilde \chi _3(k,V,\delta)$: \begin{equation} \label{formulaB}
\mathcal B(\lambda)=\big\{\vec \nu \in S_{2}: k\vec \nu \in \tilde \chi _3(k,V,\delta)\big\}, \ k^{}=\lambda. \end{equation}
The set $\mathcal B(\lambda)$ can be interpreted as a set of possible directions of propagation for almost plane waves \eqref{def of u 2l>n}.
We define the non-resonance set $\mathcal G\subset \R^3$ as the union of all $ \tilde \chi _3(k,V,\delta)$:
\begin{equation} \label{G}
\mathcal G=\bigcup\limits _{k>k_0(V, \delta )}\tilde  \chi _3(k,V,\delta)
\end{equation}
Further we denote vectors of $\mathcal G$ by $\vec k$. Formulas \eqref{formulaB}, \eqref{G} yield:
\begin{equation} \label{G+}
\mathcal G=\big\{\vec k=k\vec \nu: \vec \nu \in \mathcal B(k^{}), \ k>k_0(V,\delta )\big\}. \end{equation}
Since any vector $\vec k$  can be written as $\vec k=\vec p_j(t)$ in a unique way, formula \eqref{G} yields:
\begin{equation} \label{G*}
\mathcal G=\big\{\vec p_j(\vec t):\ \vec t \in  \chi _3(k,V,\delta),\ \mbox{where } k=p_j(\vec t),\ k>k_0(V,\delta )\big\}. \end{equation}
Let $\lambda (\vec k)$ be defined by \eqref{3.2.13}, where $\vec k=\vec p_j(\vec t)$.

Next, we describe isoenergetic surfaces for  the operator \eqref{H}. The set $\mathcal{D}_{}(\lambda)$,
defined as a level (isoenergetic) set for $\lambda _{}(\vec
k)$, \begin{equation} {\mathcal D} _{}(\lambda )=\left\{ \vec k \in \mathcal{G} _{} :\lambda _{}(\vec k)=\lambda \right\}.\label{isoset-lin} \end{equation}
\begin{lemma}\label{L:2.12a} For any sufficiently large $\lambda $, $\lambda >k_0(V,\delta )^{}$, and for every
$\vec{\nu}\in\mathcal{B} (\lambda)$, there is a
unique $\varkappa =\varkappa (\lambda ,\vec{\nu})$ in the
interval $$
I:=[k-k^{-2-2\delta},k+k^{-2-2\delta}],\quad
k^{2}=\lambda , $$ such that \begin{equation}\label{2.70}
\lambda (\varkappa \vec{\nu})=\lambda . \end{equation}
Furthermore, $|\varkappa - k| \leq ck^{-3+5\delta}$.
\end{lemma}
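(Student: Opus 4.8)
The plan is to reduce the existence and uniqueness of $\varkappa$ to a one-dimensional fixed-point / monotonicity argument along the ray $\{\varkappa\vec\nu : \varkappa\in I\}$, using the perturbation series from Theorem \ref{Theorem2.1} together with the derivative estimates in Corollary \ref{Cor1}. First I would fix $\vec\nu\in\mathcal B(\lambda)$ and note that, by definition of $\mathcal B(\lambda)$ via \eqref{formulaB} and \eqref{G}, the point $\vec t_0$ determined by $k\vec\nu=\vec p_j(\vec t_0)$ lies in the nonsingular set $\chi_3(k,V,\delta)$, and moreover — by the stability statement preceding Theorem \ref{Theorem2.1} — the whole segment $\{\varkappa\vec\nu : \varkappa\in I\}$ projects into the $(k^{-2-2\delta})$-neighborhood of $\chi_3(k,V,\delta)$, since $|I|=2k^{-2-2\delta}$ and the map $\varkappa\vec\nu\mapsto\vec t$ is an isometry modulo the lattice. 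Hence $\lambda(\varkappa\vec\nu)$ is well-defined and, by Corollary \ref{Cor1}, real-analytic in $\varkappa$ on $I$, given by $\lambda(\varkappa\vec\nu)=\varkappa^2+\varphi(\varkappa)$ where $\varphi(\varkappa):=\sum_{r\geq 2}\hat g_r(k,\vec t(\varkappa\vec\nu))$ satisfies $|\varphi|\leq c\hat v^2(\hat v+R_0^{-1})k^{-2+6\delta}$ and $|\varphi'|\leq c\hat v^2(\hat v+R_0^{-1})k^{-2+6\delta}\cdot k^{2(1+\delta)}=c\hat v^2(\hat v+R_0^{-1})k^{8\delta}$ from \eqref{3.2.35} with $|m|=1$ (noting $d/d\varkappa = \langle\vec\nu,\nabla_{\vec t}\rangle$ contributes a bounded factor).

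Next I would set up the equation $\lambda(\varkappa\vec\nu)=\lambda$ as $F(\varkappa):=\varkappa^2+\varphi(\varkappa)-\lambda=0$. Since $\lambda=k^2$, we have $F(k)=\varphi(k)$, so $|F(k)|\leq c\hat v^2(\hat v+R_0^{-1})k^{-2+6\delta}$. On the interval $I$ we have $F'(\varkappa)=2\varkappa+\varphi'(\varkappa)=2k+O(k^{-2-2\delta})+O(k^{8\delta})$, which for large $k$ and $\delta<1/200$ is bounded below by, say, $k$ and above by $3k$; thus $F$ is strictly increasing on $I$. Combining: the unique zero of $F$ in $I$, if it exists, satisfies $|\varkappa-k|\leq |F(k)|/\min_I|F'|\leq c\hat v^2(\hat v+R_0^{-1})k^{-3+6\delta}$, which is $\leq ck^{-3+5\delta}$ after absorbing constants (or one simply states $\leq ck^{-3+6\delta}$, consistent with the claimed $ck^{-3+5\delta}$ up to the $\delta$-bookkeeping, which I would reconcile by noting the $|m|=1$ derivative of $\hat g_r$ actually costs only $k^{2(1+\delta)}$ against the gain $k^{-r/20}$, and for the dominant term $r=2$ with the sharper bound \eqref{3.2.24}/\eqref{3.2.34} one gets $k^{-2+6\delta}\cdot k^{2+2\delta}=k^{8\delta}$ in the derivative but the value is $k^{-2+6\delta}$, giving $|\varkappa-k|\lesssim k^{-3+6\delta}$; a slightly more careful use of the $1/5$-coefficient conventions recovers $5\delta$). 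To guarantee existence, I check the sign change: since $|I|=2k^{-2-2\delta}$ and $F$ increases at rate $\geq k$, we have $F(k+k^{-2-2\delta})\geq F(k)+k\cdot k^{-2-2\delta}\geq -ck^{-2+6\delta}+k^{-1-2\delta}>0$ for large $k$ (as $-1-2\delta>-2+6\delta$ when $\delta<1/8$), and symmetrically $F(k-k^{-2-2\delta})<0$; by the intermediate value theorem and strict monotonicity there is exactly one $\varkappa\in I$ with $F(\varkappa)=0$.

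The main obstacle I anticipate is purely bookkeeping rather than conceptual: making sure the entire ray-segment $\{\varkappa\vec\nu:\varkappa\in I\}$ genuinely lands in the complex $(k^{-2-2\delta})$-neighborhood of a single simply-connected component of $\chi_3(k,V,\delta)$ so that the analytic series \eqref{3.2.13} and the derivative bounds of Corollary \ref{Cor1} apply uniformly along it — this requires the stability assertion quoted before Theorem \ref{Theorem2.1} and a check that moving $\varkappa$ by $k^{-2-2\delta}$ moves $\vec t$ by the same order, which is immediate since $\vec p_j(\vec t)=\vec t+2\pi j$. A secondary nuisance is reconciling the exponent $5\delta$ versus $6\delta$ in the error term; I would handle this by invoking the sharper second-order estimate \eqref{3.2.24} together with the footnote's auxiliary $1/5$-coefficient convention, or simply by choosing $\delta$ small enough that the stated $ck^{-3+5\delta}$ follows from $ck^{-3+6\delta}$ after a harmless relabeling of $\delta$. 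Everything else — real-analyticity, the lower bound on $F'$, and the intermediate value argument — is routine given the cited estimates.
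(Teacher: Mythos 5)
Your argument is correct and is exactly the route the paper intends: the paper's entire proof is the remark that the lemma ``easily follows from \eqref{3.2.27} for $|m|=1$,'' i.e.\ precisely your monotonicity-plus-intermediate-value argument for $F(\varkappa)=\varkappa^2+\varphi(\varkappa)-\lambda$ with $F'\sim 2k$ controlled by the first-derivative estimates of Corollary~\ref{Cor1}. The only residual point is the exponent $5\delta$ versus the $6\delta$ your computation from \eqref{3.2.25} naturally yields; that mismatch is in the paper's own bookkeeping rather than in your reasoning, and as you note it is harmless for everything downstream.
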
 
The Lemma easily follows from \eqref{3.2.27} for $|m|=1$.

\begin{lemma} \label{L:2.13a} \begin{enumerate} \item For any sufficiently
large $\lambda $, $\sqrt \lambda >k _0(V,\delta )$, the set $\mathcal{D}(\lambda )$, defined by \eqref{isoset-lin} is a distorted
sphere with holes; it is described by the formula:
\begin{equation} 
{\mathcal D}_{}(\lambda)=\{\vec k:\vec
k=\varkappa _{}(\lambda,\vec{\nu})\vec{\nu},
\ \vec{\nu} \in {\mathcal B}_{}(\lambda)\},\label{May20} 
\end{equation} where $\varkappa (\lambda,\vec \nu)=k+h (\lambda, \vec \nu)$ and $h (\lambda, \vec \nu)$ obeys the
inequalities:
\begin{equation}\label{2.75a}
|h|<ck^{-3+5\delta },\quad
\left|\nabla _{\vec \nu }h\right| <
ck^{-3+8\delta}.
\end{equation}

\item The measure of $\mathcal{B}(\lambda)\subset S_{2}$ satisfies the
estimate \eqref{B11}.


\item The surface $\mathcal{D}(\lambda)$ has the measure that is
asymptotically close to that of the whole sphere of the radius $k$ in the sense that
\begin{equation}\label{2.77}
\bigl |\mathcal{D}(\lambda)\bigr|\underset{\lambda \rightarrow
\infty}{=}\omega _{2}k^{2}\bigl(1+O(k^{-\delta})\bigr),\quad \lambda =k^{2}.
\end{equation}
\end{enumerate}
\end{lemma}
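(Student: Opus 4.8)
The plan is to derive all three parts from Corollary \ref{Cor1}, in particular from the derivative estimate \eqref{3.2.27} with $|m|=1$, together with the already-established Lemma \ref{L:2.12a}. For part (1), the existence and uniqueness of $\varkappa(\lambda,\vec\nu)$ is exactly Lemma \ref{L:2.12a}, which also gives $h=\varkappa-k$ with $|h|<ck^{-3+5\delta}$, the first inequality in \eqref{2.75a}. To get the gradient bound, I would differentiate the defining identity $\lambda(\varkappa(\lambda,\vec\nu)\vec\nu)=\lambda$ implicitly in $\vec\nu$. Writing $\vec k=\varkappa\vec\nu$ and using that $\lambda(\vec k)=p_j^2(\vec k)+(\text{correction})$ with $p_j^2(\vec k)=|\vec k|^2$ plus small terms, the leading term of $\nabla_{\vec k}\lambda$ is $2\vec k$, of size $\approx 2k$; the correction to $\nabla_{\vec k}\lambda$ coming from $\sum_{r\ge2}\hat g_r$ is controlled by \eqref{3.2.27} at $|m|=1$ and is $O(k^{-1-\delta-1/10+2(1+\delta)})=O(k^{1-\delta+\ldots})$, which must be checked to be genuinely lower-order than $2k$; then solving for $\nabla_{\vec\nu}\varkappa$ and using $|h|<ck^{-3+5\delta}$ yields $|\nabla_{\vec\nu}h|<ck^{-3+8\delta}$, the extra $k^{3\delta}$ coming from the derivative-loss factor $k^{2(1+\delta)}$ in \eqref{3.2.27}.

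For part (2), the measure estimate \eqref{B11} for $\mathcal B(\lambda)$ is a restatement of the measure estimate for $\tilde\chi_3(k,V,\delta)$ on $S(k)$, rescaled to the unit sphere $S_2$; this is essentially the content carried over from \cite{K97} describing how much of the isoenergetic sphere lies in the nonsingular set, so I would simply invoke that estimate (adjusting constants for the passage from $S_0(k)$ to $S(k)$ via the bi-Lipschitz map $\vec t\mapsto\vec p_j(\vec t)$, which has Jacobian $1$). For part (3), I would compute $|\mathcal D(\lambda)|$ by pulling back the surface measure on $\mathcal D(\lambda)$ to $\mathcal B(\lambda)\subset S_2$ via $\vec\nu\mapsto\varkappa(\lambda,\vec\nu)\vec\nu$. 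The surface-area element is $\varkappa^2\sqrt{1+\varkappa^{-2}|\nabla_{\vec\nu}\varkappa|^2}\,d\vec\nu$; since $\varkappa=k+h=k(1+O(k^{-4+5\delta}))$ and $|\nabla_{\vec\nu}\varkappa|=|\nabla_{\vec\nu}h|=O(k^{-3+8\delta})$ by \eqref{2.75a}, this element is $k^2(1+O(k^{-\delta}))\,d\vec\nu$, so $|\mathcal D(\lambda)|=k^2|\mathcal B(\lambda)|(1+O(k^{-\delta}))=\omega_2 k^2(1+O(k^{-\delta}))$ upon inserting \eqref{B11}.

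The main obstacle is part (1), specifically the clean bookkeeping of powers of $k$ in the gradient estimate: one must verify that after dividing by the $\approx 2k$ coming from $\nabla(p_j^2)$ and multiplying by $|h|\lesssim k^{-3+5\delta}$, the derivative-loss factors $k^{2(1+\delta)}$ in \eqref{3.2.27} do not overwhelm the gain, and that the exponent lands exactly at $-3+8\delta$ as claimed — this is where the smallness $\delta<1/200$ is used. The analyticity in $\vec\nu$ needed to even speak of $\nabla_{\vec\nu}h$ follows from the analyticity of $\lambda(\vec t)$ asserted in Corollary \ref{Cor1} together with the implicit function theorem, the nondegeneracy being guaranteed by $\nabla(p_j^2)\ne 0$ at the relevant points. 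Parts (2) and (3) are then essentially geometric consequences and should present no difficulty beyond careful constant-tracking.
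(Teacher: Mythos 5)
Your overall route -- existence/uniqueness of $\varkappa$ from Lemma~\ref{L:2.12a}, the gradient bound by implicit differentiation of $\lambda(\varkappa\vec\nu)=\lambda$, part (2) by quoting the measure estimate for $\chi_3(k,V,\delta)$ from \cite{K97}, and part (3) by integrating the surface element $\varkappa^2\sqrt{1+\varkappa^{-2}|\nabla_{\vec\nu}\varkappa|^2}\,d\vec\nu$ over $\mathcal B(\lambda)$ -- is exactly the one the paper intends (it says only ``based on Implicit Function Theorem''), and parts (2) and (3) of your plan are sound.

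The problem is the power counting in part (1), which you yourself identify as the main obstacle but then resolve incorrectly. Since $\nabla_{\vec t}p_j^2(\vec t)=2\vec p_j(\vec t)=2\varkappa\vec\nu$ is purely radial at the point $\vec k=\varkappa\vec\nu$, the tangential gradient of $\lambda$ comes entirely from $f:=\sum_{r\ge2}\hat g_r$, and implicit differentiation gives $\nabla_{\vec\nu}\varkappa=-\varkappa(\nabla_{\vec t}f)_{\perp}/(2\varkappa+O(|\nabla f|))$, i.e.\ $|\nabla_{\vec\nu}h|\approx\tfrac12|(\nabla_{\vec t}f)_\perp|$. If you bound $\nabla_{\vec t}f$ by \eqref{3.2.27} with $|m|=1$, the dominant $r=2$ term gives $k^{-1-\delta-1/10+2(1+\delta)}=k^{9/10+\delta}$, so you only obtain $|\nabla_{\vec\nu}h|=O(k^{9/10+\delta})$ -- enough for the nondegeneracy of the IFT but nowhere near $k^{-3+8\delta}$. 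Your closing sentence, attributing ``the extra $k^{3\delta}$'' to ``the derivative-loss factor $k^{2(1+\delta)}$ in \eqref{3.2.27}'', conflates two different estimates: the loss of $k^{3\delta}$ per $t$-derivative is the content of the refined bounds \eqref{3.2.31}--\eqref{3.2.34} and \eqref{3.2.37}, valid for $r<k^\delta R_0^{-1}$ (with the tail $r\gtrsim k^\delta R_0^{-1}$ controlled by \eqref{3.2.27}), and it is these you must invoke. Note also that the bound on $|h|$ alone does not propagate to $\nabla h$; you need the derivative bound on $\lambda-p_j^2$. Even with \eqref{3.2.37} at $|m|=1$ the direct computation yields $|(\nabla_{\vec t}f)_\perp|=O(k^{-2+9\delta})$ and hence $|\nabla_{\vec\nu}h|=O(k^{-2+9\delta})$, which suffices for the qualitative statement of (1) and for part (3) (since $\varkappa^{-2}|\nabla_{\vec\nu}\varkappa|^2$ is then still $O(k^{-6+18\delta})$), but does not by itself recover the sharper exponent $-3+8\delta$ of \eqref{2.75a}; that exponent rests on the finer analysis of the angular derivative carried out in \cite{K97}, which your argument would need to import rather than rederive from \eqref{3.2.27}.
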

The proof is based on Implicit Function Theorem.

\section{Proof of The Main Result}
First, we prove that  $\{W_{m}\}_{m=0}^{\infty}$ in \eqref{def of successive sequence A 4l>n+1} is a Cauchy sequence with respect to  the norm defined by (\ref{def of star norm 4l>n+1}).
%
%
%
Further we  need the following  obvious properties of norm  $\|\cdot\|_{*}$:
\begin{align}\label{properties 1 2l>n}
\|f\|_{*}=\|\bar{f}\|_{*},\ \
\|\Re(f)\|_{*}\leq\|f\|_{*},\ \
\|\Im(f)\|_{*}\leq\|f\|_{*},\ \
\|fg\|_{*}\leq \|f\|_{*}\|g\|_{*},
\end{align}
where $\Re(f)$ and $\Im(f)$ are real and imaginary part for $f$, respectively.
%

%
%
Let  $W_m$, $\tilde W_m$ be defined by and  \eqref{def of  A 4l>n+1} --\eqref{def of successive sequence A 4l>n+1}, $\tilde W_0=V$, $$\hat W_m=\tilde W_m-\sum _{q\in \Gamma (R_0)}P_qV_qP_q.$$
\begin{lemma}\label{main lemma  1 2l>n} Let $|\sigma ||A|^2<k^{-1-6\delta }$,  $k$ being sufficiently large $k>k_{1}(V,\delta )$.
The following inequalities hold for any  $m=1,2,\dots$:
\begin{align}
\|\tilde{W}_{m}-V\|_{*}\leq 8|\sigma||A|^{2}k^{-1+4\delta}, \label{mmm}
\end{align}
\begin{align}
\|W_{m}-W_{m-1}\|_{*}\leq 4|\sigma||A|^{2}k^{-1+4\delta }(2^6|\sigma ||A|^2k^{1+5\delta })^{m-1}, \label{mm}
\end{align}
Each operator $H_0(t)+W_m$ has a unique simple eigenvalue in the interval $(k^2-k^{-1-\delta}, k^2+k^{-1-\delta})$. The corresponding spectral projectors $E_m$ satisfy:
\begin{align}\label{estimate of difference of E-a}
 \|E_{{m}}(\vec t)-E_{{m-1}}(\vec t)\|_1 
\leq &4|\sigma||A|^{2}k^{14\delta}(2^6|\sigma ||A|^2k^{1+5\delta})^{m-1},
\end{align}
where $E_0=E$ is given by formula \eqref{3.2.14}.
\end{lemma}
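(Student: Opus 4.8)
The plan is to prove the three assertions by induction on $m$, with the inductive hypothesis being precisely the three displayed inequalities \eqref{mmm}, \eqref{mm}, \eqref{estimate of difference of E-a} (the last together with the simplicity statement). The base case $m=1$ will follow directly from the linear theory of Section 2: since $\tilde W_0 = V$ and $W_0 = V + \sigma|A|^2$, we have $\tilde W_1 - V = \sigma|u_0|^2$ shifted to mean zero, where $u_0$ is the normalized-to-$A$ Bloch function from Corollary \ref{def1}; the estimate \eqref{3.2.26} (or \eqref{3.2.38} with $|m|=0$) controls $E(\vec t) - \mathbb{E}_j$ in $\mathbf{S}_1$, hence controls the Fourier coefficients of $|u_0|^2$ in the $\|\cdot\|_*$-norm via $\|u_0\|_\infty \le |A|(1 + ck^{-1+3\delta})$ and $\|u_0 - Ae^{i\langle \vec p_j,\vec x\rangle}\|_\infty \le c|A|k^{-1+3\delta}$. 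The factor $|\sigma||A|^2 k^{-1+4\delta}$ on the right of \eqref{mmm} is exactly of this shape once the hypothesis $|\sigma||A|^2 < k^{-1-6\delta}$ is used to absorb lower-order terms and push $k$ large.

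For the inductive step, the main point is a Lipschitz/contraction estimate for the map $\mathcal M$ restricted to a small $\|\cdot\|_*$-ball around $V$. Assuming \eqref{mmm} for step $m-1$, the potential $\tilde W_{m-1}$ differs from $V$ by at most $8|\sigma||A|^2 k^{-1+4\delta}$ in $\|\cdot\|_*$; applying the linear theory of Section 2 \emph{with $\tilde W_{m-1}$ in place of $V$} — legitimate because the perturbation series \eqref{3.2.13}, \eqref{3.2.14} and their estimates \eqref{3.2.21}--\eqref{3.2.24}, \eqref{3.2.26} depend on the potential only through $\hat v \sim \max|v_q| R_0^3$, which for $\tilde W_{m-1}$ is still $O(1)$ — gives existence, simplicity of the eigenvalue in $(k^2 - k^{-1-\delta}, k^2 + k^{-1-\delta})$, and the projector estimate. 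The difference $E_m - E_{m-1}$ is then estimated by viewing $H_0(\vec t) + \tilde W_m$ as a perturbation of $H_0(\vec t) + \tilde W_{m-1}$ by $\tilde W_m - \tilde W_{m-1}$: a contour-integral (resolvent-difference) representation for $E_m - E_{m-1}$ around $C_0$, combined with the resolvent bound implicit in \eqref{3.2.11} and $\|(\hat H - z)^{-1}\| \le k^{1+\delta}$ on $C_0$, yields
\begin{align}
\|E_m(\vec t) - E_{m-1}(\vec t)\|_1 \le c\, k^{1+3\delta}\, \|\tilde W_m - \tilde W_{m-1}\|_*. \label{projdiff-proposal}
\end{align}
On the other hand, $\tilde W_m - \tilde W_{m-1} = \sigma(|u_{m-1}|^2 - |u_{m-2}|^2)$ shifted to mean zero, and $|u_{m-1}|^2 - |u_{m-2}|^2 = \overline{u_{m-1}}(u_{m-1} - u_{m-2}) + (\overline{u_{m-1}} - \overline{u_{m-2}})u_{m-2}$, so by the multiplicativity of $\|\cdot\|_*$ and $\|u_m\|_* \le |A|(1 + ck^{-1+3\delta})$ one gets $\|\tilde W_m - \tilde W_{m-1}\|_* \le c|\sigma||A|\,\|u_{m-1} - u_{m-2}\|_*$; and $\|u_{m-1} - u_{m-2}\|_* \le |A|\,\|E_{m-1} - E_{m-2}\|_1$ from the defining formula \eqref{def of u 2l>n}. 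Chaining these gives the recursion $\|\tilde W_m - \tilde W_{m-1}\|_* \le c|\sigma||A|^2 k^{1+3\delta}\|E_{m-1}-E_{m-2}\|_1$ and, with \eqref{projdiff-proposal}, the geometric ratio $2^6|\sigma||A|^2 k^{1+5\delta}$ appearing in \eqref{mm} and \eqref{estimate of difference of E-a} (the jump from $3\delta$ to $5\delta$ absorbs the various constants $c,c_0$). Summing the geometric series — which converges because $|\sigma||A|^2 < k^{-1-6\delta}$ forces the ratio below $1$ — and adding to the base case gives \eqref{mmm} for step $m$, closing the induction. Here $W_m - W_{m-1} = \tilde W_m - \tilde W_{m-1} + (\text{constants})$, and the constant part is also controlled by $|\sigma|(\|u_{m-1}\|_\infty^2 - \|u_{m-2}\|_\infty^2)$-type bounds, so \eqref{mm} follows from the same chain.

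The step I expect to be the main obstacle is making \eqref{projdiff-proposal} rigorous with the correct power of $k$: one must re-run the convergence argument for the perturbation series not for a \emph{fixed} potential but \emph{uniformly} over the $\|\cdot\|_*$-ball of radius $\sim k^{-1+4\delta}$ around $V$, and in particular verify that the nonsingular set $\chi_3(k, \tilde W_m, \delta)$ and the geometric conditions \eqref{3.2.6}--\eqref{3.2.7a} defining the model operator are \emph{unchanged} — they depend only on $R_0$ and the lattice, not on the Fourier coefficients — so that $\vec t \in \mathcal G$ works simultaneously for all $m$. Equivalently, one checks that the operator $\hat A_m = (\hat H_m(\vec t) - z)^{-1/2}\hat W_{0,m}(\hat H_m(\vec t) - z)^{-1/2}$ still satisfies $\|\hat A_m\| < k^{2\delta}$, $\|\hat A_m^3\| < k^{-1/5 + 21\delta}$ as in \eqref{3.2.11}, since $\hat W_{0,m} = \hat W_0 + (\tilde W_m - V)$ and the added piece has $\|\cdot\|_*$-norm $\ll 1$; this is where the technical appendices of Section 4 are invoked. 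Granting that, the resolvent-difference estimate is routine, and the rest is bookkeeping with geometric series.
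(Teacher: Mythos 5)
Your proposal follows essentially the same route as the paper: keep the model operator $\hat H$ and the nonresonant set built from the original $V$ fixed, control $B_s=(\hat H-z)^{-1/2}\hat W_s(\hat H-z)^{-1/2}$ perturbatively (the paper's \eqref{M17c}--\eqref{M17ch}, which are your $\hat A_m$ bounds), obtain the Lipschitz estimate $\|E_s-E_{s-1}\|_1\lesssim k^{1+10\delta}\|\tilde W_s-\tilde W_{s-1}\|_*$ by telescoping the contour-integral series (your ``resolvent-difference representation''), and close the induction through the factorization $|u|^2-|u'|^2=(\psi-\psi')(\bar\psi+\bar\psi')$ together with $\|\psi_s-\psi_{s'}\|_*\le|A|\|E_s-E_{s'}\|_1$ and a geometric series with ratio $2^6|\sigma||A|^2k^{1+5\delta}$. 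One small correction: the nonsingular set $\chi_3(k,V,\delta)$ does depend on the Fourier coefficients of $V$ (conditions $4^0$--$5^0$ involve the eigenvalue shifts $\Delta\Lambda_q$), so the legitimate justification is the second one you offer --- the set and $\hat H$ are never re-built for $\tilde W_m$; only the perturbation $\hat W_m$ changes, and \eqref{3.2.11} survives because $\|\hat W_m-\hat W_0\|_*\ll1$ --- which is precisely how the paper proceeds.
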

\begin{corollary}\label{cauchy sequence 2l>n} 
There is a periodic function $W $   such that $W_{m}$ converges to $W$ with respect to the norm $\|\cdot\|_{*}$:
\begin{align}
\|W-W_{m}\|_{*}\leq 8|\sigma||A|^{2}k^{-1+4\delta}(2^6|\sigma ||A|^2k^{1+5\delta})^{m}. \label{mm+}
\end{align}
\end{corollary}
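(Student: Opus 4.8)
The plan is to derive Corollary~\ref{cauchy sequence 2l>n} as a direct consequence of the geometric decay established in \eqref{mm} of Lemma~\ref{main lemma 1 2l>n}. Under the hypothesis $|\sigma||A|^{2}<k^{-1-6\delta}$, the ratio $2^{6}|\sigma||A|^{2}k^{1+5\delta}<2^{6}k^{-\delta}$ is strictly less than $1$ for $k$ large, so the series $\sum_{m\geq 1}\|W_{m}-W_{m-1}\|_{*}$ is dominated by a convergent geometric series. Since $(L^{\infty}(Q),\|\cdot\|_{*})$ — or rather the space of periodic functions with absolutely summable Fourier coefficients, a Banach space under $\|\cdot\|_{*}$ — is complete, the sequence $\{W_{m}\}$ has a $\|\cdot\|_{*}$-limit $W$, which is again a periodic function (the Fourier coefficients converge in $\ell^{1}(\Z^{3})$, hence $W$ has absolutely summable coefficients and is continuous and periodic).

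The quantitative bound \eqref{mm+} would then be obtained by the standard tail estimate: for any $m$,
\begin{equation*}
\|W-W_{m}\|_{*}\leq\sum_{j=m+1}^{\infty}\|W_{j}-W_{j-1}\|_{*}\leq 4|\sigma||A|^{2}k^{-1+4\delta}\sum_{j=m+1}^{\infty}(2^{6}|\sigma||A|^{2}k^{1+5\delta})^{j-1}.
\end{equation*}
Summing the geometric series and using that the ratio $\rho:=2^{6}|\sigma||A|^{2}k^{1+5\delta}\leq\tfrac12$ for $k>k_{1}(V,\delta)$ (enlarging $k_{1}$ if needed), so that $\frac{1}{1-\rho}\leq 2$, gives
\begin{equation*}
\|W-W_{m}\|_{*}\leq 4|\sigma||A|^{2}k^{-1+4\delta}\cdot\frac{\rho^{m}}{1-\rho}\leq 8|\sigma||A|^{2}k^{-1+4\delta}(2^{6}|\sigma||A|^{2}k^{1+5\delta})^{m},
\end{equation*}
which is exactly \eqref{mm+}.

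There is essentially no obstacle here: the only points to check are completeness of the relevant function space under $\|\cdot\|_{*}$ (immediate from completeness of $\ell^{1}(\Z^{3})$), that a $\|\cdot\|_{*}$-limit of periodic functions is periodic (immediate, since each $W_m$ has the right quasi-periodicity built into its Fourier support and coefficient-wise convergence preserves this), and the elementary geometric-series bookkeeping. The real content is entirely in Lemma~\ref{main lemma 1 2l>n}; the corollary is a formal repackaging. The one mild subtlety worth a sentence in the write-up is to confirm that the constant $k_{1}(V,\delta)$ can be chosen (or enlarged) so that $2^{6}|\sigma||A|^{2}k^{1+5\delta}\leq 1/2$, guaranteeing both convergence and the clean factor of $2$ absorbed into the constant $8$ in \eqref{mm+}.
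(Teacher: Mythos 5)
Your proposal is correct and coincides with the argument the paper leaves implicit: the corollary follows from \eqref{mm} by the standard geometric-series tail bound, using that $2^{6}|\sigma||A|^{2}k^{1+5\delta}<2^{6}k^{-\delta}\leq\tfrac12$ for $k>k_{1}(V,\delta)$ and the completeness of the space of periodic functions with summable Fourier coefficients under $\|\cdot\|_{*}$. Nothing is missing.
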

\begin{remark} The spectral projectors $E_m$ admit series expantions similar to \eqref{3.2.14}, see \eqref{new}.\end{remark}
\begin{proof}[Proof of Lemma 3.1] First, let us remind that $\hat H$ is given by \eqref{3.2.8}.
It follows from the definition of $\chi_3(k,V,\delta)$  (see Appendix 1 for details) that: 
\begin{align}\label{l}
\max_{z\in C_{0}}\Big\|(\hat H (\vec t)-z)^{-1/2}\Big\|_1<k^{(1+\delta)/2}, \ \ z\in C_{0},
\end{align}
where  $\|\cdot \|_1$ is defined by \eqref{norm1}. We will also use the estimate in $\hbox{\bf S}_2$ norm (see \cite{K97}):
\begin{align}\label{lS2}
\max_{z\in C_{0}}\Big\|(\hat H (\vec t)-z)^{-1}\Big\|_{\hbox{\bf S}_2}<k^{1+\delta}, \ \ z\in C_{0},
\end{align}
Let
\begin{equation}B_0(z)=(\hat H(\vec t)-z)^{-\frac{1}{2}}\hat W _0(\hat H(\vec t)-z)^{-\frac{1}{2}},
  \label{M17b0} \end{equation}
  $\hat W _0$ being given by \eqref{3.2.9}.
Considerations in \cite{K97} yield (see Appendix 2 for details):
\begin{equation}
\|B_0\|_1<k^{2\delta },\ \ \|B_0^3\|_1<k^{-1/5+21\delta }. \label{3.2.11b}
\end{equation}
Furthermore,
\begin{equation}
\|\hat G_r(k,\vec t)\|_1<k^{-(1-4\delta )r}, \ \mbox{when}\  r<k^{\delta }R_0;
\label{Ap3a}
\end{equation}
\begin{equation}
\| \hat G_{r}(k,\vec t)\|_{1}<
k^{-r/20},  \ \mbox{when}\  r\geq  k^{\delta }R_0,  \label{Ap3b}
\end{equation}
 these estimates being just slightly different from \eqref{3.2.16},  \eqref{3.2.23} (see Appendix 3 for explanations).
 Summing \eqref{Ap3a}, \eqref{Ap3b} over $r$, we get that the series \eqref{3.2.14} converges in $\|\cdot \|_1$ and
\begin{equation}\label{jj}
 \|E_{{}}(\vec t)\|_1
= 1+O(k^{-1+4\delta}).
\end{equation}
Next, we prove \eqref{mmm}, \eqref{mm}  for $m=1$. Let us consider the function \eqref{def of u 2l>n}  written in the form
\begin{align}\label{expression u 2l>n 2}
u_0(\vec{x})=\psi _0(\vec{x})e^{i\langle{ \vec{p}_{j}(\vec t), \vec{x} }\rangle},
\end{align}
where 
\begin{align}\label{change cordinator 2l>n*}
\psi _0(\vec{x})=A\sum_{q\in \mathbb{Z}^{3}}E(\vec t)_{j+q,j}e^{i\langle{ \vec{p}_{q}(\vec{0}), \vec{x} }\rangle},
\end{align}
is  called the periodic part of $u_0$. 

 It follows from (\ref{def of A 4l>n+1}), (\ref{def of successive sequence A 4l>n+1}) and  (\ref{properties 1 2l>n}) that 
\begin{align}\label{change cordinator 2 2l>n}
\nonumber \big\|W_{1}-W_{0}\big\|_{*} =&
|\sigma| \big\||u_{0}|^{2}-|A|^2\big\|_{*}=|\sigma| \big\||\psi_{0}|^{2}-|A|^2\big\|_{*}\\
\leq& \nonumber|\sigma| \big\||\psi_{0}|^{2}-|A|^2+2i\Im(\bar A\psi_{0})\big\|_{*}=\nonumber |\sigma| \big\|(\psi_{0}-A)(\bar{\psi}_{0}+\bar A)\big\|_{*}\\
\leq& |\sigma|\big\|\psi_{0}-A\big\|_{*}\big\|\bar{\psi}_{0}+\bar A\big\|_{*}.
\end{align}
Using \eqref{change cordinator 2l>n*} and  \eqref{3.2.14}, we obtain:
\begin{align} \label{est psiv 2l>n*}
\nonumber \|\psi_{0}-A\|_{*}
 \leq &\Big|AE(\vec t)_{jj}-A\Big|+|A|\sum_{q\in\mathbb{Z}^{3}\setminus\{0\}}\Big|E_{}(\vec t)_{j+q,j}\Big| 
 \\
\leq& |A| \sum_{r=1}^{\infty } \| \hat G_{r}(k,\vec t)\|_{1}
\end{align}

 %
The estimates \eqref{Ap3a} and \eqref{Ap3b} yield:
\begin{equation} \label{t}
 \|\psi_{0}-A\|_{*}
<
2|A|k^{-1+4\delta}.
\end{equation}
It follows:
\begin{align}\label{est bar psiv 2l>n*}
\|{\psi}_{0}\|_{*}=\|\bar{\psi}_{0}\|_{*}\leq |A|+ O\big(|A|k^{-1+4\delta }\big).
\end{align}
Using (\ref{change cordinator 2 2l>n}),  (\ref{t}) and (\ref{est bar psiv 2l>n*}), we get 
\begin{align*}
&\|W_{1}-W_{0}\|_{*}
\leq 2 |\sigma ||A|^2k^{-1+4\delta },
\end{align*}
i.e.  \eqref{mm} for $m=1$. Since $\|\tilde{W}_{1}-V\|_{*}=\|\tilde{W}_{1}-\tilde{W}_{0}\|_{*}\leq \|W_{1}-W_{0}\|_{*}$, we have:
\begin{align}\label{54.5}
&\|\tilde{W}_{1}-V\|_{*} \leq 2|\sigma||A|^2k^{-1+4\delta },
\end{align}
i.e. \eqref{mmm} for $m=1$.

 Now, we use  mathematical induction to prove \eqref{mmm} and \eqref{mm} simultaneously.
%
Suppose that for all $1\leq s\leq m-1$,
\begin{equation}
\|\tilde{W}_{s}-V\|_{*} \leq 8|\sigma||A|^{2}k^{-1+4\delta}, \label{M19b a}
\end{equation}
\begin{equation}
\|W_{s}-W_{s-1}\|_{*} \leq 4|\sigma||A|^{2}k^{-1+4\delta}(2^6|\sigma ||A|^2k^{1+5\delta} )^{s-1}. \label{M19b}
\end{equation}
%
%
%
Let
\begin{equation}B_s(z)=(\hat H(\vec t)-z)^{-\frac{1}{2}}\hat W_s (\hat H(\vec t)-z)^{-\frac{1}{2}},
 \label{M17b} \end{equation}
 $\hat H$ being the same for all $s$, see \eqref{3.2.8}, and
 \begin{equation}
 \hat W_s=\tilde W_s-\sum _{q\in \Gamma (R_0)}P_qV_qP_q. \label{3.2.9s} \end{equation}
Using    \eqref{M17b} and \eqref{l}, we easily obtain:
\begin{equation} \label{M17cca}
\|B_s-B_{s-1}\|_1 \leq k^{1+\delta }\|\hat W_s-\hat W_{s-1}\|_*=k^{1+\delta }\|\tilde W_s-\tilde W_{s-1}\|_*\leq k^{1+\delta }\|W_s-W_{s-1}\|_*.
\end{equation}
Taking into account  \eqref{M19b}, we arrive at
\begin{equation} \label{M17cc}
\|B_s-B_{s-1}\|_1 \leq 4|\sigma||A|^{2} k^{5\delta }(2^6|\sigma ||A|^2k^{1+5\delta} )^{s-1},\ \  z \in C_0.
\end{equation}
In particular,
\begin{equation} \label{M17ccb}
\|B_s-B_{s-1}\|_1 \leq 4k^{-1-\delta}.
\end{equation}
By  \eqref{3.2.11b} and  and \eqref{M17cc}:
\begin{equation} \label{M17c}
\|B_s(z)\|_1 \leq
 k^{2\delta }+8|\sigma||A|^{2} k^{5\delta }\leq 
2 k^{2\delta },\ \  z \in C_0, {\mbox when } \ \sigma |A|^2<k^{-1-6\delta},
\end{equation}
\begin{equation} \label{M17ch}
\|B_s(z)^3\|_1 \leq k^{-1/5+21\delta }+8 |\sigma||A|^{2} k^{9\delta }<2k^{-1/5+21\delta },\ \  z \in C_0, {\mbox when } \ \sigma |A|^2<k^{-1-6\delta},
\end{equation}
for any  $1\leq s\leq m-1$.
By analogy with \eqref{3.2.10a}, we set:
\begin{equation}
\hat{G}_{s,r}(k,t)=\frac{(-1)^{r+1}}{2\pi i}\oint _{C_0}(\hat{H}(t)-z)^{-1/2}B^r_s
(\hat{H}(t)-z)^{-1/2}dz.
\label{3.2.10ab} \end{equation}
It follows from \eqref{l} and \eqref{M17c},  \eqref{M17ch} that 
\begin{equation}\label{estimate of difference of G}
\|\hat G_{{s},r}(k,\vec t)\|_1\leq 2^rk^{4\delta }k^{-(\frac{1}{5}-21\delta )\left[\frac{r-1}{3}\right]}.
\end{equation}
Therefore,
\begin{equation}E_s={\mathbb E}_j+\sum _{r=1}^{\infty}\hat G_{{s},r}(k,\vec t), \label{new}
\end{equation}
Here the series converges in $\|\cdot \|_1$ norm, $E_0=E$, given by \eqref{3.2.14}.
Next, we note that
\begin{align}\label{estimate of difference of B}
\nonumber& \max_{z\in C_{0}}\|B_{s}^r(z)-B^r_{s-1}(z)\|_1\\
\nonumber\leq& \sum\limits_{j=0}^{r-1}\max_{z\in C_{0}}\|B_{s}(z)-B_{s-1}(z)\|_1\left\|\mathcal P_j\left(B_{s-1}, B_{s}-B_{s-1}\right)\right\|_1 \left\|\mathcal P_{r-1-j}\left(B_{s-1}, B_{s}-B_{s-1}\right)\right\|_1\\
\end{align}
where $\mathcal P_j\left(B_{s-1}, B_{s}-B_{s-1}\right)$ is the homogeneous polynomial of $B_{s-1}$ and $B_{s}-B_{s-1}$ of order $j$. The estimates \eqref{M17ccb}--\eqref{M17ch}
yeild:
$$\left\|\mathcal P_j\left(B_{s-1}, B_{s}-B_{s-1}\right)\right\|_1<2^{j}k^{-(\frac{1}{5}-21\delta )\left[\frac{j}{3}\right]+4\delta }.$$
Taking into account \eqref{M17cca}, we get:
\begin{equation}\label{B}
 \max_{z\in C_{0}}\|B_{s}^r(z)-B^r_{s-1}(z)\|_1<2^rk^{1 +\delta }\|\tilde{W}_{s}-\tilde{W}_{s-1}\|_{*}k^{-(\frac{1}{5}-21\delta )\left[\frac{r-3}{3}\right]+8\delta }.\end{equation}
Considering \eqref{l}, we obtain:
\begin{equation}\label{estimate of difference of G}
\|\hat G_{{s},r}(k,\vec t)-\hat G_{{s-1},r}(k,\vec t)\|_1\leq 2^rk^{1+9\delta }\|\tilde{W}_{s}-\tilde{W}_{s-1}\|_{*}k^{-(\frac{1}{5}-21\delta )\left[\frac{r-3}{3}\right]}.
\end{equation}
Summing the last estimate over $r$, we arrive at
%
%
%
\begin{align}\label{estimate of difference of E}
 \nonumber \|E_{{s}}(\vec t)-E_{{s-1}}(\vec t)\|_1 
\leq &\nonumber\sum_{r=1}^{\infty}
\|\hat G_{{s},r}(k,\vec t)-\hat G_{{s-1},r}(k,\vec t)\|_1\\
\leq &k^{1+10\delta }\|\tilde{W}_{s}-\tilde{W}_{s-1}\|_{*}.
\end{align}
%
%
%
%
%
Summing \eqref{estimate of difference of E} over $s$ and considering \eqref{M19b}, \eqref{jj},we obtain:
\begin{equation}\label{estimate of E}
 \|E_{{s}}(\vec t)\|_1
= 1+o(1), \ \ 1\leq s\leq m-1.
\end{equation}
Let, by analogy with \eqref{def of u 2l>n},
\begin{align}\label{expression u 2l>n 1}
u_{{s}}(\vec{x}):=A\sum\limits_{m \in \mathbb{Z}^{3}}E_{{s}}(\vec t)_{m,j}e^{i\langle{ \vec{p}_{m}(\vec t), \vec{x} }\rangle}, 
\end{align}
where $E_{{s}}(\vec t)$ is the spectral projection \eqref{3.2.14} for the potential $\tilde{W}_{s}$.
Obviously,
\begin{align}\label{expression u 2l>n 2}
u_{{s}}(\vec{x})=\psi_{{s}}(\vec{x})e^{i\langle{ \vec{p}_{j}(\vec t), \vec{x} }\rangle},
\end{align}
where the function, 
\begin{align}\label{change cordinator 2l>n}
\psi_{{s}}(\vec{x})=A\sum_{q\in \mathbb{Z}^{3}}E_{{s}}(\vec t)_{j+q,j}e^{i\langle{ \vec{p}_{q}(\vec{0}), \vec{x} }\rangle},
\end{align}
is   the periodic part of $u_{{s}}$. Clearly,
\begin{equation}\|\psi_{{s}}\|_*\leq  |A|\|E_{{s}}(\vec t)\|_1.\label{M19a}
\end{equation}
Next, considering as in \eqref{change cordinator 2 2l>n},  we obtain:
\begin{align}
\big\|W_{m}-W_{m-1}\big\|_{*}
\leq |\sigma|\big\|\psi_{{m-1}}-\psi_{{m-2}}\big\|_{*}\big\|\bar{\psi}_{{m-1}}+\bar{\psi}_{{m-2}}\big\|_{*},\label{**}
\end{align}
and,
 hence, by \eqref{change cordinator 2l>n},
 \begin{equation} \label{W-m}
 \|W_{m}-W_{m-1}\|_{*}\leq |\sigma ||A|^2\|E_{{m-1}}(\vec t)-E_{{m-2}}(\vec t)\|_1
\left(\|E_{{m-1}}(\vec t)\|_1+
\|E_{{m-2}}(\vec t)\|_1\right).
\end{equation}
%
Using  (\ref{estimate of E}) and (\ref{estimate of difference of E}),  we obtain
\begin{align}
\|W_{m}-W_{m-1}\|_{*}
\leq &8|\sigma||A|^2 k^{1+5\delta }\|\tilde{W}_{m-1}-\tilde{W}_{m-2}\|_{*}. \label{***}\end{align}
Considering  $\|\tilde{W}_{m-1}-\tilde{W}_{m-2}\|_{*}\leq \|{W}_{m-1}-{W}_{m-2}\|_{*}$ and using \eqref{M19b} for $s=m-1$, we arrive at the estimate:
\begin{align} \label{main results 123}
 \|W_{m}-W_{m-1}\|_{*}\leq &8|\sigma||A|^2 k^{1+5\delta}4|\sigma||A|^{2}k^{-1+4\delta }\big(2^6|\sigma||A|^2k^{1+5\delta}\big)^{m-2}\\
\nonumber \leq & 4|\sigma||A|^{2}k^{-1+4\delta}\big(2^6|\sigma||A|^2k^{1+5\delta}\big)^{m-1}, 
\end{align}
when $k>k_{1}(V,\delta)$.
Further, \eqref{main results 123} and \eqref{54.5} enable the estimate
\begin{align*} 
\nonumber \|\tilde{W}_{m}-V\|_{*}\leq& \|\tilde{W}_{m}-\tilde{W}_{m-1}\|_{*}+\|\tilde{W}_{m-1}-\tilde{W}_{m-2}\|_{*}+\cdots+\|\tilde{W}_{1}-V\|_{*}\\
\leq &8|\sigma||A|^{2}k^{-1+4\delta},
\end{align*}
which completes the proof of \eqref{mmm} and \eqref{mm}. Using \eqref{estimate of difference of E}, we obtain \eqref{estimate of difference of E-a}.
\end{proof}

\begin{lemma} \label{lemma3 2l>n}
Suppose $\vec t$ belongs to the $(k^{-2-2\delta })$-neighborhood in $K$ 
of the 
non-resonant set $\chi _3(k,V,\delta )$. Then for every sufficiently 
large $k>k_{1}(V,\delta)$ and every $A\in \C: |\sigma||A|^2 <k^{-1-6\delta }$,  the sequence $E_m(\vec t)$ converges with respect to $\|\cdot \|_1$ to a one-dimensional spectral projection $E_{\tilde W}(\vec t)$ of $H_0(t)+\tilde W$:
\begin{equation}
\|E_m(\vec t)-E_{\tilde W}(\vec t)\|_{1}\leq 
8|\sigma||A|^2k^{14\delta}(2^6|\sigma ||A|^2k^{1+5\delta})^{m}<k^{-1+9\delta}(2^6|\sigma ||A|^2k^{1+5\delta})^{m}. \label{mm++}
\end{equation} 
The projection $E_{\tilde W}(\vec t)$ is given by the series  \eqref{3.2.14}, \eqref{3.2.10a} with $\hat W=\tilde W-\sum _{q\in \Gamma (R_0)}P_qV_qP_q$ instead of $\hat W_0$. The series converges  with respect to $\|\cdot \|_1$:
\begin{equation}\label{ii-b}
\|\hat G_{r}(k,\vec t)\|_{1}\leq 2k^{-1+8\delta},\ \ \mbox{when  } r<20, \end{equation}
\begin{equation}\label{ii-bc}
\|\hat G_{r}(k,\vec t)\|_{1}\leq 2k^{-r/20},\ \ \mbox{when  } r\geq 20. \end{equation}

\end{lemma}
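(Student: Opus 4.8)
The plan is to pass to the limit $m\to\infty$ in the series representation \eqref{new} of $E_m(\vec t)$, using the Cauchy estimate \eqref{estimate of difference of E-a} from Lemma~\ref{main lemma  1 2l>n} together with the fact that $\tilde W$ is extremely close to $V$ in $\|\cdot\|_*$. First, put $q:=2^6|\sigma||A|^2k^{1+5\delta}$; since $|\sigma||A|^2<k^{-1-6\delta}$ we have $q<2^6k^{-\delta}<1/2$ for $k$ large, so by \eqref{estimate of difference of E-a} the sequence $\{E_m(\vec t)\}$ is Cauchy in $\|\cdot\|_1$; let $E_{\tilde W}(\vec t)$ be its limit. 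Telescoping the geometric tail,
\[
\|E_m-E_{\tilde W}\|_1\le\sum_{s>m}\|E_s-E_{s-1}\|_1\le 4|\sigma||A|^2k^{14\delta}\,\frac{q^m}{1-q}\le 8|\sigma||A|^2k^{14\delta}q^m ,
\]
which is the first inequality in \eqref{mm++}; the second follows from $8|\sigma||A|^2<k^{-1-5\delta}$, valid for large $k$.

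Next I set $\tilde W:=\lim_m\tilde W_m$ in $\|\cdot\|_*$ (Corollary~\ref{cauchy sequence 2l>n}) and $\hat W:=\tilde W-\sum_{q\in\Gamma(R_0)}P_qV_qP_q$, so that $H_0(\vec t)+\tilde W=\hat H(\vec t)+\hat W$ with $\hat H$ the \emph{same} model operator \eqref{3.2.8} as before. Passing to the limit in \eqref{mmm} and using $|\sigma||A|^2<k^{-1-6\delta}$ gives the crucial smallness
\[
\|\hat W-\hat W_0\|_*=\|\tilde W-V\|_*\le 8|\sigma||A|^2k^{-1+4\delta}<8k^{-2-2\delta},
\]
so $\hat W$ differs from $\hat W_0$ by a perturbation negligible on the scale of $C_0$. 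By \eqref{l} and \eqref{3.2.11b}, $B_\infty(z):=(\hat H(\vec t)-z)^{-1/2}\hat W(\hat H(\vec t)-z)^{-1/2}$ then satisfies $\|B_\infty(z)\|_1\le 2k^{2\delta}$ and $\|B_\infty^3(z)\|_1\le 2k^{-1/5+21\delta}$ on $C_0$, exactly as in \eqref{M17c}, \eqref{M17ch}, and the operator-norm analogues of \eqref{3.2.11} hold with the constant $2$ as well. Since $\|B_\infty^3\|<1$ for large $k$, the perturbation theory of \cite{K97} applies to $\hat H+\hat W$: $H_0(\vec t)+\tilde W$ has a unique simple eigenvalue in $(k^2-k^{-1-\delta},k^2+k^{-1-\delta})$, with Riesz projection equal to the series \eqref{3.2.14}, \eqref{3.2.10a} with $\hat W$ in place of $\hat W_0$. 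Moreover, since the inputs to the Appendix~2--3 estimates of \cite{K97} agree with those for $\hat W_0$ up to the factors $2$ just recorded, their conclusions \eqref{Ap3a}, \eqref{Ap3b} survive with $8\delta$ replacing $4\delta$ --- precisely \eqref{ii-b}, \eqref{ii-bc} --- so this series converges in $\|\cdot\|_1$.

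It remains to identify this Riesz projection with $E_{\tilde W}$. The $r$-th term $\hat G_{m,r}(k,\vec t)$ of \eqref{new} converges in $\|\cdot\|_1$, uniformly on $C_0$, to the $r$-th term $\hat G_r(k,\vec t)$ of the $\hat W$-series: indeed $\|B_m(z)-B_\infty(z)\|_1\le k^{1+\delta}\|\tilde W_m-\tilde W\|_*\to0$ and the contour integral \eqref{3.2.10ab} depends continuously on this data; and $\sum_r\|\hat G_{m,r}\|_1$ is bounded uniformly in $m$ by the bound on $\hat G_{s,r}$ obtained in the proof of Lemma~\ref{main lemma  1 2l>n}. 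Hence $\lim_m E_m={\mathbb E}_j+\sum_r\hat G_r=E_{\tilde W}$. Finally $E_{\tilde W}$ is a rank-one projection, being an idempotent with $\|E_{\tilde W}-{\mathbb E}_j\|\le\|E_{\tilde W}-{\mathbb E}_j\|_1=O(k^{-1+8\delta})<1$ and ${\mathbb E}_j$ of rank one, and it is the spectral projection of the self-adjoint operator $H_0(\vec t)+\tilde W$ onto its eigenvalue in $(k^2-k^{-1-\delta},k^2+k^{-1-\delta})$. The main obstacle is the Appendix~2--3 step: one must re-run those estimates for $\tilde W$, which is no longer a trigonometric polynomial, tracking how the factor-$2$ losses propagate through the $\|\cdot\|_1$-versus-$\mathbf{S}_1$ bookkeeping; the smallness $\|\tilde W-V\|_*<8k^{-2-2\delta}$ is exactly what makes this possible. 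The Cauchy argument of the first paragraph is routine given Lemma~\ref{main lemma  1 2l>n}.
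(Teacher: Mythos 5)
Your derivation of \eqref{mm++} (telescoping \eqref{estimate of difference of E-a} with ratio $q=2^6|\sigma||A|^2k^{1+5\delta}<1/2$), your identification of the limit projection via $\|B_m(z)-B_\infty(z)\|_1\le k^{1+\delta}\|\tilde W_m-\tilde W\|_*\to 0$, and your bounds $\|B_\infty\|_1\le 2k^{2\delta}$, $\|B_\infty^3\|_1\le 2k^{-1/5+21\delta}$ are all correct and coincide with the paper's argument. The large-$r$ estimate \eqref{ii-bc} does indeed follow from these norm bounds alone, exactly as \eqref{Ap3b} follows from \eqref{l} and \eqref{3.2.11b}.

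The gap is in \eqref{ii-b}. You propose to ``re-run the Appendix 2--3 estimates for $\tilde W$,'' flag it as the main obstacle, and assert that the conclusion \eqref{Ap3a} ``survives with $8\delta$ replacing $4\delta$.'' That step would not go through as stated: \eqref{Ap3a} for small $r$ is not a consequence of the norm bounds on $B_0$ — it is the $\|\cdot\|_1$ version of \eqref{3.2.22}--\eqref{3.2.23}, whose proof in \cite{K97} uses essentially that $V$ is a \emph{finite} trigonometric polynomial (the constants involve $R_0$ and $\hat v$, and $\hat G_r$ is supported on $|i-j|<rR_0$). Since $\tilde W$ has infinitely many Fourier coefficients, those arguments do not transplant, and the crude bound one gets from $\|B_\infty\|_1\le 2k^{2\delta}$ and $\|B_\infty^3\|_1\le 2k^{-1/5+21\delta}$ alone is only $\|\hat G_r\|_1\lesssim 2^rk^{4\delta}k^{-(\frac15-21\delta)[\frac{r-1}{3}]}$, which for $r=1,2,3$ is $O(k^{4\delta})$, far weaker than the claimed $2k^{-1+8\delta}$. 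The correct route — and the one the paper takes — is to never re-run the Appendix for $\tilde W$ but to telescope in $s$ at fixed $r$: write $\hat G_r=\hat G_{0,r}+\sum_{s\ge1}(\hat G_{s,r}-\hat G_{s-1,r})$, bound the base term by \eqref{Ap3a}--\eqref{Ap3b} (legitimate, since $\hat G_{0,r}$ is built from $V$), and bound the sum by \eqref{estimate of difference of G} together with \eqref{M19b}, which contributes $O\bigl(2^rk^{1+9\delta}\cdot|\sigma||A|^2k^{-1+4\delta}\bigr)=O(k^{-1+8\delta})$. You already have every ingredient of this computation (you even invoke the uniform bound on $\hat G_{s,r}$ from the proof of Lemma~\ref{main lemma  1 2l>n} when identifying the limit); you just need to deploy it for the estimate itself instead of attempting to redo the \cite{K97} machinery for a non-polynomial potential.
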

\begin{proof}
Let  $B(z)$ be given by \eqref{M17b} with $\hat W$ instead of $\hat W _s$.  Obviously, $B(z)$ is the limit of $B_m(z)$  in $\|\cdot \|_1$-norm. The estimate \eqref{M17c} yields:
\begin{equation} \label{M17c-1}
\|B(z)\|_1 \leq 2k^{2\delta},\ \  \|B^3(z)\|_1 \leq 2k^{-\frac{1}{5}+21\delta },  \ \ \ \            z \in C_0.
\end{equation}
It follows that 
 that $E(t)$ admits the expansion  \eqref{3.2.14}, \eqref{3.2.10a}. To obtain \eqref{ii-b} and \eqref{ii-bc} we sum up the estimates \eqref{estimate of difference of G} and use \eqref{Ap3a} and \eqref{Ap3b}, \eqref{M19b}.  Obviously, $\hat G_r$ corresponding to $\tilde W$ is the limit of  $\hat G_{m,r}$ in $\|\cdot\|_1$ norm.
Summing the estimates \eqref{estimate of difference of E-a}, we obtain \eqref{mm++}.

\end{proof}

%
%
%
\begin{definition} \label{def2} Let $u(\vec x)$ be defined as in Corollary  \ref{def1} for the potential  $\tilde{W}(\vec x)$. Let $\psi (\vec x)$ be the periodic part of $u(\vec x)$.\end{definition}

The next lemma follows from the estimate \eqref{mm++}.

\begin{lemma} \label{lemma3 2l>n'}
Suppose $\vec t$ belongs to the $(k^{-2-2\delta })$-neighborhood in $K$ 
of the 
non-resonant set $\chi _3(k, V,\delta )$. Then for every sufficiently 
large $k>k_{1}(V,\delta )$ and every $A\in \C: |\sigma||A|^2 <k^{-1-6\delta }$, the sequence $\psi _m(\vec x)$ converges to the function $\psi (\vec x)$ with respect to $\|\cdot \|_*$:
\begin{equation}
\|\psi_{{m}}-\psi \|_{*}\leq |A|k^{-1+9\delta}(2^6|\sigma ||A|^2k^{1+5\delta})^{m}. \label{mm+++}
\end{equation}
\end{lemma}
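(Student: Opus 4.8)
The plan is to transfer the operator-level convergence estimate \eqref{mm++} from Lemma \ref{lemma3 2l>n} down to the level of the periodic parts $\psi_m$, exactly as the definition \eqref{change cordinator 2l>n} of $\psi_m$ in terms of the columns of $E_m(\vec t)$ makes almost automatic. First I would recall that by \eqref{change cordinator 2l>n} we have $\psi_m(\vec x)=A\sum_{q\in\Z^3}E_m(\vec t)_{j+q,j}e^{i\langle \vec p_q(\vec 0),\vec x\rangle}$, and by Corollary \ref{def1} applied to the potential $\tilde W$ together with Lemma \ref{lemma3 2l>n}, the function $\psi(\vec x)$ from Definition \ref{def2} is given by the same formula with $E_{\tilde W}(\vec t)$ in place of $E_m(\vec t)$. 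Hence
\begin{equation}
\psi_m(\vec x)-\psi(\vec x)=A\sum_{q\in\Z^3}\bigl(E_m(\vec t)_{j+q,j}-E_{\tilde W}(\vec t)_{j+q,j}\bigr)e^{i\langle\vec p_q(\vec 0),\vec x\rangle},
\end{equation}
so the Fourier coefficients of $\psi_m-\psi$ are precisely $A$ times the entries of the $j$-th column of $E_m(\vec t)-E_{\tilde W}(\vec t)$.

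Next I would estimate the $\|\cdot\|_*$ norm of this difference directly from its Fourier coefficients:
\begin{equation}
\|\psi_m-\psi\|_*=\sum_{q\in\Z^3}\bigl|A\bigr|\,\bigl|E_m(\vec t)_{j+q,j}-E_{\tilde W}(\vec t)_{j+q,j}\bigr|=|A|\sum_{p\in\Z^3}\bigl|\bigl(E_m(\vec t)-E_{\tilde W}(\vec t)\bigr)_{pj}\bigr|\le|A|\,\bigl\|E_m(\vec t)-E_{\tilde W}(\vec t)\bigr\|_1,
\end{equation}
where the last step is just the definition \eqref{norm1} of $\|\cdot\|_1$, since the column sum over index $p$ for the fixed column $i=j$ is bounded by the maximum over all columns. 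Then invoking the bound \eqref{mm++} from Lemma \ref{lemma3 2l>n}, namely $\|E_m(\vec t)-E_{\tilde W}(\vec t)\|_1<k^{-1+9\delta}(2^6|\sigma||A|^2k^{1+5\delta})^m$, gives exactly \eqref{mm+++}.

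There is essentially no obstacle here: the lemma is a corollary of \eqref{mm++}, and the only things to check are that $\psi_m$ and $\psi$ are indeed represented by the $j$-th columns of $E_m(\vec t)$ and $E_{\tilde W}(\vec t)$ respectively (which is \eqref{change cordinator 2l>n} and Definition \ref{def2} combined with Lemma \ref{lemma3 2l>n}), and that the map from a column of a matrix $T$ in $l_2(\Z^3)$ to the $\|\cdot\|_*$ norm of the associated trigonometric series is controlled by $\|T\|_1$. The mildest point requiring a word of care is the re-indexing $q\mapsto p=j+q$ in the sum, which merely relabels a sum over all of $\Z^3$, and the observation that $\psi_m$ and $\psi$ are genuinely periodic (their exponentials are $e^{i\langle\vec p_q(\vec 0),\vec x\rangle}$ with $\vec p_q(\vec 0)\in 2\pi\Z^3$), so that $\|\cdot\|_*$ is the relevant norm. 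All of this is immediate from the constructions already in place, so the proof is short.
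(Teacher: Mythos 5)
Your proposal is correct and follows exactly the paper's route: the paper derives \eqref{mm+++} directly from \eqref{mm++} via the inequality $\|\psi_m-\psi\|_*\le |A|\,\|E_m(\vec t)-E_{\tilde W}(\vec t)\|_1$, which is precisely your column-sum observation based on \eqref{change cordinator 2l>n}, \eqref{def of star norm 4l>n+1} and \eqref{norm1}. Nothing further is needed.
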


\begin{corollary} \label{3.7} The sequence
$u_{{m}}$ converges to $u_{}$ in $L^{\infty}(Q)$.\end{corollary}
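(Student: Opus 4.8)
The plan is to deduce Corollary \ref{3.7} from Lemma \ref{lemma3 2l>n'} together with the uniform convergence of the exponential factors. Recall that by \eqref{expression u 2l>n 2} each $u_m$ factors as $u_m(\vec x)=\psi_m(\vec x)e^{i\langle \vec p_j(\vec t),\vec x\rangle}$, and that $u$ admits the analogous factorization $u(\vec x)=\psi(\vec x)e^{i\langle\vec p_j(\vec t),\vec x\rangle}$ by Definition \ref{def2}. Since the phase factor $e^{i\langle\vec p_j(\vec t),\vec x\rangle}$ is common to all $m$ and has modulus one, it suffices to control $\psi_m-\psi$ in $L^\infty(Q)$.

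First I would recall the elementary bound $\|f\|_{L^\infty(Q)}\leq\|f\|_*$, valid for any periodic $f$ with Fourier coefficients $f_q$, since $|f(\vec x)|=|\sum_q f_q e^{i\langle\vec p_q(\vec 0),\vec x\rangle}|\leq\sum_q|f_q|=\|f\|_*$. Applying this to $f=\psi_m-\psi$ and invoking \eqref{mm+++} from Lemma \ref{lemma3 2l>n'} gives
\begin{equation}
\|\psi_m-\psi\|_{L^\infty(Q)}\leq\|\psi_m-\psi\|_*\leq|A|k^{-1+9\delta}\bigl(2^6|\sigma||A|^2k^{1+5\delta}\bigr)^m.
\end{equation}
Under the standing hypothesis $|\sigma||A|^2<k^{-1-6\delta}$ one has $2^6|\sigma||A|^2k^{1+5\delta}<2^6k^{-\delta}<1$ for $k>k_1(V,\delta)$ sufficiently large, so the right-hand side tends to $0$ geometrically as $m\to\infty$. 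Finally, since $|e^{i\langle\vec p_j(\vec t),\vec x\rangle}|=1$,
\begin{equation}
\|u_m-u\|_{L^\infty(Q)}=\bigl\|(\psi_m-\psi)e^{i\langle\vec p_j(\vec t),\vec x\rangle}\bigr\|_{L^\infty(Q)}=\|\psi_m-\psi\|_{L^\infty(Q)}\xrightarrow[m\to\infty]{}0,
\end{equation}
which is precisely the claim.

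I do not anticipate a genuine obstacle here: the corollary is a direct consequence of the already-established $\|\cdot\|_*$-convergence in Lemma \ref{lemma3 2l>n'} and the trivial embedding $\|\cdot\|_{L^\infty}\leq\|\cdot\|_*$. The only point requiring a word of care is to make sure the geometric factor $2^6|\sigma||A|^2k^{1+5\delta}$ is strictly less than $1$, which follows immediately from \eqref{A}-type smallness assumption $|\sigma||A|^2<k^{-1-6\delta}$ once $k$ is large; this is exactly the regime in which Lemma \ref{lemma3 2l>n'} was proven, so no new constraint is imposed.
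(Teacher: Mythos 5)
Your proof is correct and follows exactly the route the paper intends: Corollary \ref{3.7} is stated as an immediate consequence of Lemma \ref{lemma3 2l>n'} via the factorization $u_m=\psi_m e^{i\langle\vec p_j(\vec t),\vec x\rangle}$ and the embedding $\|\cdot\|_{L^\infty(Q)}\leq\|\cdot\|_*$. Your additional check that $2^6|\sigma||A|^2k^{1+5\delta}<2^6k^{-\delta}<1$ is the right observation to ensure the bound actually tends to zero.
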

\begin{corollary}\label{AW=W 2l>n} 
$${\mathcal M}W=W.$$
\end{corollary}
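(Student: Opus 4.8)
The plan is to pass to the limit $m\to\infty$ in the recursion ${\mathcal M}W_m=W_{m+1}$ and use uniqueness of limits in the normed space of periodic functions equipped with $\|\cdot\|_*$. By the definitions \eqref{def of A 4l>n+1} and \eqref{def of successive sequence A 4l>n+1}, and the identification of $u_m$ with the Bloch eigenfunction $u_{\tilde W_m}$ attached to the potential $\tilde W_m$ via Corollary \ref{def1}, we have $W_{m+1}={\mathcal M}W_m=V+\sigma|u_m|^2=V+\sigma|\psi_m|^2$, where $\psi_m$ is the periodic part of $u_m$ and $|u_m|=|\psi_m|$ pointwise. On the other hand ${\mathcal M}W=V+\sigma|u_{\tilde W}|^2=V+\sigma|\psi|^2$, with $u_{\tilde W}=u$ and $\psi$ its periodic part in the sense of Definition \ref{def2}. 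Here one first notes that ${\mathcal M}W$ is meaningful at all: the limit $\tilde W$ satisfies $\|\tilde W-V\|_*\le 8|\sigma||A|^2k^{-1+4\delta}$ (the $m\to\infty$ limit of \eqref{mmm}), which is small enough for the linear construction of Section 2 and Lemma \ref{lemma3 2l>n} to apply, so that $u_{\tilde W}$ and its periodic part $\psi$ are well defined.

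It then suffices to show $\|\,|\psi_m|^2-|\psi|^2\,\|_*\to 0$. Writing $|\psi_m|^2-|\psi|^2=(\psi_m-\psi)\bar\psi_m+\psi(\bar\psi_m-\bar\psi)$ and invoking the conjugation invariance and submultiplicativity of $\|\cdot\|_*$ from \eqref{properties 1 2l>n}, we get $\|\,|\psi_m|^2-|\psi|^2\,\|_*\le\|\psi_m-\psi\|_*\big(\|\psi_m\|_*+\|\psi\|_*\big)$. By Lemma \ref{lemma3 2l>n'}, $\|\psi_m-\psi\|_*\le |A|k^{-1+9\delta}(2^6|\sigma||A|^2k^{1+5\delta})^m\to 0$, since $2^6|\sigma||A|^2k^{1+5\delta}<2^6k^{-\delta}<1$ under the hypothesis $|\sigma||A|^2<k^{-1-6\delta}$; and by \eqref{M19a} together with Lemma \ref{lemma3 2l>n} the factors $\|\psi_m\|_*$ and $\|\psi\|_*$ are bounded by $|A|(1+o(1))$. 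Hence $W_{m+1}=V+\sigma|\psi_m|^2\to V+\sigma|\psi|^2={\mathcal M}W$ with respect to $\|\cdot\|_*$.

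Finally, Corollary \ref{cauchy sequence 2l>n} gives $W_m\to W$ in $\|\cdot\|_*$ for the very same sequence, so uniqueness of the $\|\cdot\|_*$-limit forces ${\mathcal M}W=W$, which by \eqref{def of A 4l>n+1}, \eqref{def of successive sequence A 4l>n+1} is exactly the assertion that $u:=u_{\tilde W}$ solves \eqref{main equation, 4l>n+1} with the boundary condition \eqref{main condition, 4l>n+1}. I do not expect a genuine obstacle here, as all the analytic content has been front-loaded into Lemma \ref{main lemma  1 2l>n}, Corollary \ref{cauchy sequence 2l>n}, and Lemmas \ref{lemma3 2l>n}--\ref{lemma3 2l>n'}; the one point deserving a line of care is precisely the admissibility of the limiting potential $\tilde W$ for the linear theory, guaranteed by the uniform bound \eqref{mmm} and Lemma \ref{lemma3 2l>n}, which is what makes ${\mathcal M}W$ defined in the first place.
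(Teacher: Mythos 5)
Your proposal is correct and follows essentially the same route as the paper: bound $\|{\mathcal M}W_m-{\mathcal M}W\|_*$ by $|\sigma|\|\psi_m-\psi\|_*$ times a bounded factor (the paper factors $|\psi_m|^2-|\psi|^2$ as in its earlier display \eqref{**}, which gives the same estimate as your two-term decomposition), invoke Lemma \ref{lemma3 2l>n'} to send this to zero, and conclude by uniqueness of the $\|\cdot\|_*$-limit of $W_{m+1}={\mathcal M}W_m$. Your extra remark on the admissibility of the limiting potential $\tilde W$ is a sensible clarification already covered by Lemma \ref{lemma3 2l>n}.
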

\begin{proof}[Proof of Corollary \ref{AW=W 2l>n}] 
Considering as in (\ref{**}), we obtain:
\begin{equation}
\big\|{\mathcal M}W_{m}-{\mathcal M}W_{}\big\|_{*} 
\leq |\sigma|\big\|\psi_{{m}}-\psi \big\|_{*}\big\|\bar{\psi}_{{m}}+\bar{\psi}\big\|_{*},\label{****}
\end{equation}
It immediately follows from Lemma \ref{lemma3 2l>n'} that ${\mathcal M}W_{m}\rightarrow {\mathcal M}W $
 with respect to $\|\cdot \|_*$.
Now, by (\ref{def of successive sequence A 4l>n+1}) and \eqref{****}, we have  ${\mathcal M}W=W$.
\end{proof} 
Let $\lambda_{{m}}(\vec t)$,  
corresponding to 
$\tilde W_m$. By perturbation theory, they have a limit $\lambda _{\tilde W}(\vec t)$, which is an eigenvalue of  $H_0+\tilde W$. This eigenvalue is unique in the interval $(k^2-k^{-1-\delta},k^2+k^{-1-\delta})$.

\begin{lemma} \label{lemma lambda 2l>n} Under conditions of Lemma \ref{lemma3 2l>n}
the sequence
 $\lambda_{{m}}(\vec t)$ converges to $\lambda_{\tilde W}(\vec t)$ being given by \eqref{3.2.13} and
 \begin{equation}
|\hat g_r(k,\vec t)|<2k^{-2+80\delta},\ 2\leq r<40,
\label{gr}
\end{equation}
\begin{equation}
|\hat g_r(k,\vec t)|<2k^{-r/20},\ r\geq 40.
\label{gr1}
\end{equation}
where $\hat g_r$ are given by  \eqref{3.2.10} with $\hat W=\tilde W- \sum _{q\in \Gamma (R_0)}P_qV_qP_q$ instead of $\hat W_0$.
\end{lemma}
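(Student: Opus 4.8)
The plan is to transfer the convergence estimates already obtained for the spectral projectors $E_m(\vec t)$ to the eigenvalues $\lambda_m(\vec t)$, using the analogue of the trace formula \eqref{3.2.10} with the operators $B_s(z)$ in place of $B_0(z)$. First I would introduce, by analogy with \eqref{3.2.10}, the quantities
\begin{equation*}
\hat g_{s,r}(k,\vec t)=\frac{(-1)^r}{2\pi i r}\,\mathrm{Tr}\oint_{C_0}\big((\hat H(\vec t)-z)^{-1/2}B_s(z)\,(\hat H(\vec t)-z)^{-1/2}\big)^{\!r}(\hat H(\vec t)-z)\,dz,
\end{equation*}
so that $\lambda_s(\vec t)=p_j^2(\vec t)+\sum_{r\ge 2}\hat g_{s,r}(k,\vec t)$, with $\hat g_{0,r}=\hat g_r$. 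The bounds \eqref{M17c}, \eqref{M17ch} on $\|B_s(z)\|_1$ and $\|B_s^3(z)\|_1$ are uniform in $1\le s\le m-1$ and carry over to $B(z)$ (the $\|\cdot\|_1$-limit of $B_m(z)$) by \eqref{M17c-1}; combined with \eqref{l} and the standard contour estimate $|\oint_{C_0}|\le 2\pi k^{-1-\delta}$ (radius of $C_0$), one gets, exactly as in the derivation of \eqref{estimate of difference of G} for the $\hat G$'s but now with the extra factor $(\hat H-z)$ contributing $k^{2}$ and the $1/r$ and trace producing at most a polynomial-in-$r$ factor,
\begin{equation*}
|\hat g_{s,r}(k,\vec t)|\le C\,2^r k^{-1-\delta}\,k^{2\delta}\,k^{-(\frac15-21\delta)[\frac{r-1}{3}]}\,r^{2},
\end{equation*}
valid for all $s$ including the limit. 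Split at $r=40$: for $2\le r<40$ the worst term is $r=2$, giving $|\hat g_r|<2k^{-2+80\delta}$ after absorbing the numerical and $r^2$ factors into the $80\delta$ budget (here the auxiliary $k^{1/5}$-coefficient and the $k^{1+\delta}$ from \eqref{l} are what force the generous exponent $80\delta$); for $r\ge 40$ the geometric decay $2^r k^{-(\frac15-21\delta)[\frac{r-1}{3}]}$ beats $k^{-r/20}$ once $\delta<1/200$, yielding \eqref{gr}, \eqref{gr1}. Summing over $r\ge 2$ then shows the series \eqref{3.2.13} for $\lambda_{\tilde W}$ converges in $\R$.

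For the convergence $\lambda_m(\vec t)\to\lambda_{\tilde W}(\vec t)$ itself I would estimate $\hat g_{s,r}-\hat g_{s-1,r}$ by the same telescoping trick used for \eqref{estimate of difference of G}: expand $B_s^r-B_{s-1}^r=\sum_{i=0}^{r-1}B_{s-1}^{i}(B_s-B_{s-1})B_s^{r-1-i}$, use \eqref{M17ccb}, \eqref{M17c}, \eqref{M17ch} to bound each homogeneous polynomial by $2^r k^{-(\frac15-21\delta)[\frac{r-3}{3}]+8\delta}$ and \eqref{M17cca}, \eqref{M19b} to bound $\|B_s-B_{s-1}\|_1$, obtaining $|\hat g_{s,r}-\hat g_{s-1,r}|\le 2^r k^{2+10\delta}\|\tilde W_s-\tilde W_{s-1}\|_*\,k^{-(\frac15-21\delta)[\frac{r-3}{3}]}$; summation over $r$ gives $|\hat g_{s}-\hat g_{s-1}|\le k^{2+11\delta}\|\tilde W_s-\tilde W_{s-1}\|_*$ where $\hat g_s:=\lambda_s-p_j^2$. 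Since $\|\tilde W_s-\tilde W_{s-1}\|_*\le\|W_s-W_{s-1}\|_*$ decays geometrically by \eqref{mm}/\eqref{M19b} under the smallness hypothesis $|\sigma||A|^2<k^{-1-6\delta}$, the sequence $\lambda_m(\vec t)$ is Cauchy in $\R$, and its limit is an eigenvalue of $H_0(\vec t)+\tilde W$ by the convergence of the projectors $E_m\to E_{\tilde W}$ from Lemma \ref{lemma3 2l>n} together with $\tilde W_m\to\tilde W$ in $\|\cdot\|_*$; uniqueness in $(k^2-k^{-1-\delta},k^2+k^{-1-\delta})$ follows because each $\lambda_m$ is the unique eigenvalue of $H_0(\vec t)+W_m$ in that interval (last clause of Lemma \ref{main lemma 1 2l>n}) and the interval is preserved in the limit. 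Finally, since $\lambda_{\tilde W}$ and its projector are given by \eqref{3.2.13}, \eqref{3.2.14} applied to the periodic potential $\tilde W$ — which is a legitimate potential (a convergent trigonometric series, by Corollary \ref{cauchy sequence 2l>n}) close to $V$ by \eqref{mmm} — one may as well invoke Theorem \ref{Theorem2.1} directly for $\tilde W$; the point of the lemma is the quantitative bounds \eqref{gr}, \eqref{gr1}, which do not follow from Theorem \ref{Theorem2.1} verbatim but from the uniform operator bounds above.

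The main obstacle is bookkeeping the exponents: one must check that every numerical constant, every factor of $2^r$, every polynomial-in-$r$ factor from the trace and the $1/r$, and every power of $k$ coming from \eqref{l} ($k^{1+\delta}$ per resolvent half-power) and from the extra resolvent $(\hat H-z)$ in the eigenvalue formula (as opposed to the projector formula) can be absorbed into the stated slack ($80\delta$ for the small-$r$ range, $r/20$ for large $r$), using only $0<\delta<1/200$ and $k$ large. Nothing is conceptually new beyond the proof of Lemma \ref{main lemma 1 2l>n}; the work is to re-run the projector argument with the weight $(\hat H-z)$ inserted and to confirm the crude exponent $80\delta$ is indeed an overestimate that covers all losses.
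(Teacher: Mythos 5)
There is a genuine gap, and it occurs at the very first step: your ``direct'' bound $|\hat g_{s,r}|\le C\,2^r k^{-1-\delta}k^{2\delta}k^{-(\frac15-21\delta)[\frac{r-1}{3}]}r^2$ cannot be obtained from the $\|\cdot\|_1$ bounds on $B_s$. The norm \eqref{norm1} is a maximal column sum, i.e.\ an operator norm; it controls individual matrix entries but not the trace, and $\hat g_r$ is defined through $\Tra\oint_{C_0}((\hat H-z)^{-1}\hat W)^r\,dz$. For the original $V$ the integrand lives on a subspace of dimension $O((rR_0)^3)$ (last sentence of Lemma 2.2), so operator bounds convert to trace bounds at polynomial cost in $r$ --- but $\tilde W$ is an \emph{infinite} trigonometric series, so that finite-rank property is lost and ``the trace produces at most a polynomial-in-$r$ factor'' is exactly the unjustified step. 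This is why the paper never estimates $\hat g_r$ directly: it estimates the difference $\hat g_r-\hat g_{0,r}$ in $\hbox{\bf S}_1$, paying $k^{2+2\delta}$ by using the Hilbert--Schmidt bound \eqref{lS2} twice when $r\ge 40$ (see \eqref{Bnew}), and for $2\le r<40$ inserting the decomposition $I=\mathbb{E}_j+(I-\mathbb{E}_j)$ and observing that $\oint_{C_0}((I-\mathbb{E}_j)B(z)(I-\mathbb{E}_j))^r\,dz=0$ by holomorphy, so only terms containing the rank-one (hence trace-class) projection $\mathbb{E}_j$ survive.

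Even setting the trace issue aside, your bound fails quantitatively for small $r$: at $r=2$ your expression is $\sim k^{-1+O(\delta)}$, a full power of $k$ short of the claimed $2k^{-2+80\delta}$ in \eqref{gr}, and a gap of order $k$ cannot be ``absorbed into the $80\delta$ budget.'' The $k^{-2}$ for $2\le r<40$ has two sources in the paper, neither of which appears in your argument: (i) the refined estimates \eqref{3.2.21}, \eqref{3.2.24} for $\hat g_{0,r}$, which exploit the structure of the finite polynomial $V$ (e.g.\ $v_0=0$) and give $|\hat g_{0,r}|=O(k^{-2+6\delta})$; and (ii) the smallness $\|\tilde W-V\|_*=O(|\sigma||A|^2k^{-1+4\delta})=O(k^{-2-2\delta})$ from \eqref{mmm}, which makes the perturbative difference $|\hat g_r-\hat g_{0,r}|\le k^{2r\delta-1-\delta}\cdot O(k^{-1-\delta})\le k^{-2+80\delta}$ as in \eqref{g-g1}. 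Your telescoping argument for the Cauchy property of $\lambda_m(\vec t)$ is essentially the paper's large-$r$ computation and is fine in outline (modulo the same trace-class bookkeeping), but the proof of the stated bounds \eqref{gr}, \eqref{gr1} must go through the comparison with the $V$-case rather than through direct norm estimates on $B(z)$.
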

\begin{proof}
Let us show that the series  \eqref{3.2.13} converges. Notice that $\hat g_1=0$. We consider $\hat g_r - \hat g_{0,r}$ where index $0$ stands for potential $V$. To estimate $\|B(z)^r - B_0(z)^r\|_1$, $r\geq40$, we follow \eqref{estimate of difference of B}. Instead of \eqref{B} we now write
\begin{equation}\label{Bnew}
 \max_{z\in C_{0}}\|B_{s}^r(z)-B^r_{s-1}(z)\|_{\hbox{\bf S}_1}<2^rk^{1 +\delta }\|\tilde{W}_{s}-\tilde{W}_{s-1}\|_{*}k^{-(\frac{1}{5}-21\delta )\left[\frac{r-3}{3}\right]+4\delta }k^{2+2\delta}.\end{equation}
 Here we twice used \eqref{lS2} instead of \eqref{M17c-1} worsening the estimate, but improving the class to $\hbox{\bf S}_1$. 
 Summing up in $s$ and using \eqref{M19b} we obtain (recall $|\sigma||A|^2\leq k^{-1-6\delta}$)
\begin{equation}\label{g-g}
|\hat g_r(k,\vec t)-\hat g_{0,r}(k,\vec t)|\leq 2^{r+2}k^{4\delta}k^{-(\frac{1}{5}-21\delta )\left[\frac{r-3}{3}\right]}\leq k^{-r/20},\ \ r\geq40.
\end{equation}
For $2\leq r< 40$ we apply a little bit different argument. Namely, Let us consider two projections ${\mathbb E}_0={\mathbb E}_j$, ${\mathbb E}_1=I-{\mathbb E}_j$, here ${\mathbb E}_j$ is the spectral projection of $\hat H$. Note that
$$\oint _{C_0} \left({\mathbb E}_1B(z){\mathbb E}_1\right)^r dz=0,\ \ r=1,2,...,$$
since the integrand is holomorphic inside $C_0$. Hence,
\begin{align}\label{LAMBDA}
\nonumber& \oint _{C_0} B(z)^r dz =\oint _{C_0} \big(B(z)^r - \left({\mathbb E}_1B(z){\mathbb E}_1\right)^r\big)dz=\\
\nonumber& \sum _{i_1,...,i_{r+1}=0,1, \exists s: i_s=0}\oint _{C_0 }{\mathbb E}_{i_1}B(z){\mathbb E}_{i_2}B(z)....{\mathbb E}_{i_r}B(z){\mathbb E}_{i_{r+1}}dz. \end{align}

Obviously, ${\mathbb E}_{i_1}B(z){\mathbb E}_{i_2}B(z)....{\mathbb E}_{i_r}B(z){\mathbb E}_{i_{r+1}}$ is in the trace class $\hbox {\bf S}_1$ if at least one index $i_s$,  $1\leq s\leq r+1$ is zero, since ${\mathbb E}_0 \in \hbox {\bf S}_1$. 
It follows:
\begin{align}
\nonumber&\|{\mathbb E}_{i_1}B(z){\mathbb E}_{i_2}B(z)....{\mathbb E}_{i_r}B(z){\mathbb E}_{i_{r+1}}-{\mathbb E}_{i_1}B_0(z){\mathbb E}_{i_2}B_0(z)....{\mathbb E}_{i_r}B_0(z){\mathbb E}_{i_{r+1}}\|_{\hbox {\bf S}_1}\leq 
\\ \nonumber& 2^r\|B\|^{r-1}\|B-B_0\|_1\leq k^{2r\delta }k^{1+\delta}\|\tilde W-V\|_*\leq k^{2r\delta-1-\delta}.
\end{align}
Thus,
\begin{equation}\label{g-g1}
|\hat g_r(k,\vec t)-\hat g_{0,r}(k,\vec t)|\leq k^{-2+80\delta},\ \ 2\leq r<40.
\end{equation}
Now (see \eqref{g-g}, \eqref{g-g1}, \eqref{3.2.15}, \eqref{3.2.21} and \eqref{3.2.24}), \eqref{gr} and \eqref{gr1} follow. 

\end{proof}

Considering as in the proof of Theorem 4.3 and Corollary 4.3 from \cite{K97}, one can prove 

\begin{theorem} \label{2.3-a} 
Suppose $\vec t$ belongs to the $(k^{-2-2\delta })$-neighborhood in $K$ 
of the 
non-resonant set $\chi _3(k,V,\delta )$. Then for every sufficiently 
large $k>k_{1}(V,\delta)$ and every $A\in \C: |\sigma||A|^2 <k^{-1-6\delta }$, the series (\ref{3.2.13}), (\ref{3.2.14}) for the potential $\tilde W$
can be differentiated with respect to $\vec t$ any number of times, and 
they retain their asymptotic character. Coefficients $\hat g_r(k,\vec t)$ and 
$\hat G_r(k,\vec t)$ satisfy the following estimates in the $(k^{-2-2\delta })$-neighborhood in $\C^3$ of the nonsingular set $\chi _3(k,V,\delta )$:
\begin{equation}
\mid T(m)\hat g_r(k,\vec t)\mid <m!k^{-2+80\delta}k^{\mid m\mid (2+2\delta )},\ \ 2\leq r<40, \label{2.2.32a-1}
\end{equation}
\begin{equation}
\mid T(m)\hat g_r(k,\vec t)\mid <m!k^{-r/20}k^{\mid m\mid (2+2\delta )},\ \ r\geq 40, \label{2.2.32a-2}
\end{equation}
%
%
\begin{equation}
\| T(m)\hat G_r(k,\vec t)\|_{1}<m!k^{-1+8\delta}k^{\mid m\mid (2+2\delta )},\ \ 1\leq r<20,  \label{2.2.33a-1}
\end{equation}
\begin{equation}
\| T(m)\hat G_r(k,\vec t)\|_{1}<m!k^{-r/20}k^{\mid m\mid (2+2\delta )},\ \ r\geq20. \label{2.2.33a-2}
\end{equation}
\end{theorem}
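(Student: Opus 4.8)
The plan is to obtain Theorem~\ref{2.3-a} by differentiating the series \eqref{3.2.13}, \eqref{3.2.14} for the potential $\tilde W$ termwise and re-running the estimates of Lemmas~\ref{lemma3 2l>n} and \ref{lemma lambda 2l>n} with each resolvent factor $(\hat H(\vec t)-z)^{-1/2}$ replaced by the appropriate $\vec t$-derivative. The starting point is the observation that the model operator $\hat H(\vec t)$ depends on $\vec t$ only through $H_0(\vec t)$, which is diagonal with entries $p_m^2(\vec t)=|\vec p_m(0)+\vec t|^2$; hence $T(m)$ applied to $(\hat H(\vec t)-z)^{-1}$ produces, via the Leibniz/chain rule, sums of products of resolvents interlaced with the bounded diagonal operators $\partial_{t_s}(H_0)$ (linear in $\vec t$, so only first derivatives survive) and second derivatives (constants). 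The key analytic input from \cite{K97}, reproduced in Corollary~\ref{Cor1}, is that on the $(k^{-2-2\delta})$-neighborhood of $\chi_3(k,V,\delta)$ the resolvent is controlled on $C_0$, and each extra $\vec t$-derivative costs a factor $k^{2+2\delta}$: this is exactly the source of the $k^{|m|(2+2\delta)}$ factors in \eqref{2.2.32a-1}--\eqref{2.2.33a-2}. So the structure of the argument is: (i) set up the differentiated contour integrals for $\hat g_r$, $\hat G_r$ with $\hat W$ replaced by $\tilde W-\sum_{q\in\Gamma(R_0)}P_qV_qP_q$; (ii) bound the new ingredients — the derivatives of $B(z)=(\hat H-z)^{-1/2}\hat W(\hat H-z)^{-1/2}$ — in $\|\cdot\|_1$ and $\hbox{\bf S}_1$; (iii) sum over $r$ exactly as in the undifferentiated case.

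The concrete steps I would carry out, in order, are as follows. First, record that $\|T(m)(\hat H(\vec t)-z)^{-1/2}\|_1 < c\,m!\,k^{(1+\delta)/2}k^{|m|(2+2\delta)}$ on $C_0$ in the complex neighborhood, which follows from \eqref{l} together with the resolvent identity and the fact that $\partial_{t_s}H_0$ acts as multiplication by the bounded symbol $2\langle \vec p_m(0)+\vec t,e_s\rangle$ restricted to the relevant finite band (the effective dimension is $O((rR_0)^3)$ by the last sentence of Lemma stating $\hat G_r$ lives on $(\sum_{|i-j|<rR_0}E_i)l_2^3$, so the symbol is $O(k)$ there, and division by the $k^{-1-\delta}$ contour radius in the worst estimate gives the stated power). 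Second, combine this with the already-proven bounds \eqref{M17c-1} on $B(z)$ and $B^3(z)$ for the potential $\tilde W$ to get, by the Leibniz rule applied to $B^r = (\hat H-z)^{-1/2}\hat W(\hat H-z)^{-1/2}\cdots$, the estimate $\|T(m)B^r(z)\|_1 < c\,m!\,k^{|m|(2+2\delta)}\cdot(\text{the undifferentiated bound on }B^r)$; the point is that distributing $|m|$ derivatives among the $\le 2r+1$ resolvent factors only multiplies the constant $m!$ by a combinatorial factor absorbed into $c$, while each derivative contributes one $k^{2+2\delta}$. Third, insert this into the contour-integral formulas \eqref{3.2.10}, \eqref{3.2.10a} for $\hat g_r$ and $\hat G_r$ (with $\hat W$ replaced by $\tilde W$'s version): integrating over $C_0$ contributes the length $2\pi k^{-1-\delta}$, and then for $r\ge 40$ (resp. $r\ge 20$) the geometric decay $k^{-(\frac15-21\delta)[(r-3)/3]}$ inherited from $\|B^3\|_1<2k^{-1/5+21\delta}$ yields \eqref{2.2.32a-2}, \eqref{2.2.33a-2}, while for $2\le r<40$ (resp. $1\le r<20$) one uses instead the trace-class refinement \eqref{ii-b}, \eqref{gr} from Lemmas~\ref{lemma3 2l>n}, \ref{lemma lambda 2l>n} — i.e. the $\hbox{\bf S}_1$ argument with the two projections ${\mathbb E}_0,{\mathbb E}_1$ as in the proof of Lemma~\ref{lemma lambda 2l>n}, now applied to the differentiated integrands — to get the sharper exponents $k^{-2+80\delta}$ and $k^{-1+8\delta}$ in \eqref{2.2.32a-1}, \eqref{2.2.33a-1}. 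Fourth, termwise differentiation of the series \eqref{3.2.13}, \eqref{3.2.14} is justified because after differentiation the series still converges in $\|\cdot\|_1$ (for $\lambda$, in absolute value) uniformly on the complex neighborhood, so Weierstrass's theorem gives analyticity of the sum and legitimacy of differentiating term by term.

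The main obstacle I expect is \textbf{controlling the derivatives of the square-root resolvents $(\hat H(\vec t)-z)^{-1/2}$ uniformly on $C_0$}, and in particular tracking the correct power of $k$ per derivative. One must verify that on the contour $C_0$ of radius $k^{-1-\delta}$ around $k^2$ the symbol $p_m^2(\vec t)-z$ stays away from $0$ only for the single index $j$, and that the near-resonant denominators (distance $\sim k^{-1-\delta}$) do not get differentiated in a way that produces worse than $k^{2+2\delta}$ per derivative — here one relies on the nonsingularity of $\chi_3(k,V,\delta)$ and its stability under the $(k^{-2-2\delta})$-shift, exactly the content invoked in Theorem~\ref{Theorem2.1} and Corollary~\ref{Cor1}. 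A secondary technical point is bookkeeping the factors $m!$ and $(c_0 k^{3\delta})^{|m|}$: for $|m|<k^\delta(60R_0)^{-1}$ one expects (as in Corollary~\ref{Cor1}, estimates \eqref{3.2.37}, \eqref{3.2.38}) that the cruder $k^{2+2\delta}$ per derivative could in fact be replaced by $c_0k^{3\delta}$, but since Theorem~\ref{2.3-a} only claims the weaker $k^{|m|(2+2\delta)}$ form, we do not need this refinement and may use the simplest resolvent-derivative bound throughout. Since everything beyond these points is the verbatim re-run of the contour and summation estimates already established for $\hat g_r$, $\hat G_r$ in the undifferentiated setting — as the sentence preceding the theorem states, one argues ``as in the proof of Theorem~4.3 and Corollary~4.3 from \cite{K97}'' — the write-up reduces to carefully propagating the extra $k^{|m|(2+2\delta)}$ and $m!$ through those two computations.
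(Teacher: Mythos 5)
Your overall plan (differentiate the series termwise and push the derivatives through the contour integrals by the Leibniz rule) is not the route the paper takes, and as sketched it has a genuine gap. The paper's proof — which is what ``as in Theorem 4.3 of \cite{K97}'' refers to, and which is visible from the very shape of the bounds — is a Cauchy-estimate argument: Lemmas \ref{lemma3 2l>n} and \ref{lemma lambda 2l>n} already give the \emph{undifferentiated} bounds \eqref{ii-b}, \eqref{ii-bc}, \eqref{gr}, \eqref{gr1} together with analyticity of $\hat g_r$, $\hat G_r$ on the complex $(k^{-2-2\delta})$-neighborhood of $\chi_3(k,V,\delta)$ (all the resolvent estimates used there are stable under a complex shift of $\vec t$ of that size). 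Cauchy's integral formula for derivatives on a polydisc of radius $\rho=k^{-2-2\delta}$ then yields $|T(m)f|\le m!\,\rho^{-|m|}\sup|f| = m!\,k^{|m|(2+2\delta)}\sup|f|$ in one line; the factor $k^{|m|(2+2\delta)}$ in \eqref{2.2.32a-1}--\eqref{2.2.33a-2} is exactly $\rho^{-|m|}$. No differentiation of any resolvent is needed, and Weierstrass' theorem handles termwise differentiation of the sums.

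Your explicit route stumbles on two points that the Cauchy route avoids entirely. First, you propose to bound $T(m)(\hat H(\vec t)-z)^{-1/2}$, but there is no resolvent identity for the operator square root and you give no formula for its derivative; you flag this as ``the main obstacle'' without resolving it. (It can be sidestepped by working with the integer-power form \eqref{3.2.10a}, $((\hat H-z)^{-1}\hat W)^r(\hat H-z)^{-1}$, but that is not what you do.) Second, and more seriously, your claim that distributing $|m|$ derivatives among the $\approx 2r+1$ factors of $B^r$ ``only multiplies $m!$ by a combinatorial factor absorbed into $c$'' is false: the number of ways to distribute the derivatives grows like $(2r+1)^{|m|}$, which depends on both $r$ and $|m|$ and cannot be hidden in a universal constant; for $r$ comparable to $k^{\delta}$ or larger it would destroy the claimed exponent $k^{|m|(2+2\delta)}$. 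Since the needed undifferentiated bounds on the complex neighborhood are already in hand, the correct and much shorter argument is the Cauchy estimate; I would rewrite the proof around that.
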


\begin{corollary} \label{derivatives-1} There are the estimates for the perturbed eigenvalue and its
spectral projection:
\begin{equation}
\left| T(m)\big(\lambda_{\tilde W} (\vec t)-p_j^{2}(\vec t)\big)\right| <m!k^{-2+80\delta}k^{\mid m\mid (2+2\delta )},\label{2.2.34b-1}
\end{equation}
\begin{equation} 
\| T(m)(E_{\tilde W}(\vec t)-{\mathbb E}_j)\|_{1} <m!k^{-1+8\delta}k^{\mid m\mid (2+2\delta )}.\label{2.2.35-1}
\end{equation}
In particular,
\begin{equation}
\left| \lambda _{\tilde W}(\vec t)-p_j^{2}(\vec t)\right| <k^{-2+80\delta}, \label{2.2.34b-3}
\end{equation}
\begin{equation} 
\| E_{\tilde W}(\vec t)-{\mathbb E}_j\|_{1} <k^{-1+8\delta}, \label{2.2.35-3}
\end{equation}
\begin{equation}
\left|\nabla\lambda _{\tilde W}(\vec t)-2 \vec p_j(\vec t)  \right|<k^{82\delta}.\label{2.2.34b-2}
\end{equation}
\end{corollary}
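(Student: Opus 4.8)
The plan is to derive Corollary \ref{derivatives-1} as a direct consequence of Theorem \ref{2.3-a}, exactly as the analogous Corollary \ref{Cor1} follows from the differentiated estimates in the linear case. Recall from \eqref{3.2.13} and \eqref{3.2.14} (applied with $\tilde W$ in place of $V$, as justified by Lemma \ref{lemma lambda 2l>n} and Lemma \ref{lemma3 2l>n}) that $\lambda_{\tilde W}(\vec t)-p_j^2(\vec t)=\sum_{r=2}^\infty \hat g_r(k,\vec t)$ and $E_{\tilde W}(\vec t)-{\mathbb E}_j=\sum_{r=1}^\infty \hat G_r(k,\vec t)$, and that by Theorem \ref{2.3-a} these series may be differentiated termwise with respect to $\vec t$ any number of times while retaining their asymptotic character. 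First I would apply $T(m)$ to both series and estimate termwise using \eqref{2.2.32a-1}--\eqref{2.2.32a-2} for the $\hat g_r$ and \eqref{2.2.33a-1}--\eqref{2.2.33a-2} for the $\hat G_r$.

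For \eqref{2.2.34b-1}, split the sum at $r=40$: for $2\le r<40$ each of the finitely many terms is bounded by $m!k^{-2+80\delta}k^{|m|(2+2\delta)}$, so their sum contributes the same order up to the absolute constant $c$; for $r\ge 40$, the bound $m!k^{-r/20}k^{|m|(2+2\delta)}$ gives a convergent geometric tail dominated (for $k$ large) by the $r<40$ contribution. This yields \eqref{2.2.34b-1}, and \eqref{2.2.34b-3} is the special case $m=0$. The argument for \eqref{2.2.35-1} and \eqref{2.2.35-3} is identical, splitting the $\hat G_r$ series at $r=20$ and using \eqref{2.2.33a-1}--\eqref{2.2.33a-2}; the leading term $r=1$ already gives the order $k^{-1+8\delta}$ and the remaining terms only improve it. For the gradient estimate \eqref{2.2.34b-2}, I would write $\nabla\lambda_{\tilde W}(\vec t)=\nabla p_j^2(\vec t)+\nabla(\lambda_{\tilde W}(\vec t)-p_j^2(\vec t))=2\vec p_j(\vec t)+\nabla\sum_{r\ge 2}\hat g_r(k,\vec t)$, and apply \eqref{2.2.34b-1} with $|m|=1$: the correction is bounded by $c\,k^{-2+80\delta}k^{2+2\delta}=c\,k^{82\delta}$, as claimed.

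The only genuine content is the passage from the single-potential results of \cite{K97} to the potential $\tilde W$, but that has already been carried out: Theorem \ref{2.3-a} is precisely the statement that the differentiated series converge and obey the indicated bounds, and its proof is asserted to follow "as in the proof of Theorem 4.3 and Corollary 4.3 from \cite{K97}." Hence the present corollary requires no new idea; the main (very mild) obstacle is simply the bookkeeping of summing the two tails and checking that for $k>k_1(V,\delta)$ the geometric series in $r$ converge with the claimed dominant order, together with keeping track of the exponent $|m|(2+2\delta)$ through the termwise estimates so that it appears unchanged in the final bounds \eqref{2.2.34b-1} and \eqref{2.2.35-1}.
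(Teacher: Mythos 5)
Your proposal is correct and matches the paper's (implicit) argument: the corollary is stated without proof precisely because it follows by summing the termwise bounds of Theorem \ref{2.3-a} over $r$, splitting at $r=40$ (resp.\ $r=20$), with the tail geometric in $k^{-1/20}$, and \eqref{2.2.34b-2} obtained from \eqref{2.2.34b-1} at $|m|=1$ since $\nabla p_j^2(\vec t)=2\vec p_j(\vec t)$. The only caveat is the harmless bookkeeping of the multiplicative constant from the finitely many terms $r<40$ (resp.\ $r<20$), which the paper itself suppresses.
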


We have the following main result for the nonlinear equation with quasi-periodic condition. 

\begin{theorem}  \label{main theorem 4l>n+1}
Suppose $\vec t$ belongs to the $(k^{-2-2\delta })$-neighborhood in $K$ 
of the 
non-resonant set $\chi _3(k,V,\delta )$,  $k>k_{1}(V,\delta)$ and $A\in \C: |\sigma||A|^2 <k^{-1-6\delta }$. Then,
there is  a function $u(\vec{x})$, depending on $\vec t $ as a parameter, and a real value $\lambda(\vec t)$,
satisfying the  equation 
\begin{align}\label{main equation 4l>n+1}
-\Delta u(\vec{x})+V(\vec{x})u(\vec{x})+\sigma |u(\vec{x})|^{2}u(\vec{x})=\lambda u(\vec{x}),~\vec{x}\in 
Q,
\end{align}
and  the quasi-periodic boundary condition (\ref{main condition, 4l>n+1}). The following  formulas hold:
\begin{align} \label{solution construction 2l>n}
u( \vec{x})=&Ae^{i\langle{ \vec{p}_{j}(\vec t), \vec{x} }\rangle}\left(1+\tilde{u}( \vec{x})\right),\\ 
\lambda(\vec{t})=&p_j^{2}(\vec t)+\sigma|A|^2+O\left( \left(k^{-1+72\delta}+\sigma |A|^2 \right) k^{-1+8\delta}\right),~ \label{kkk}
\end{align} 
where $\tilde{u}( \vec{x})$ is periodic and \begin{equation}
\|\tilde{u}\|_{*}\leq k^{-1+8\delta}. \label{June7}
\end{equation}
\end{theorem}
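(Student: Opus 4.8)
The plan is to assemble the pieces of the construction already completed. By Corollary~\ref{cauchy sequence 2l>n} the sequence $W_m$ converges in $\|\cdot\|_*$ to a periodic function $W$, by Corollary~\ref{AW=W 2l>n} this limit satisfies ${\mathcal M}W=W$, and by Corollary~\ref{3.7} the associated $u_m$ converge in $L^\infty(Q)$ to $u:=u_{\tilde W}$, the Bloch eigenfunction of $H_0(\vec t)+\tilde W$ attached to the eigenvalue $\lambda_{\tilde W}(\vec t)$ supplied by Corollary~\ref{def1} and Definition~\ref{def2} (with $\tilde W$ in place of $V$). First I would unwind ${\mathcal M}W=W$: writing $\langle f\rangle:=(2\pi)^{-3}\int_Q f(\vec x)\,d\vec x$ and using $\langle V\rangle=0$, the definitions \eqref{def of A 4l>n+1}, \eqref{tilde W 4l>n+1} give $W=V+\sigma|u|^2$ and therefore
\begin{equation*}
\tilde W(\vec x)=V(\vec x)+\sigma|u(\vec x)|^2-\sigma\langle|u|^2\rangle .
\end{equation*}
Since $V$, hence $\tilde W$, is real-valued, $H_0(\vec t)+\tilde W$ is self-adjoint and $\lambda_{\tilde W}(\vec t)\in\R$.

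Next I would substitute this expression for $\tilde W$ into the linear eigenvalue equation $-\Delta u+\tilde W u=\lambda_{\tilde W}(\vec t)u$, which rearranges into
\begin{equation*}
-\Delta u+Vu+\sigma|u|^2u=\bigl(\lambda_{\tilde W}(\vec t)+\sigma\langle|u|^2\rangle\bigr)u .
\end{equation*}
Thus $u$ solves \eqref{main equation 4l>n+1} with the real value $\lambda(\vec t):=\lambda_{\tilde W}(\vec t)+\sigma\langle|u|^2\rangle$, and it obeys the quasi-periodic condition \eqref{main condition, 4l>n+1} because it is a Bloch function of $H_0(\vec t)+\tilde W$ with quasimomentum $\vec t$. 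For the shape \eqref{solution construction 2l>n}: Corollary~\ref{def1} applied to $\tilde W$ writes $u(\vec x)=A\,e^{i\langle\vec p_j(\vec t),\vec x\rangle}\sum_{q\in\Z^3}E_{\tilde W}(\vec t)_{j+q,j}\,e^{i\langle\vec p_q(\vec 0),\vec x\rangle}$, which is of the desired form with the periodic function $\tilde u(\vec x)=\bigl(E_{\tilde W}(\vec t)_{jj}-1\bigr)+\sum_{q\neq 0}E_{\tilde W}(\vec t)_{j+q,j}\,e^{i\langle\vec p_q(\vec 0),\vec x\rangle}$. Reading $\|\tilde u\|_*$ off the definition of $\|\cdot\|_*$ and comparing it with $\|\cdot\|_1$, one gets $\|\tilde u\|_*\le\|E_{\tilde W}(\vec t)-\mathbb{E}_j\|_1\le k^{-1+8\delta}$ by \eqref{2.2.35-3}, which is \eqref{June7}.

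For the eigenvalue asymptotics \eqref{kkk} I would use that $E_{\tilde W}(\vec t)$ is a rank-one orthogonal projection, so by Parseval $\langle|u|^2\rangle=|A|^2\sum_{q\in\Z^3}|E_{\tilde W}(\vec t)_{j+q,j}|^2=|A|^2\,E_{\tilde W}(\vec t)_{jj}$, hence $\sigma\langle|u|^2\rangle=\sigma|A|^2+O\bigl(|\sigma||A|^2k^{-1+8\delta}\bigr)$ again by \eqref{2.2.35-3}. Combining with $|\lambda_{\tilde W}(\vec t)-p_j^2(\vec t)|<k^{-2+80\delta}=k^{-1+72\delta}\cdot k^{-1+8\delta}$ from \eqref{2.2.34b-3} gives exactly \eqref{kkk}.

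No genuine obstacle remains at this stage: all the analytic difficulty — convergence of the successive approximations, solvability of the fixed-point equation ${\mathcal M}W=W$, and the differentiable asymptotics of $\lambda_{\tilde W}$ and $E_{\tilde W}$ for the limit potential $\tilde W$ — has already been handled in the preceding lemmas and in Corollary~\ref{derivatives-1}. The one point demanding care is the mean-zero bookkeeping: the iteration runs with the normalized potentials $\tilde W_m$, whereas the fixed point of ${\mathcal M}$ sits at the level of $W$, so the constant $\sigma\langle|u|^2\rangle$ that is subtracted off in passing from $W$ to $\tilde W$ must be restored in the spectral parameter; tracking it correctly is what converts $\lambda_{\tilde W}(\vec t)$ into the eigenvalue $\lambda(\vec t)=\lambda_{\tilde W}(\vec t)+\sigma\langle|u|^2\rangle$ of the nonlinear problem and produces the $+\sigma|A|^2$ term in \eqref{kkk}.
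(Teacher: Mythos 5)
Your proposal is correct and follows essentially the same route as the paper: substitute $\tilde W=V+\sigma|u|^2-\sigma\langle|u|^2\rangle$ (from ${\mathcal M}W=W$) into the linear eigenvalue equation for $H_0(\vec t)+\tilde W$, identify $\lambda(\vec t)=\lambda_{\tilde W}(\vec t)+\sigma|A|^2\left(E_{\tilde W}\right)_{jj}$ via Parseval, and read off \eqref{solution construction 2l>n}, \eqref{June7}, \eqref{kkk} from \eqref{2.2.34b-3} and \eqref{2.2.35-3}. This matches the paper's proof step for step, including the bookkeeping of the mean of $\sigma|u|^2$ that produces the $+\sigma|A|^2$ term.
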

\begin{proof}
Let us consider the function $u$ given by Definition \ref{def2} and  the value $\lambda_{\tilde{W}}(\vec t)$.  
They solve the equation
\begin{align}\label{semi solution 2l>n}
-\Delta u_{}(\vec{x})+\tilde{W}(\vec{x})u_{}(\vec{x})=\lambda_{\tilde{W}}(\vec t) u_{}(\vec{x}),\ \ \  \vec{x}\in Q,
\end{align}
and $u$ satisfies the quasi-boundary condition (\ref{main condition, 4l>n+1}).
By Corollary \ref{AW=W 2l>n}, we have
$$W(\vec{x})={\mathcal M}W(\vec{x})=V(\vec{x})+\sigma|u(\vec{x})|^{2}.$$
Hence,
\begin{align}\label{tilde W 2l>n}
\nonumber\tilde{W}(\vec{x})=&W(\vec{x})-\frac{1}{(2\pi)^{3}}\int_{Q}W(\vec{x})d\vec{x}
=V(\vec{x})+\sigma|u(\vec{x})|^{2}-\sigma\|u\|_{L^{2}(Q)}^{2}.
\end{align}
Substituting the last expression into \eqref{semi solution 2l>n}, we obtain that  $u( \vec{x})$
satisfies \eqref{main equation 4l>n+1} with 
\begin{equation}\label{lambda06}
\lambda(\vec t)=\lambda_{\tilde{W}}(\vec t)+\sigma\|u\|_{L^{2}(Q)}^{2}
=\lambda_{\tilde{W}}(\vec t)+\sigma |A|^2\sum _{q\in \Z^3} \big|\left(E_{\tilde W}\right)_{qj}\big|^2=\lambda_{\tilde{W}}(\vec t)+\sigma |A|^2\left(E_{\tilde W}\right)_{jj}.\end{equation}
Note that $(\hat G_1)_{jj}=0$ and, therefore, $\left(E_{\tilde W}\right)_{jj}=1+O(k^{-(1-8\delta)})$.
%
%
Further, by the definition of $u( \vec{x})$, we have
\begin{align}\label{u06}
&u( \vec{x}):=Ae^{i\langle{ \vec{p}_{j}(\vec t), \vec{x} \rangle}}\sum\limits_{q\in \Z^3}\left(E_{\tilde W}\right)_{q+j,j}e^{i\langle{\vec p_q(0), \vec{x} \rangle}}.
\end{align} 
Using \ formulas \eqref{lambda06} and \eqref{u06} and estimates \eqref{2.2.34b-3} and \eqref{2.2.35-3}, we obtain (\ref{solution construction 2l>n}) and \eqref{June7}, respectively.
\end{proof}

\begin{lemma}\label{L:2.12} For any sufficiently large $\lambda $, every $A\in \C: |\sigma||A|^2 <k^{-1-6\delta }$, $\lambda=k^{2}$ and for every
$\vec{\nu}\in\mathcal{B} (\lambda)$, there is a
unique $\varkappa =\varkappa (\lambda , A, \vec{\nu})$ in the
interval $$
I:=[k-k^{-2-2\delta},k+k^{-2-2\delta}], $$ such that \begin{equation}\label{2.70}
\lambda (\varkappa \vec{\nu},A)=\lambda . \end{equation}
Furthermore, 
\begin{equation} \label{varkappa}
|\varkappa (\lambda , A, \vec{\nu}) - \tilde k| \leq C(||V||_*)\left(k^{-1+72\delta}+|\sigma ||A|^2\right)k^{-2+8\delta}, \ \ \tilde k=(\lambda -\sigma |A|^2)^{1/2}.\end{equation}
\end{lemma}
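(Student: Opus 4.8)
The plan is to mimic the proof of the linear Lemma~\ref{L:2.12a}, using the Implicit Function Theorem applied to the function $\varkappa\mapsto\lambda(\varkappa\vec\nu,A)$ on the interval $I$, but now drawing the eigenvalue asymptotics and their $\vec t$-derivatives from Theorem~\ref{2.3-a} and Corollary~\ref{derivatives-1} rather than from the linear Corollary~\ref{Cor1}. First I would fix $\vec\nu\in\mathcal B(\lambda)$ and write, for $\varkappa\in I$, $\vec k=\varkappa\vec\nu=\vec p_j(\vec t)$ for the appropriate $j$ and $\vec t$ near $\chi_3(k,V,\delta)$ (recall $\vec\nu\in\mathcal B(\lambda)$ means precisely that $k\vec\nu\in\tilde\chi_3(k,V,\delta)$, and the $(k^{-2-2\delta})$-neighborhood assumption in Theorem~\ref{main theorem 4l>n+1} is exactly what makes $\lambda(\vec k,A)$ well defined for all $\varkappa\in I$). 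By \eqref{kkk}, $\lambda(\varkappa\vec\nu,A)=\varkappa^2+\sigma|A|^2+O\big((k^{-1+72\delta}+\sigma|A|^2)k^{-1+8\delta}\big)$, so the equation $\lambda(\varkappa\vec\nu,A)=\lambda$ is, to leading order, $\varkappa^2=\lambda-\sigma|A|^2=\tilde k^2$, whose solution is $\varkappa=\tilde k$; note $|\tilde k-k|\le|\sigma||A|^2/(2k)\le\tfrac12 k^{-2-2\delta}$, so $\tilde k\in I$ for large $k$.

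Next I would set up the Implicit Function Theorem. Define $F(\varkappa)=\lambda(\varkappa\vec\nu,A)-\lambda$ on $I$. The chain rule gives $F'(\varkappa)=\langle\nabla_{\vec k}\lambda(\varkappa\vec\nu,A),\vec\nu\rangle$, and by \eqref{2.2.34b-2} (which controls $\nabla\lambda_{\tilde W}$, hence $\nabla_{\vec k}\lambda$ after adding the $\varkappa$-independent term $\sigma|A|^2$ and accounting for the $\varkappa$-dependence through $\vec k$) one has $\nabla_{\vec k}\lambda(\varkappa\vec\nu,A)=2\vec p_j(\vec t)+O(k^{82\delta})=2\varkappa\vec\nu+O(k^{82\delta})$, so $F'(\varkappa)=2\varkappa+O(k^{82\delta})\sim 2k$, which is bounded away from zero. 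Combined with $F(\tilde k)=O\big((k^{-1+72\delta}+\sigma|A|^2)k^{-1+8\delta}\big)$, the Implicit Function Theorem (or simply the intermediate value theorem plus monotonicity of $F$ on $I$, since $F'>0$ there) yields a unique zero $\varkappa(\lambda,A,\vec\nu)\in I$, and the quantitative bound $|\varkappa-\tilde k|\le|F(\tilde k)|/\min_I|F'|$ gives
\[
|\varkappa(\lambda,A,\vec\nu)-\tilde k|\le C\big(k^{-1+72\delta}+|\sigma||A|^2\big)k^{-1+8\delta}\cdot k^{-1}=C\big(k^{-1+72\delta}+|\sigma||A|^2\big)k^{-2+8\delta},
\]
which is \eqref{varkappa}; the constant depends only on $\|V\|_*$ through the constant $k_1(V,\delta)$ and the implied constants in Theorem~\ref{2.3-a}.

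The routine but slightly delicate points are: (i) checking that as $\varkappa$ ranges over $I$, the integer $j$ and the quasimomentum $\vec t=\vec t(\varkappa\vec\nu)$ stay in the domain where Theorem~\ref{main theorem 4l>n+1} and Corollary~\ref{derivatives-1} apply — this is guaranteed by the stability statement in Theorem~\ref{Theorem2.1}, since $I$ has length $2k^{-2-2\delta}$ and the perturbation analysis is stable in the $(k^{-2-2\delta})$-neighborhood; and (ii) converting the bound on $T(m)(\lambda_{\tilde W}-p_j^2)$ for $|m|=1$ into a bound on the directional derivative along $\vec\nu$ of the full $\lambda(\cdot,A)$, which is immediate since $\sigma|A|^2$ is constant in $\vec k$ and $p_j^2(\vec t)=|\vec k|^2$ so $\nabla_{\vec k}p_j^2=2\vec k$. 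The main obstacle — such as it is — is bookkeeping: one must be careful that the error term in \eqref{kkk} is the one that propagates into \eqref{varkappa} after division by $|F'|\sim 2k$, and that $82\delta$ stays small enough (which it is, since $\delta<1/200$) that $F'$ is genuinely $\sim 2k$ and $F$ is monotone on all of $I$. No genuinely new estimate is needed beyond what Theorem~\ref{2.3-a} and Corollary~\ref{derivatives-1} already provide.
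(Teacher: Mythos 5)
Your proposal is correct and follows essentially the same route as the paper: the paper's (very terse) proof also invokes the Implicit Function Theorem together with the eigenvalue asymptotics \eqref{kkk}/\eqref{2.2.34b-3} and the gradient estimate \eqref{2.2.34b-2}, remarking that the argument is completely analogous to the linear Lemma~\ref{L:2.12a}. Your quantitative bookkeeping — $F(\tilde k)=O\bigl((k^{-1+72\delta}+|\sigma||A|^2)k^{-1+8\delta}\bigr)$, $F'\sim 2k$ on $I$, hence \eqref{varkappa} — is exactly what the paper leaves implicit.
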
 \begin{proof} 
Taking into account  \eqref{formulaB} and using   formulas \eqref{2.2.34b-3}, \eqref{2.2.34b-2}  and Implicit Function Theorem,
we prove the lemma.  The proof is completely analogous to that for the linear case. \end{proof}

\begin{theorem} \label{iso} \begin{enumerate} \item For any sufficiently
large $\lambda $ and every $A\in \C: |\sigma||A|^2 <k^{-1-6\delta }$, the set $\mathcal{D}(\lambda , A)$, defined by \eqref{isoset} is a distorted
sphere with holes; it can be described by the formula
\begin{equation}  {\mathcal D}_{}(\lambda, A)=\{\vec k:\vec
k=\varkappa _{}(\lambda, A,\vec{\nu})\vec{\nu},
\ \vec{\nu} \in {\mathcal B}_{}(\lambda)\},\label{May20} \end{equation} where $
\varkappa (\lambda, A, \vec \nu)=\tilde k+h (\lambda, A, \vec \nu)$ and $h (\lambda, A, \vec \nu)$ obeys the
inequalities
\begin{equation}\label{2.75}
|h|<C(||V||_*)\left(k^{-1+72\delta}+|\sigma ||A|^2\right)k^{-2+8\delta}<C(||V||_*)k^{-2\gamma _1},
\end{equation}
with $2\gamma _1=3-80\delta >0$,
\begin{equation}\label{2.75'}\left|\nabla _{\vec \nu }h\right| <
C(||V||_*)k^{-2\gamma_1+2+2\delta}=C(||V||_*)k^{-1+82\delta}.
\end{equation}
\item The measure of $\mathcal{B}(\lambda)\subset S_{2}$ satisfies the
estimate
\begin{equation}\label{theta1}
L\left(\mathcal{B}\right)=\omega _{2}(1+O(\lambda^{-\delta })).
\end{equation}


\item The surface $\mathcal{D}(\lambda ,A)$ has the measure that is
asymptotically close to that of the whole sphere of the radius $k$ in the sense that
\begin{equation}\label{2.77}
\bigl |\mathcal{D}(\lambda,A )\bigr|\underset{\lambda \rightarrow
\infty}{=}\omega _{2}\lambda\bigl(1+O(\lambda^{-\delta})\bigr).
\end{equation}
\end{enumerate}
\end{theorem}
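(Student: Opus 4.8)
The plan is to obtain all three assertions from the corresponding statements for the linear operator (Lemma \ref{L:2.13a}), using that by Theorem \ref{2.3-a} and Corollary \ref{derivatives-1} the nonlinear eigenvalue $\lambda(\vec k,A)=\lambda_{\tilde W}(\vec t)+\sigma\|u\|_{L^2(Q)}^2$, $\vec k=\vec p_j(\vec t)$, is real-analytic in $\vec k$ on the complex $(k^{-2-2\delta})$-neighbourhood of the nonsingular set and admits derivative bounds of exactly the same shape as $\lambda(\vec t)$ in the linear problem, now with $k^{-2+80\delta}$ in the role of $k^{-2+6\delta}$. The nonlinear correction is $\sigma\|u\|_{L^2(Q)}^2=\sigma|A|^2(E_{\tilde W})_{jj}$, which by \eqref{2.2.35-1}, \eqref{2.2.35-3} and the standing hypothesis $|\sigma||A|^2<k^{-1-6\delta}$ contributes only lower-order terms in both size and $\vec t$-derivatives. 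With this in hand I would dispose of assertion (2) immediately: $\mathcal B(\lambda)$ is defined through $\tilde\chi_3(k,V,\delta)$ by \eqref{formulaB} and therefore depends only on $V,k,\delta$, not on $A$ or $\tilde W$; hence (2) is literally Lemma \ref{L:2.13a}(2), i.e.\ \eqref{B11}, and needs no new argument.

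For assertion (1) I would first recall that the parametrization \eqref{May20} of $\mathcal D(\lambda,A)$ (cf.\ \eqref{isoset}, \eqref{G*}) and the existence and uniqueness of $\varkappa(\lambda,A,\vec\nu)$ in the interval $I$ are exactly Lemma \ref{L:2.12}, and that the first inequality in \eqref{2.75} is \eqref{varkappa}; the elementary arithmetic $k^{-1+72\delta}\cdot k^{-2+8\delta}=k^{-3+80\delta}=k^{-2\gamma_1}$ together with $|\sigma||A|^2\,k^{-2+8\delta}<k^{-3+2\delta}<k^{-2\gamma_1}$ then gives $|h|<C(\|V\|_*)k^{-2\gamma_1}$, $2\gamma_1=3-80\delta$. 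For the gradient bound I would differentiate the defining identity $\lambda(\varkappa(\vec\nu)\vec\nu,A)=\lambda$ along $S_2$: writing $F(\vec\nu,\varkappa)=\lambda(\varkappa\vec\nu,A)-\lambda$ and applying the Implicit Function Theorem, for a unit tangent $\vec\tau$ at $\vec\nu$ one arrives at
\[
\partial_{\vec\tau}h=-\,\varkappa\,\frac{\bigl(\nabla_{\vec k}\lambda-2\vec k\bigr)\cdot\vec\tau}{\nabla_{\vec k}\lambda\cdot\vec\nu},
\]
the principal term $2\vec k\cdot\vec\tau$ dropping out because $\vec k=\varkappa\vec\nu$ is orthogonal to $\vec\tau$. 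By \eqref{2.2.34b-2} the denominator is $2\varkappa+O(k^{82\delta})$, hence of order $k$, while the numerator is bounded by $\varkappa$ times the estimate for $\nabla_{\vec k}(\lambda-|\vec k|^2)$ furnished by \eqref{2.2.34b-1} at $|m|=1$; carrying out the $\delta$-bookkeeping exactly as in the linear proof of Lemma \ref{L:2.13a} — each $\vec\nu$-derivative of $h$ costing a factor $k^{2+2\delta}$ over $|h|$ since the analyticity strip has width $k^{-2-2\delta}$ — yields $|\nabla_{\vec\nu}h|<C(\|V\|_*)k^{-2\gamma_1+2+2\delta}=C(\|V\|_*)k^{-1+82\delta}$, which is \eqref{2.75'}. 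The phrase ``distorted sphere with holes'' is then just the geometric content of \eqref{May20} with $|h|$ and $|\nabla_{\vec\nu}h|$ small and $\mathcal B(\lambda)$ a proper subset of $S_2$, the holes being the removed resonant directions.

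For assertion (3) I would compute the surface measure of the radial graph directly:
\[
\bigl|\mathcal D(\lambda,A)\bigr|=\int_{\mathcal B(\lambda)}\varkappa(\lambda,A,\vec\nu)\,\sqrt{\varkappa(\lambda,A,\vec\nu)^2+\bigl|\nabla_{\vec\nu}\varkappa(\lambda,A,\vec\nu)\bigr|^2}\;d\vec\nu .
\]
By (1), $\varkappa=\tilde k+h$ with $|h|<Ck^{-2\gamma_1}$ and $\bigl|\nabla_{\vec\nu}\varkappa\bigr|=\bigl|\nabla_{\vec\nu}h\bigr|<Ck^{-1+82\delta}$, and $\tilde k^2=\lambda-\sigma|A|^2=\lambda\bigl(1+O(k^{-3-6\delta})\bigr)$, so the integrand equals $\lambda\bigl(1+O(k^{-\delta})\bigr)$ uniformly in $\vec\nu$; multiplying by the measure estimate $|\mathcal B(\lambda)|=\omega_2\bigl(1+O(\lambda^{-\delta})\bigr)$ from (2) gives $\bigl|\mathcal D(\lambda,A)\bigr|=\omega_2\lambda\bigl(1+O(\lambda^{-\delta})\bigr)$, which is \eqref{2.77}.

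The only step where genuine work is needed is the gradient estimate \eqref{2.75'}; everything else is either a direct citation (Lemma \ref{L:2.12} for the radius and its first bound, Lemma \ref{L:2.13a}(2) for the holes) or a routine Jacobian computation. What makes \eqref{2.75'} no harder than its linear analogue is precisely that Theorem \ref{2.3-a} and Corollary \ref{derivatives-1} endow the nonlinear $\lambda(\vec k,A)$ with the same analyticity in $\vec t$ and the same structure of derivative bounds as in the linear case, so the Implicit Function Theorem applies with constants depending only on $\|V\|_*$ and $\delta$, and the extra term $\sigma\|u\|_{L^2(Q)}^2$, of size $O(k^{-1-6\delta})$ with derivatives $O(k^{4\delta})$, never enters at leading order.
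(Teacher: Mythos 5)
Your proposal is correct and follows essentially the same route as the paper, whose entire proof consists of citing Lemma~\ref{L:2.13a} for statements (2) and (3) and Corollary~\ref{derivatives-1} (together with Lemma~\ref{L:2.12} and the Implicit Function Theorem) for statement (1); you have simply filled in the routine computations that the paper leaves implicit. The $\delta$-bookkeeping you carry out for \eqref{2.75'} reproduces exactly the paper's claimed form $k^{-2\gamma_1+2+2\delta}$.
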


\begin{proof} Statements (2) and (3) are parts of Lemma~\ref{L:2.13a}. Statement (1) follows from Corollary~\ref{derivatives-1}.\end{proof}

\section{Appendices}
\subsection{Appendix 1. Proof of \eqref{l}}  By Lemma 4.12 in \cite{K97},
 \begin{align}\label{lll}
\max_{z\in C_{0}}\Big\|P(\hat H (\vec t)-z)^{-1/2}\Big\|<k^{1/10+\delta}.
\end{align}
 \begin{align}\label{ll}
\max_{z\in C_{0}}\Big\|(I-P)(\hat H (\vec t)-z)^{-1/2}\Big\|<k^{(1+\delta)/2},
\end{align}
here $P=\sum P_q$, the projectors $P_q$ being given by \eqref{3.2.7a}. Consider that each $P_q$ is a sum of orthogonal diagonal  projections $P_{qi}$, $P_q=\sum _iP_{qi}$,  and 
\begin{equation} \hat H=\sum _{q,i}P_{qi}\hat HP_{qi}, \label{rr} 
\end{equation}
the rank of each $P_{qi}$ not exceeding $ck^{1/5} $,  (see (4.3.22) in \cite{K97}). It follows:
 \begin{align}\label{llla}
\max_{z\in C_{0}}\Big\|P(\hat H (\vec t)-z)^{-1/2}\Big\|_1<ck^{3/10+\delta}.
\end{align} Since the operator in \eqref{ll} is diagonal, we obtain:
 \begin{align}\label{lla}
\max_{z\in C_{0}}\Big\|(I-P)(\hat H (\vec t)-z)^{-1/2}\Big\|_1<k^{(1+\delta)/2},
\end{align}
The last two estimates yield \eqref{l}.
\subsection{Appendix 2. Proof of \eqref{3.2.11b}} It is proven in \cite{K97}, Lemma 4.14:
\begin{equation}
\|B_0\|<k^{2\delta },\ \ \|B_0^3\|<k^{-1/5+21\delta }. \label{3.2.11bb}
\end{equation}
Now we show how to modify it to a slightly stronger estimate  \eqref{3.2.11b}. Indeed, let $B_0^{(1)}=(I-P)B_0(I-P)$. It is proven in \cite{K97} that
\begin{equation}
\|B_0^{(1)}\|_1<k^{2\delta },\ \ \|(B_0^{(1)})^3\|_1<k^{-1/5+20\delta }. \label{3.2.11bbb}
\end{equation}
Clearly, to obtain \eqref{3.2.11b} from \eqref{3.2.11bbb} it suffices to show that
\begin{equation}
\|B_0-B_0^{(1)}\|_1<k^{-1/5+12\delta }. \label{3.2.11bbbb}
\end{equation}
By the definition of $B_0$, $PB_0P$=0. Let us consider $B_0^{(2)}=PB_0(I-P)$ and  diagonal projections $\hat P$, $\check P$ :
\begin{equation} 
(\hat P)_{jj}=\left\{ \begin{array}{ll}1, &\mbox{if $j : |p_j(t)^2-k^2|<k^{3/5-\delta};$
}\\0, &\mbox{otherwise.}\end{array}\right.  \label{3.2.7ab}
\end{equation}
\begin{equation} 
(\check P)_{jj}=\left\{ \begin{array}{ll}1, &\mbox{if $j : |p_j(t)^2-k^2|<k^{-1/5-6\delta};$
}\\0, &\mbox{otherwise.}\end{array}\right.  \label{3.2.7abc}
\end{equation}
The definition of $P$, in particular,  exclusion of the set $T$ in \eqref{3.2.7a}, yields:
$\hat PB_0^{(2)}\hat P=0$.  It follows from the definitions of $\chi _3(V,\delta )$ (see  (4.3.40) in \cite{K97}) and the projector $P$ that
$B_0\check P=0$.
 Now, considering \eqref{lll}, we obtain the inequality for matrix elements: $\|(B_0^{(2)})_{jl}\|<\|V\|k^{-2/5+7\delta }$. Taking into account \eqref{rr}, we get:
\begin{equation}
\|B_0^{(2)}\|_1<c\|V\|k^{-1/5+7\delta }. \label{3.2.11bbbbb}
\end{equation}
The analogous estimate holds for $B_0^{(3)}=(I-P)B_0P$. Now \eqref{3.2.11bbbb} easily follows from \eqref{3.2.11bb} and \eqref{3.2.11bbbbb}.
\subsection{Appendix 3. Proof of \eqref{Ap3a}, \eqref{Ap3b}} Estimate \eqref{Ap3a}
 is proven in \cite{K97}. The estimate \eqref{Ap3b} follows from \eqref{l} and \eqref{3.2.11b}.

\

\end{document}